\documentclass[aps,showpacs,amsmath,amssymb,twocolumn,pra,superscriptaddress,notitlepage]{revtex4-2}

\usepackage{qcircuit}
\usepackage{amsmath,bm}
\usepackage{graphicx}
\usepackage{amsmath,amssymb,amsthm,mathrsfs,amsfonts,dsfont}
\usepackage{subfigure, epsfig}
\usepackage{braket}
\usepackage{bm}
\usepackage{enumerate}
\usepackage{color}
\usepackage{graphicx}
\usepackage{algorithm}
\usepackage{algorithmic}
\usepackage{braket}
\usepackage{comment}
\usepackage{appendix}
\usepackage{here}
\usepackage{tabularx}
\usepackage{physics}
\usepackage{mathtools}
\usepackage{hyperref}

\setlength {\marginparwidth }{2cm}

\theoremstyle{plain}
\newtheorem{thm}{Theorem}
\newtheorem{lemma}{Lemma}
\newtheorem*{thm*}{Theorem}
\newtheorem*{lemma*}{Lemma}
\newtheorem{cor}[thm]{Corollary}

\hypersetup{
           breaklinks=true,   
           colorlinks=true,   
           citecolor=blue, 
           urlcolor=blue
           }

\newcommand{\purple}[1]{\textcolor{black}{#1}}
\begin{document}


\title{Trade-offs between Quantum and Classical Resources in \\
the Linear Combination of Unitaries}


\author{Kaito Wada}
\email{wkai1013keio840@keio.jp}
\affiliation{Graduate School of Science and Technology, Keio University, 3-14-1 Hiyoshi, Kohoku, Yokohama, Kanagawa 223-8522, Japan}

\author{Hiroyuki Harada}
\affiliation{Graduate School of Science and Technology, Keio University, 3-14-1 Hiyoshi, Kohoku, Yokohama, Kanagawa 223-8522, Japan}

\author{Yasunari Suzuki}
\affiliation{NTT Computer and Data Science Laboratories, NTT Inc., Musashino 180-8585, Japan}

\author{Yuuki Tokunaga}
\affiliation{NTT Computer and Data Science Laboratories, NTT Inc., Musashino 180-8585, Japan}

\author{Naoki Yamamoto}
\affiliation{Department of Applied Physics and Physico-Informatics, Keio University, Hiyoshi 3-14-1, Kohoku, Yokohama 223-8522, Japan}

\author{Suguru Endo}
\email{suguru.endou@ntt.com}
\affiliation{NTT Computer and Data Science Laboratories, NTT Inc., Musashino 180-8585, Japan}
\affiliation{NTT Research Center for Theoretical Quantum Information, NTT Inc. 3-1 Morinosato Wakanomiya, Atsugi, Kanagawa, 243-0198, Japan}


\begin{abstract}
The randomized linear combination of unitaries (LCU) method with many applications to early fault-tolerant quantum computing algorithms has been proposed. 
This quantum algorithm computes the same expectation values as the original, fully coherent LCU algorithm using a shallower quantum circuit with a single ancilla qubit, at the cost of a quadratically larger sampling overhead. 
In this work, we propose a quantum algorithm intermediate between the original and randomized LCU that manages the trade-off between the sampling overhead and circuit complexity.
Our algorithm divides the set of unitary operators into several groups and then randomly samples LCU circuits from these groups to evaluate the target expectation value. 
Notably, we reveal that across all grouping strategies, the mechanism of the sampling overhead reduction can be solely characterized by a metric we call \textit{the reduction factor}.
Moreover, we analytically prove an underlying monotonicity of the reduction factor in the group size: larger group sizes entail smaller sampling overhead.
Finally, our framework enables a more flexible algorithmic design by systematically yielding intermediate implementations of LCU-based algorithms; we provide intermediate implementations of non-Hermitian dynamics simulation, ground-state property estimation, and quantum error detection. 
Besides, we demonstrate this principle by deriving intermediate trade-off scaling in sample complexity and ancillary space for quantum linear system solver.
\end{abstract}

\maketitle
\section{Introduction}

Quantum computers will play a crucial role in a broad range of tasks, from, e.g., the simulation of quantum systems such as chemical~\cite{aspuru2005simulated,cao2019quantum,mcardle2020quantum} and condensed matter systems~\cite{bauer2016hybrid,yoshioka2024hunting,vorwerk2022quantum} to machine learning~\cite{biamonte2017quantum,cerezo2022challenges,schuld2015introduction,mitarai2018quantum}. Many quantum algorithms have been proposed to simulate the real and imaginary time evolution~\cite{lloyd1996universal,Low2019hamiltonian,berry2015hamiltonian,motta2020determining}, general non-Hermitian dynamics~\cite{an2023linear,an2023quantum,endo2020variational}, obtain the spectrum of the Hamiltonian~\cite{kitaev1995quantum,mcclean2017hybrid,zeng2021universal}, and perform machine learning tasks~\cite{childs2017quantum,wang2024qubit,biamonte2017quantum,cerezo2022challenges}. Due to the emergence of large-scale quantum computers and early error-corrected quantum devices~\cite{kim2023evidence,google2025quantum}, many quantum algorithms tailored for the early fault-tolerant era~\cite{PRXQuantum.5.020101,campbell2021early,suzuki2022quantum} have also been proposed to address various problems of interest in addition to real-time simulation~\cite{Chakraborty2024implementingany,wang2024qubit,zeng2021universal,zeng2022simple,wan2022randomized,kato2024exponentially,huo2023error,faehrmann2022randomizing,yang2021accelerated}. 
Such algorithms \purple{involve coherent techniques for quantum speedups such as Hamiltonian simulation, assuming early fault tolerance of quantum devices.}

The typically employed operations in early fault-tolerant quantum computing (FTQC) algorithms, e.g., random sampling of controlled unitary operations and post-processing of measurement outcomes, are for realizing the linear combination of unitary operators~\cite{Chakraborty2024implementingany,wang2024qubit}. By appropriately choosing the sampling distribution and unitary operations, this algorithm performs various tasks such as Hamiltonian simulation~\cite{wan2022randomized,zeng2022simple}, quantum principal component analysis~\cite{wada2025state}, ground state property estimation~\cite{lin2022heisenberg,zeng2021universal,zhang2022computing,sun2024high}, quantum linear system solver~\cite{wang2024qubit,Chakraborty2024implementingany}, differential equation~\cite{an2023linear,an2023quantum}, and even error detection~\cite{tsubouchi2023virtual}.
For example, the random sampling of quantum circuits with the probability $p_i p_j$ with $\{p_i \}$ being the probability distribution in Fig.~\ref{fig:lcu_circuit} (b) is leveraged to compute the expectation value for the target state $K \rho_{\rm in} K^\dag/ \mathrm{tr}[K \rho_{\rm in} K^\dag ]$ with $K\propto \sum_{i=1}^m p_i U_i$ and the input state $\rho_{\rm in}$. The advantage of this method is that the depth of the quantum algorithm does not depend on the number of terms $m$, but incurs the unavoidable sampling overhead $\mathcal{O}(\mathcal{P}^{-2})$ for the projection probability $\mathcal{P}\propto \mathrm{tr}[K \rho_{\rm in} K^\dag ]$ due to delegation to classical processing.

Meanwhile, the conventional linear combination of unitaries (LCU) algorithm~\cite{childs2012hamiltonian,berry2015hamiltonian} is a fully quantum model for realizing the linear combination operator $K$; see Fig.~\ref{fig:lcu_circuit}(a). 
Although the LCU algorithm requires $\lceil \mathrm{log}_2 m \rceil$-qubit ancilla states and complicated controlled operations involving multiple-control qubits, the success probability $\mathcal{P}$ of obtaining the target state gives its sampling overhead {$\mathcal{O}(\mathcal{P}^{-1})$}, which is quadratically better than the early-FTQC implementation.

In this work, we propose a hybrid framework that balances 
the classical and quantum resources, i.e.,
the sample complexity and the circuit complexity of the early FTQC algorithms with the LCU operations. 
Crucially, for any implementation strategies within this framework, the sample complexity can be characterized by a universal metric we call \textit{the reduction factor} $R$.
In our hybrid framework, we first partition the set of unitary operators $\{U_i\}_{i=1}^m$ into subgroups. Then, we realize the linear combination of unitary operators in the subgroup via the conventional LCU algorithm. By performing the random LCU implementations for these operators, we can estimate the expectation values for the target state. We reveal analytically that the reduction in the sampling overhead enabled by the partial use of the coherent LCU is characterized by 
$R$, which gives the sampling overhead of the quantum algorithm $\mathcal{O}(R \mathcal{P}^{-2})$. 
The reduction factor $R$ increases monotonically as the set of unitary operators is divided into finer partitions, and continuously bridges the full coherent LCU $R=\mathcal{P}$ and the random LCU  $R=1$.




We then demonstrate that the hybrid framework provides a new principle for the design of early FTQC algorithms. 
Rather than merely illustrating the framework, we provide new intermediate analytical scalings for the quantum linear system solver~\cite{childs2017quantum,wang2024qubit,Chakraborty2024implementingany}, and new hybrid protocols with intermediate implementations for non-Hermitian dynamics simulation~\cite{an2023linear}, ground-state property estimation~\cite{ge2019faster,zeng2021universal}, and quantum error detection~\cite{tsubouchi2023virtual,endo2024projective,endo2025quantum,anai2024unitary}.
In all cases, these results arise systematically from the reduction factor analysis for application-specific groupings in our framework.

First, while the linear combination of Hamiltonian simulation (LCHS) realizes the non-Hermitian dynamics via a generalized Fourier transform of time-evolution operators with some appropriate filter function~\cite{an2023linear,an2023quantum}, we show that applying the random LCU at the tail of the filter function realizes $256$-fold depth and $7$-ancilla qubit reduction with the sampling overhead sustained.
Second, we analytically show that the appropriate grouping of unitary operators in the quantum linear system solver~\cite{childs2017quantum} realizes the intermediate scaling of the sampling overheads and the number of ancilla qubits with respect to the condition number $\kappa$ and the required accuracy $\varepsilon$.

Third, for the virtual ground-state preparation, we propose a new algorithm only necessitating repetitive use of one ancilla qubit and a controlled time evolution operator, which inherits the better scaling of virtual~\cite{zeng2021universal} and full coherent~\cite{ge2019faster} LCU implementation in a certain asymptotic regime.
Finally, because the quantum error detection can be considered as a special case of LCU, i.e., the projector to the code space is expanded as a linear combination of stabilizer unitary operators, we can use our hybrid LCU method. 
While the virtual quantum error detection (VQED)~\cite{tsubouchi2023virtual} can be performed with a constant depth but incurs the quadratically worse sampling cost, we show that applying VQED to errors that are rarely detected contributes negligibly to the sampling cost. 
Thus, it is preferable to employ the conventional quantum error detection for frequent errors, while VQED should be employed for low-rate errors.

\label{section: introduction}

\section{Problem setup and linear combination of unitaries (LCU)}\label{sec:pset_prevLCU}
In this work, we consider a problem to estimate the expectation value of an observable ${O}$ regarding a quantum state transformed by a completely positive map (CP map) ${\Lambda}(\bullet)$.
In particular, we focus on a CP map ${\Lambda}$ with a single linear operator $K$ acting on a target system ${\rm S}$, which can be written as
\begin{equation}\label{eq:target_cpmap}
    {\Lambda}(\bullet) = K\bullet K^\dagger.
\end{equation}
Then, our goal is to estimate the following value within an additive error $\varepsilon$ and with a high probability:
\begin{equation}\label{eq:target_ratio_est}
    \frac{{\rm tr}[{O}{\Lambda}(\rho)]}{{\rm tr}[{\Lambda}(\rho)]}=\frac{{\rm tr}[{O}K\rho K^\dagger]}{{\rm tr}[K\rho K^\dagger]},
\end{equation}
where $\rho$ is an input quantum state on ${\rm S}$.
Also, we focus on the estimation of the numerator 
\begin{equation}\label{eq:target_nume_est}
    {\rm tr}[{O}{\Lambda}(\rho)]={\rm tr}[{O}K\rho K^\dagger].
\end{equation}

The problem contains various quantum algorithms, especially tailored for early FTQC algorithms~\cite{PRXQuantum.5.020101,Chakraborty2024implementingany}.
The class of CP maps in Eq.~\eqref{eq:target_cpmap} can be simulated by the method of LCU~\cite{childs2012hamiltonian,berry2015hamiltonian} with the help of classical post-processing; this technique is a crucial component for the recent quantum algorithms. 
In the following, we review the LCU method and its variant with the clarification of the number of measurements required for {solving} our problem.

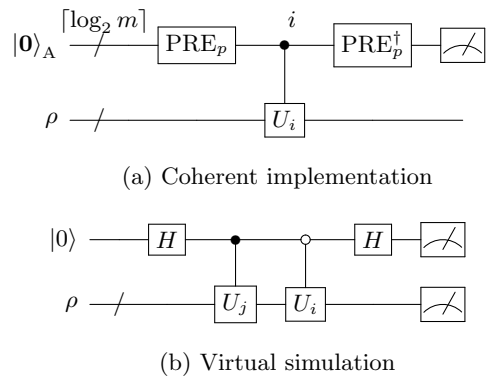
\begin{figure}[bt]
\centering
\begin{tabular}{c}
\\
\Qcircuit @C=1.2em @R=1.5em {
  \lstick{\ket{\bm{0}}_{\rm A}}&{/} \qw &\ustick{~~~~~~~~~~~~\lceil {\log_2 m\rceil}~~~~~~~~~~~~~~~~~i}\qw  & \gate{{\rm PRE}_{p}} & \ctrl{1} & \gate{{\rm PRE}_{p}^\dagger} &\meter \\
  \lstick{\rho}& {/} \qw & \qw & \qw & \gate{U_i} & \qw & \qw \\
  }
\\
\\
(a) Coherent implementation
\\
\\
\Qcircuit @C=1.2em @R=1.2em {
  \lstick{\ket{{0}}} & \qw & \gate{H} &\ctrl{1} & \ctrlo{1}&\gate{H}&\meter\\
  \lstick{\rho}& {/} \qw & \qw & \gate{U_j}&\gate{U_i} & \qw &\meter\\
  }
\\
\\
(b) Virtual simulation
\end{tabular}
\caption{Linear combination of unitaries (LCU) methods. The target operation is given by $K_{\rm LCU}=\sum_{i=1}^m p_i U_i$ for a probability distribution $\{p_i\}$ and unitary operators $\{U_i\}$.
The unitary ${\rm PRE}_p$ in (a) encodes $\{p_i\}$, and $K_{\rm LCU}\rho K^\dagger_{\rm LCU}$ is realized (up to normalization) when the final measurement after ${\rm PRE}^\dagger_p$ is all zero.
In (b), the indexes $i,j$ are independently sampled from the probability distribution $\{p_i\}$. By measuring $X$ in the top single qubit, we can recover the property of $K_{\rm LCU}\rho K_{\rm LCU}^\dagger$ without explicitly preparing it.}
\label{fig:lcu_circuit}
\end{figure}

\subsection{Coherent implementation}\label{sec:coherentlcu}

Let us consider a unitary decomposition of a linear operator $K$ given by
\begin{equation}\label{eq:lcu_before_normalize}
    K=\sum_{i=1}^m c_iU_i
\end{equation} 
for some non-negative coefficients 
$c:=\{c_i\}_{i=1}^m$ and unitary operators $\mathcal{U}:=\{U_i\}_{i=1}^m$.
Here, $m$ denotes the number of unitary operators.
By the singular value decomposition, we can always obtain a decomposition in Eq.~\eqref{eq:lcu_before_normalize} for known $K$~\cite{cui2012optimal}, while this procedure is computationally hard in large-size system.
\purple{Still, 
various problem-specific structures naturally suggest an efficient way to obtain such a decomposition of a target $K$; see Section~\ref{sec:application} or Ref.~\cite{Chakraborty2024implementingany}.}
We remark that when the coefficient $c_i$ takes a complex value, we can incorporate its phase into the unitary $U_i$. Thus, we can assume $c_i$ is non-negative without loss of generality.

To simulate the CP map $K\bullet K^\dagger$, the LCU method first implements the normalized version of Eq.~\eqref{eq:lcu_before_normalize}, 
that is, the CP map with the following operator $K_{\rm LCU}$:
\begin{equation}\label{eq:action_of_LCU}
    K_{\rm LCU}
    =
    \sum_{i=1}^m p_iU_i,~~~p_i:=\frac{c_i}{\|c\|_1}\geq 0,
\end{equation}
where $\|c\|_1$ denotes $L^1$ norm of the coefficient $c$ and $p:=\{p_i\}_{i=1}^m$ is a probability distribution satisfying $\sum_{i=1}^{m} p_i=1$.
For simplicity, we define $\tilde{\Lambda}_{p,\mathcal{U}}$ as the corresponding CP map 
\begin{align}\label{eq:CPmapByLCU}
    \tilde{\Lambda}_{p,\mathcal{U}}(\bullet)&:=K_{\rm LCU}(\bullet)K_{\rm LCU}^\dagger\notag\\
    &=\left(\sum_{i=1}^m p_i U_i\right)\bullet\left(\sum_{i=1}^m p_i U_i\right)^\dagger.
\end{align}
We remark that $\tilde{\Lambda}_{p,\mathcal{U}}$ recovers the CP map $K\bullet K^\dagger$ by multiplying $\|c\|^2_1$ as
\begin{equation}\label{eq:KandctildeLam}
    K\bullet K^\dagger = \left(\|c\|_1^2 \tilde{\Lambda}_{p,\mathcal{U}}\right)(\bullet).
\end{equation}


Hereafter, we focus on the coherent implementation of $\tilde{\Lambda}_{p,\mathcal{U}}$~\cite{childs2012hamiltonian,Low2019hamiltonian}.
To realize $\tilde{\Lambda}_{p,\mathcal{U}}$, two unitary gates, called {PREPARE} and {SELECT}, are usually employed.
The PREPARE operation is used to encode the coefficient $p$ and represented by a quantum circuit ${\rm PRE}_{p}$ that acts on an additionally introduced $\lceil {\log_2 m\rceil}$-qubit ancilla system ${\rm A}$ as
\begin{equation}
    {\rm PRE}_{p}:|\bm{0}\rangle \mapsto \sum_{i=1}^{m} \sqrt{p_i}|i\rangle,
\end{equation}
where $|\bm{0}\rangle$ and $|i\rangle$ denote an initial state and the computational basis in the ancilla system, respectively.
The circuit construction for the PREPARE operation requires $\mathcal{O}(m)$ elementary gates~\cite{Low2019hamiltonian}.
On the other hand, the SELECT operation acts on the whole system ${\rm A+S}$ between the ancilla and target qubits to encode the unitary operations $\mathcal{U}$ as follows.
\begin{equation}
    {\rm SEL}_{\mathcal{U}}=\sum_{i=1}^m|i\rangle\langle i|\otimes U_i.
\end{equation}
If we have access to each $U_i$ controlled by a single qubit, this SELECT operation can be constructed by $\mathcal{O}(m\log m)$ elementary gates (and a single-ancilla qubit), beyond these controlled-$U_i$ gates~\cite{zindorf2025efficient}. 

Now, we define a quantum circuit $L_{p,\mathcal{U}}$:
\begin{equation}\label{eq:LCUop}
    L_{p,\mathcal{U}}:=({\rm PRE}_p^\dagger\otimes \bm{1})\cdot {\rm SEL}_{\mathcal{U}}\cdot({\rm PRE}_p\otimes \bm{1}),
\end{equation}
where $\bm{1}$ denotes the identity operator.
Then, $K_{\rm LCU}$ can be implemented by $L_{p,\mathcal{U}}$ via the following relation: for any $\ket{\psi}$,
\begin{equation}\label{eq:LCUlemma}
    L_{p,\mathcal{U}}|\bm{0}\rangle|\psi\rangle=
    \ket{\bm{0}}\left(\sum_{i=1}^m p_iU_i\right)|\psi\rangle+\ket{\perp_{\psi}},
\end{equation}
where the unnormalized state $\ket{\perp_\psi}$ depends on $\ket{\psi}$ and is orthogonal to $|\bm{0}\rangle$.
This relation~\eqref{eq:LCUlemma}, which is known as the LCU lemma~\cite{berry2015hamiltonian}, indicates that we can realize the CP map in Eq.~\eqref{eq:CPmapByLCU} by using the system-ancilla interaction $L_{p,\mathcal{U}}$ followed by classical post-processing.
Using the circuit $L_{p,\mathcal{U}}$, we have 
\begin{equation}\label{eq:physical_pic_lcumap}
    \tilde{\Lambda}_{p,\mathcal{U}}(\rho) = {\rm tr}_{\rm A}\left[\Pi_{\rm A}\otimes \bm{1}\cdot L_{p,\mathcal{U}} (\Pi_{\rm A}\otimes \rho) L_{p,\mathcal{U}}^\dagger\right],
\end{equation}
where ${\rm tr}_{\rm A}$ denotes the partial trace over the ancilla system and $\Pi_{\rm A}:=\ket{\bm{0}}\bra{\bm{0}}$; see Fig.~\ref{fig:lcu_circuit}(a).
Therefore, we can effectively simulate the CP map $K\bullet K^\dagger$ with help of classical post-processing as
\begin{align}\label{eq:estimation_method_coherentlcu}
    {\rm tr}[OK\rho K^\dagger]
    &=\|c\|_1^2{\rm tr}\left[\Pi_{\rm A}\otimes O\cdot L_{p,\mathcal{U}} (\Pi_{\rm A}\otimes \rho) L_{p,\mathcal{U}}^\dagger\right],
\end{align}
for any observable $O$.
This provides an estimation scheme for the target values Eqs.~\eqref{eq:target_ratio_est} and~\eqref{eq:target_nume_est}.
We note that when we post-select $\Pi_{\rm A}$ in the ancilla system of Fig.~\ref{fig:lcu_circuit}(a), we can obtain the quantum state $\tilde{\Lambda}_{p,\mathcal{U}}(\rho)/{\rm tr}[\tilde{\Lambda}_{p,\mathcal{U}}(\rho)]$.

In terms of the measurement process, the success probability $\mathcal{P}$ for implementing $\tilde{\Lambda}_{p,\mathcal{U}}$ is given by 
\begin{equation}\label{eq:cLCU_successprob}
    \mathcal{P}=\left\|K_{\rm LCU}\ket{\psi}\right\|^2=\left\|\sum_i p_iU_i\ket{\psi}\right\|^2
\end{equation}
for any input state $\ket{\psi}$.
The success probability $\mathcal{P}$ can also be written as $\mathcal{P}={\rm tr}[{\tilde{\Lambda}_{p,\mathcal{U}}(\rho)}]$ for any input state $\rho$.
Importantly, the probability can be less than one because the operator $K_{\rm LCU}$ is not necessarily unitary.
In the following, we show that this probability affects the number of measurements required to estimate the target quantities.


A basic scheme to estimate the target value Eq.~\eqref{eq:target_ratio_est} is as follows~\cite{Chakraborty2024implementingany,wang2024qubit}.
Let $\varepsilon$ be an additive error, and let $\delta$ be a failure probability.
If we can perform the projective measurements on the target observable $O$ to $N'$ copies of $\tilde{\Lambda}_{p,\mathcal{U}}(\rho)/\mathcal{P}$, then the arithmetic mean of the measurement results gives an estimate.
\purple{By the Hoeffding's inequality, this estimate has the $\varepsilon$ error with probability at least $1-\delta$, given that $N'=\mathcal{O}(\|O\|^2\log(1/\delta)/\varepsilon^2)$.}
However, to prepare $N'$ copies of the state via the LCU method, we need $N=N'/\mathcal{P}$ copies of the input state $\rho$ in average.
Thus, using the LCU method to estimate Eq.~\eqref{eq:target_ratio_est}, the required number of measurements is given by
\begin{equation}\label{eq:sample_complexity_coherentLCU}
    \mathcal{O}\left(\frac{1}{\mathcal{P}}\times\frac{\|O\|^2\log(1/\delta)}{\varepsilon^2}\right).
\end{equation}
Also, we can estimate Eq.~\eqref{eq:target_ratio_est} by first separably measuring the numerator and denominator via Eq.~\eqref{eq:estimation_method_coherentlcu} and then calculating their ratio.
In this approach, we will prove the required number of measurements becomes the same scaling as Eq.~\eqref{eq:sample_complexity_coherentLCU} in the next section.

Finally, we here summarize the resources required for the LCU method with coherent implementation: for each circuit, $\mathcal{O}(m\log m)$ gates and $\mathcal{O}( \log m)$ ancilla qubits are required, beyond $U_i$ gates controlled by a single qubit.
As for the number of measurements, there is a linear dependence of $1/\mathcal{P}$ as in Eq.~\eqref{eq:sample_complexity_coherentLCU}, to estimate the target quantity~\eqref{eq:target_ratio_est}.
For the target quantity~\eqref{eq:target_nume_est}, please see the next section~\ref{sec:main_exp_val_est}.


\subsection{Virtual simulation}\label{sec:virtuallcu}

The original LCU method in the previous subsection can prepare the quantum state $\tilde{\Lambda}_{p,\mathcal{U}}(\rho)/\mathcal{P}$ in a coherent manner, but in estimating the target values, this direct state preparation is not necessarily needed.
In fact, the following \textit{virtual} LCU method~\cite{Chakraborty2024implementingany,wang2024qubit} can solve our problem with a reduction of the quantum resources in the original LCU, \purple{which uses} $\mathcal{O}(\log m)$ ancilla qubits and $\mathcal{O}(m\log m)$ elementary gates.

In the virtual LCU method, we first expand the sum of the CP map $\tilde{\Lambda}_{p,\mathcal{U}}$ as
\begin{equation}\label{eq:lcu_virtual}
    {\rm tr}\left[O\tilde{\Lambda}_{p,\mathcal{U}}(\rho)\right]=\sum_{i,j=1}^m p_ip_j{\rm Re}\left({\rm tr}\left[OU_i\rho U_j^\dagger\right]\right).
\end{equation}
This equality suggests that the value ${\rm tr}[O\tilde{\Lambda}_{p,\mathcal{U}}(\rho)]$ can be estimated via Monte-Carlo sampling combined with the Hadamard test, which requires only a constant number of ancilla qubits and elementary gates; see Fig.~\ref{fig:lcu_circuit}(b).
Specifically, we first randomly choose $U_i$ and $U_j$ with the probability distribution $p_ip_j$.
Then, we perform the projective measurement for $X\otimes O$ on the output state of Fig.~\ref{fig:lcu_circuit}(b) and obtain an output $\pm o_k$ for the eigenvalue $o_k$ of $O$.
By repeating this procedure several times independently and taking the arithmetic mean of these outputs, 
we can obtain an estimate of ${\rm tr}[O\tilde{\Lambda}_{p,\mathcal{U}}(\rho)]$.
Furthermore, using the estimates for ${\rm tr}[O\tilde{\Lambda}_{p,\mathcal{U}}(\rho)]$ and ${\rm tr}[\tilde{\Lambda}_{p,\mathcal{U}}(\rho)]$ (obtained by setting $O=\bm{1}$), we can estimate Eq.~\eqref{eq:target_ratio_est} by the ratio of these estimates.

In the final step, the division by $\mathcal{P}={\rm tr}[\tilde{\Lambda}_{p,\mathcal{U}}(\rho)]$ amplifies the error. 
Thus, for the value Eq.~\eqref{eq:target_ratio_est}, the required number of measurements for the procedure is given by 
\begin{equation}\label{eq:sample_complexity_virtualLCU}
    \mathcal{O}\left(\frac{1}{\mathcal{P}^2}\times \frac{\|O\|^2\log(1/\delta)}{\varepsilon^2 }\right),
\end{equation}
which is quadratically worse than the sample complexity Eq.~\eqref{eq:sample_complexity_coherentLCU} of the original LCU method with respect to the success probability $\mathcal{P}$.
Although this result is already proved in e.g., Ref.~\cite{Chakraborty2024implementingany}, our general results in the next section also contain the proof of Eq.~\eqref{eq:sample_complexity_virtualLCU} in addition to the proof of Eq.~\eqref{eq:sample_complexity_coherentLCU}.


In the next section, we provide a systematic way to \purple{manage} the trade-off between the quantum resources \purple{per circuit} and the sample complexity Eqs.~\eqref{eq:sample_complexity_coherentLCU} and \eqref{eq:sample_complexity_virtualLCU}.




\section{Main results}
\subsection{Hybrid simulation of LCU-based CP maps}

\subsubsection{Decomposition of $\tilde{\Lambda}_{p,\mathcal{U}}$}

\begin{figure}[tb]
 \centering
 \includegraphics[scale=1.1]{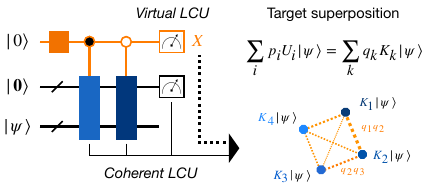}
 \caption{Idea of our decomposition for LCU-based CP map $\tilde{\Lambda}_{p,\mathcal{U}}$ or $\ket{\psi}\mapsto \sum_ip_i U_i\ket{\psi}$.
 To simulate the target state $\sum_{i} p_i U_i\ket{\psi}$, our method randomly uses the controlled version of the coherent LCU method for the partial superpositions $\{K_k|\psi\rangle\}$ and recovers the full state by measuring Pauli X in the top register as well as the virtual LCU method.
 The probability $q_kq_{k'}$ for virtually mixing 
 $K_k\ket{\psi}$ with $K_{k'}|\psi\rangle$ is determined by $K_k$.}
 \label{fig:virtual_quantum_superpos}
\end{figure}


We here introduce new methods to simulate LCU-based CP maps $\tilde{\Lambda}_{p,\mathcal{U}}$.
The core idea of our methods is to simulate the superposition over states $\{p_iU_i\ket{\psi}\}_{i=1}^m$ in the physical interpretation of Eq.~\eqref{eq:LCUlemma} using a \textit{hybrid} approach; also see Fig.~\ref{fig:virtual_quantum_superpos}.

To illustrate this idea, we first consider a simple case that we virtually simulate the superposition between three unnormalized states:
$$p_1 U_1\ket{\psi},~~\sum_{i=2}^{3} p_i U_i\ket{\psi},~~\mbox{and}~~\sum_{i=4}^m p_i U_i\ket{\psi}.$$
The superposition can be simulated with use of the following decomposition of $K_{\rm LCU}$:
\begin{equation}\label{eq:decomposition_example}
    K_{\rm LCU}=\sum_{i=1}^m p_iU_i = q_1K_{1}+q_2K_{2}+q_3K_{3},
\end{equation}
where $K_{k}$ and $q_k$ for $k=1,2,3$ are defined as 
{\color{black}
\begin{equation}\label{eq:weight_qk_def_Kk}
    K_k:=\sum_{i\in S_k}\frac{p_i}{q_k}U_i,~~~q_k = \sum_{i\in S_k} p_i.
\end{equation}
Here, $S_1=\{1\}$, $S_2=\{2,3\}$, and $S_3=\{4,5,...,m\}$; thus their union recovers the original set $[m]:=\{1,2,...,m\}$.
In the following observation}
\begin{align}\label{eq:decomp_example}
    \tilde{\Lambda}_{p,\mathcal{U}}(\rho)&=\left(\sum_{k=1}^{3} q_k K_{k}\right)\rho \left(\sum_{k'=1}^3 q_{k'}K_{k'}^\dagger\right)\notag\\
    &=\sum_{k,k'}q_kq_{k'} \frac{1}{2}\left(K_k\rho K_{k'}^\dagger+K_{k'}\rho K_{k}^\dagger\right),
\end{align}
we find that the unnormalized state $K_{k}\rho K_{k'}^\dag$ can be simulated by the coherent LCU method if $k=k'$.
Even in the case of $k\neq k'$, the combination of the virtual and coherent LCU methods allows us to effectively simulate $K_{k}\rho K_{k'}^\dagger+K_{k'}\rho K_{k}^\dagger$; see Fig.~\ref{fig:components_decomp_LCU}. Therefore, by randomly selecting these operations with probability $q_kq_{k'}$, we can recover the full superposition $K_{\rm LCU}\ket{\psi}=\sum_i p_i U_i\ket{\psi}$ with only use of (controlled version of) the coherent LCU for partial superpositions $\{K_{k}\ket{\psi}\}$.

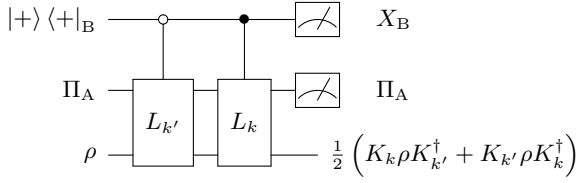
\begin{figure}[tb]
\begin{tabular}{l}
\Qcircuit @C=1em @R=1.5em {
  \lstick{\ket{+}\bra{+}_{\rm B}}     & \ctrlo{1}              & \ctrl{1}             &\meter &\rstick{X_{\rm B}}\\
  \lstick{\Pi_{\rm A}}      & \multigate{1}{L_{k'}}  & \multigate{1}{L_{k}}   &\meter &\rstick{\Pi_{\rm A}}\\
  \lstick{\rho}&  \ghost{L_{k'}}         & \ghost{L_{k}}       &\rstick{\frac{1}2\left(K_k\rho K_{k'}^\dagger +K_{k'}\rho K_{k}^\dagger\right)}  \qw}
  ~~~~~~~~~~~~~~~~
\\
\\
\end{tabular}
\caption{Quantum circuit in our decomposition of $\tilde{\Lambda}_{p,\mathcal{U}}$.
\purple{Similar to the virtual LCU, the final effective state in the third line is simulated by taking the expectation of $X_{\rm B}\otimes \Pi_{\rm A}$; see Eq.~\eqref{eq:main_equality}.}
The unitary $L_k$ is defined as Eq.~\eqref{eq:LCUop} for the LCU operator \purple{$K_k=\sum_{i\in S_k}(p_i/q_k)U_i$}.
\purple{Although $\Pi_{\rm A}$ denotes the initial state of the $\lceil\log_2 m\rceil$-qubit system A, $L_k$ acts only on a subset of the qubits in A.}
Note that if $k=k'$, then we can omit the 1-qubit control for $L_k$.}
\label{fig:components_decomp_LCU}
\end{figure}

While the above procedure focuses on the superposition of the three states, it can be generalized to the simulation of superposition among multiple (unnormalized) states.
This can be summarized as follows.
\begin{thm}[Decomposition of LCU-based CP maps]\label{thm:1}
    Suppose we have a probability distribution with $m$ outcomes $\{p_i\}$ and a set of $m$ unitary gates $\{U_i\}$.
    We split the index set $[m]:=\{1,2,\cdots,m\}$ into $G$ (nonempty) subsets $\{S_k\}$ satisfying
    \begin{equation}\label{eq:condition4partition}
        [m]=\bigcup_{k=1}^G S_k,~S_k\cap S_l=\emptyset,~(k\neq l).
    \end{equation} 
    For any partition $\{S_k\}$ of $[m]$, 
    there exists a mixed unitary channel $\Gamma$ \purple{that randomly realizes a quantum circuit in the form of Fig.~\ref{fig:components_decomp_LCU} (without measurements) and satisfies}
    \begin{align}\label{eq:main_equality}
        &\tilde{\Lambda}_{p,\mathcal{U}}(\bullet):=\left(\sum_{i=1}^m p_i U_i\right)\bullet\left(\sum_{i=1}^m p_i U_i\right)^\dagger\notag\\[6pt]
        &={\rm tr}_{{\rm AB}}\left[\left( \Pi_{\rm A}\otimes X_{\rm B}\otimes \bm{1}\right) \Gamma[\Pi_{\rm A}\otimes \ket{+}\bra{+}_{\rm B}\otimes (\bullet)]\right].
    \end{align}
\end{thm}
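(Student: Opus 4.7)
The plan is to combine the algebraic identity that partitions $K_{\rm LCU}$ into group operators with a Hadamard-test--style circuit identity that simulates each cross term. First, using Eq.~\eqref{eq:weight_qk_def_Kk}, I record the decomposition $K_{\rm LCU}=\sum_{k=1}^G q_k K_k$, where $\{q_k\}$ is a probability distribution since the $\{S_k\}$ partition $[m]$ and $\{p_i\}$ sums to one, and each $K_k$ is itself a convex LCU operator over the sub-distribution $\{p_i/q_k\}_{i\in S_k}$. Consequently, the LCU lemma of Eq.~\eqref{eq:LCUlemma} applies to the circuit $L_k$ associated with each group $k$, producing $K_k$ on the target register conditioned on $\Pi_{\rm A}$.

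Next, I expand $\tilde{\Lambda}_{p,\mathcal{U}}(\rho)$ into the double sum of Eq.~\eqref{eq:decomp_example} and symmetrize it (using the $k\leftrightarrow k'$ invariance of the sum) into the form $\sum_{k,k'} q_k q_{k'}\cdot \tfrac{1}{2}\bigl(K_k\rho K_{k'}^\dagger + K_{k'}\rho K_k^\dagger\bigr)$. The task then reduces to showing that each symmetrized summand equals the effective operator produced by the circuit of Fig.~\ref{fig:components_decomp_LCU} under the observable $X_{\rm B}\otimes \Pi_{\rm A}$. Starting from $\ket{+}\bra{+}_{\rm B}\otimes \Pi_{\rm A}\otimes \rho$ and applying the $\ket{0}$-controlled $L_{k'}$ followed by the $\ket{1}$-controlled $L_k$, I invoke the LCU lemma to write $L_k\ket{\bm{0}}_{\rm A}\ket{\psi}=\ket{\bm{0}}_{\rm A}K_k\ket{\psi}+\ket{\perp_{\psi,k}}$; projecting by $\Pi_{\rm A}$ kills the orthogonal parts, and evaluating $X_{\rm B}$ pairs the $\ket{0}_{\rm B}K_{k'}\ket{\psi}$ and $\ket{1}_{\rm B}K_k\ket{\psi}$ branches to reproduce exactly the symmetrized operator on pure inputs $\ket{\psi}$. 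Linearity then extends this to arbitrary $\rho$.

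Finally, I package the per-pair identities into the mixed unitary channel $\Gamma$: with probability $q_k q_{k'}$ it applies the unitary circuit depicted in Fig.~\ref{fig:components_decomp_LCU} for the pair $(k,k')$ (with the ${\rm B}$-register idle and $L_k$ uncontrolled whenever $k=k'$). Since $\sum_{k,k'} q_k q_{k'}=1$, this is a valid probability distribution, and summing the per-pair identities yields Eq.~\eqref{eq:main_equality}. The main obstacle is the Hadamard-test verification in the second paragraph: one must check carefully that $X_{\rm B}$, rather than $Y_{\rm B}$ or $Z_{\rm B}$, is what extracts the Hermitian symmetric combination rather than its anti-Hermitian counterpart, and that the orthogonality of $\ket{\perp_{\psi,k}}$ to $\ket{\bm{0}}_{\rm A}$ is exactly what allows $\Pi_{\rm A}$ to strip the garbage contributions cleanly without producing spurious cross terms between the two control branches of the ${\rm B}$ qubit.
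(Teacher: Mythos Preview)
Your proposal is correct and follows essentially the same approach as the paper: decompose $K_{\rm LCU}=\sum_k q_k K_k$, use the block-encoding relation $\Pi_{\rm A}L_k\Pi_{\rm A}=\Pi_{\rm A}\otimes K_k$ for each group, compute the action of the controlled pair $L^{(\rm c)}_{kk'}$ on $\ket{+}\bra{+}_{\rm B}\otimes\Pi_{\rm A}\otimes\rho$, and observe that measuring $X_{\rm B}$ after projecting onto $\Pi_{\rm A}$ picks out exactly $\tfrac{1}{2}(K_k\rho K_{k'}^\dagger+K_{k'}\rho K_k^\dagger)$; summing over $(k,k')$ with weights $q_kq_{k'}$ then yields Eq.~\eqref{eq:main_equality}. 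The only stylistic difference is that the paper carries out the calculation directly for an arbitrary operator $A$ on ${\rm S}$ in density-matrix form, whereas you first verify it on pure inputs $\ket{\psi}$ and then invoke linearity; both are equally valid.
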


\noindent
The mixed unitary channel $\Gamma$ on A, B, and the target system S can be explicitly constructed, as follows; for a given partition $\{S_k\}$, 
\begin{equation}
        \Gamma[\star]:=\sum_{k,k'=1}^G{q_kq_{k'}}L^{(\rm c)}_{kk'}\left( \star\right) (L^{\rm (c)}_{kk'})^\dagger,
    \end{equation}
\purple{where $L_{kk'}^{(\rm c)}$ is the unitary operator in Fig.~\ref{fig:components_decomp_LCU} defined by
\begin{equation}\label{eq:Lkk'_def}
    L^{(\rm c)}_{kk'} := \ket{0}\bra{0}_{\rm B}\otimes L_{k'} +\ket{1}\bra{1}_{\rm B}\otimes L_{k}.
\end{equation}
The unitary $L_k$ is defined as Eq.~\eqref{eq:LCUop} for $K_k$.
The definition of $q_k$ and $K_k$ follows from Eq.~\eqref{eq:weight_qk_def_Kk}.}
\purple{We remark that the $L_k$ can be taken as any circuit such that
$\Pi_{\rm A}L_k\Pi_{\rm A}=\Pi_{\rm A}\otimes K_k$
holds, that is, any (error-free) block-encoding~\cite{Low2019hamiltonian,gilyen2019quantum} of $K_k$.
In the resource analysis below, we focus on the circuit structure of Eq.~\eqref{eq:LCUop}; finding a more efficient block-encoding of $K_k$ is an important direction for future work.}
We provide a sketch of the proof for Theorem~\ref{thm:1}; see Appendix~\ref{apdx:prf_thm1} for the full version of the proof.
\begin{proof}[Sketch of the proof of Theorem~\ref{thm:1}]
    For any input state $\rho$, which can also be taken as any operator on S in the following, we can show that
    \begin{align}\label{eq:projected_state_thm1}
        &\Pi_{\rm A}
        \Gamma[\Pi_{\rm A}\otimes \ket{+}\bra{+}_{\rm B}\otimes \rho]
        \Pi_{\rm A}\notag\\
        &=\Pi_{\rm A}\otimes \left[\frac{I_{\rm B}}{2}\otimes \left(\sum_{k=1}^{G} q_k K_k \rho K_k^\dagger\right)+\frac{X_{\rm B}}{2}\otimes \tilde{\Lambda}_{p,\mathcal{U}}(\rho)\right],
    \end{align}
    where we write $\Pi_{\rm A}\otimes \bm{1}_{\rm BS}$ as $\Pi_{\rm A}$ for simplicity.
    Thus, using the relation Eq.~\eqref{eq:projected_state_thm1}, we directly prove that
    \begin{align*}
        &\tilde{\Lambda}_{p,\mathcal{U}}(\rho)\notag\\
        &~~~={\rm tr}_{{\rm AB}}\left[\left( \Pi_{\rm A}\otimes X_{\rm B}\otimes \bm{1}\right) \cdot \Gamma[\Pi_{\rm A}\otimes \ket{+}\bra{+}_{\rm B}\otimes \rho]\right].
    \end{align*}    
\end{proof}

We here clarify the quantum resources required for simulating  our decomposition of $\tilde{\Lambda}_{p,\mathcal{U}}$.
For some partition $\{S_k\}$, the number of qubits in A can be reduced.
From the construction of $L_k$ in Eq.~\eqref{eq:LCUop}, the number of the qubits required for the PREPARE operation is at most $\lceil \log_2 |S_k|\rceil$ for the number $|S_k|$ of elements in $S_k$.
Thus, without loss of generality, we can replace the system A in Eq.~\eqref{eq:main_equality} with its subsystem ${\rm A}'$ that is sufficiently large for the PREPARE operation in $L_k$ of the largest subset among $\{S_k\}$.
As a result, quantum circuits for the decomposition in Theorem~\ref{thm:1} require
\begin{equation}\label{eq:additional_qubits_main}
    \mathcal{O}\left(\max_{k} \log |S_k|\right)~\mbox{ancilla~qubits}.
\end{equation}
Accordingly, we replace the initial state $\Pi_{\rm A}$ in Eq.~\eqref{eq:main_equality} with the $(\max_k\lceil \log_2 |S_k|\rceil)$-qubit initial state.
As for the gates in each circuit, it comes from the use of the coherent LCU methods for $K_k,K_{k'}$, leading to at most
\begin{equation}\label{eq:additional_gates_main}
    \mathcal{O}\left(\max_{k\neq l}\log\left( |S_k|^{|S_k|}|S_{l}|^{|S_{l}|}\right)\right)~\mbox{gates},
\end{equation}
\purple{except for controlled-$U_i$ gates.}
Here, $\max_{k\neq l}$ comes from the fact that when $k=l$, $L_{kl}^{(\rm c)}$ in $\Gamma$ is equivalent to $L_{k}$ (which acts on B trivially).


Theorem~\ref{thm:1} contains the two LCU methods described in Section~\ref{sec:pset_prevLCU}.
The coherent implementation of the LCU method in Section~\ref{sec:coherentlcu} corresponds to the choice of 
\begin{equation}\label{eq:no_partition}
    S_1:=[m]=\{1,2,...,m\}    
\end{equation} 
in Theorem~\ref{thm:1}.
In this partition, the mixed unitary channel $\Gamma$ becomes the unitary channel $L_{11}^{(\rm c)}(\star)L_{11}^{(\rm{c}  ) \dagger}$, and this has no interaction between AS and B due to the same index.
On the other hand, the fully stochastic LCU method in Section~\ref{sec:virtuallcu} corresponds to the choice of 
\begin{equation}\label{eq:complete_fragment}
    S_k:=\{k\}~~\mbox{for~all}~~k=1,2,...,m.
\end{equation}
Then, the quantum gate $L_k$ acts non-trivially on only S, thereby decoupling the mixed unitary channel $\Gamma$ completely from the system A.
A more detailed explanation is provided in Appendix~\ref{apdx:prf_thm1}.
While these two methods completely separate one system A or B from the other systems, our decomposition (for typical partitions) employs a 3-body interaction on ABS.
This 3-body interaction by $\Gamma$ enhances sample efficiency in the expectation value estimation by leveraging available quantum resources, as discussed in the next subsection.


\subsubsection{Expectation value estimation}\label{sec:main_exp_val_est}

The decomposition of Eq.~\eqref{eq:main_equality} naturally sets an estimator for ${\rm tr}[O\tilde{\Lambda}_{p,\mathcal{U}}(\rho)]$ as follows.
Note that the following procedure is valid for any partition $\{S_k\}_{k=1}^G$ satisfying the condition Eq.~\eqref{eq:condition4partition} of Theorem~\ref{thm:1}.

\begin{itemize}
    \item[(i)] Sample $(k,k')$ from the distribution $\{q_kq_{k'}\}$.

    \item[(ii)] Run the circuit in Fig.~\ref{fig:components_decomp_LCU} with the measurement of $O$ at the end of the bottom wire. The corresponding POVM is $\{\Pi_{\rm A}^{(z)} \otimes \ket{b}\bra{b}_{\rm B}\otimes \ket{j}\bra{j}_{\rm S}\}$,
    where $\ket{b}$ ($b\in \{0,1\}$) is the eigenstate of $X$ with eigenvalue $(-1)^b$. For $z\in\{0,1\}$, we define $\Pi_{\rm A}^{(0)}:=\Pi_{\rm A}$ and $\Pi_{\rm A}^{(1)}:=\bm{1}-\Pi_{\rm A}$.
    Also, $\ket{j}\bra{j}$ denotes the eigenstate of $O$ with eigenvalue $o_j$.
    
    
    \item[(iii)] Calculate $g_{O}:=(-1)^b \chi_{0}(z) o_j$, where $\chi_{0}(z)$ is the delta (or indicator) function for $\{z=0\}$.

\end{itemize}
It should be noted that while a part of qubits in A may be omitted as discussed in the previous subsection, we write all $\lceil \log_2 m\rceil$ qubits in A explicitly for conciseness.

First, we focus on the mean of the random variable $g_O$.
From the following observation
\begin{align}
    &\Pi_{\rm A}\otimes X_{\rm B}\otimes O_{\rm S}\notag\\
    &~~~=\sum_{z,b,j} (-1)^{b}\chi_0(z)o_j \cdot \Pi_{\rm A}^{(z)} \otimes \ket{b}\bra{b}_{\rm B}\otimes \ket{j}\bra{j}_{\rm S}
\end{align}
and the decomposition Eq.~\eqref{eq:main_equality}, we obtain
\begin{equation}
    \mathbb{E}\left[g_O\right] = {\rm tr}[O\tilde{\Lambda}_{p,\mathcal{U}}[\rho]].
\end{equation}
Importantly, the first moment does not depend on the choice of the partition $\{S_k\}$, but the above procedure (i)--(iii) itself depends on the partition.

Next, we evaluate the second moment $\mathbb{E}\left[g_O^2\right]$.
Using the expression of Eq.~\eqref{eq:projected_state_thm1}, we can directly obtain
\begin{align}\label{eq:second_mom_upb}
    &\mathbb{E}\left[g_O^2\right]={\rm tr}[(\Pi_{\rm A}\otimes I_{\rm B}\otimes O_{\rm S}^2) \Gamma[\Pi_{\rm A}\otimes \ket{+}\bra{+}_{\rm B}\otimes \rho]]\notag\\[6pt]
    &=
    {\rm tr}\left[O^2_{\rm S}\left\{\frac{I_{\rm B}}{2}\otimes \left(\sum_{k=1}^{G} q_k K_k \rho K_k^\dagger\right)+\frac{X_{\rm B}}{2}\otimes \tilde{\Lambda}_{p,\mathcal{U}}(\rho)\right\}\right]\notag\\[6pt]
    &\leq R\left[\{S_k\};\rho\right]\|O\|^2, 
\end{align}
where $R$, depends on the partition $\{S_k\}$ and the input state $\rho$, is defined as
\begin{equation}
\label{eq:def_R_factor}
    R\left[\{S_k\};\rho\right]:=\sum_{k=1}^{G} q_k {\rm tr}\left[ K_k^\dagger K_k \rho\right].
\end{equation}
We call $R$ the {\it reduction factor}.
For typical observables satisfying $O^2=\bm{1}$ (e.g., Pauli operators), the final equality in Eq.~\eqref{eq:second_mom_upb} holds.
Using Eq.~\eqref{eq:second_mom_upb}, we evaluate the variance of the estimator ${g}_O$ as
\begin{equation}\label{eq:var_upb_general}
    {\rm Var}[{g}_O]\leq {R\left[\{S_k\};\rho\right]\|O\|^2-{\rm tr}[O\tilde{\Lambda}_{p,\mathcal{U}}[\rho]]^2},
\end{equation}
where the equality holds when $O^2=\bm{1}$.
Thus, the variance depends on the partition $\{S_k\}$ in contrast to the first moment $\mathbb{E}\left[{g}_O\right]={\rm tr}[O\tilde{\Lambda}_{p,\mathcal{U}}[\rho]]$.

\purple{By using the $g_{O}$ and $g_{\bm{1}}$ (obtained by setting $O=\bm{1}$), we can estimate Eqs.~\eqref{eq:target_ratio_est} and~\eqref{eq:target_nume_est}.
Specifically, for the averaged value $\bar{g}_{{O},N}$ for identical $N$ samples of $\hat{g}_{O}$, we rescale $\bar{g}_{{O},N}$ for Eq.~\eqref{eq:target_nume_est} and take the ratio of $\bar{g}_{{O},N}$ and $\bar{g}_{{\bm{1}},N}$ for Eq.~\eqref{eq:target_ratio_est}.
The performance of this estimation is summarized as follows.}
%
\begin{thm}\label{thm:2_samplecomplexity}
    
    Let $O,\rho$ be an observable and initial quantum state, and let $K\bullet K^\dagger=\|c\|_1^2\tilde{\Lambda}_{p,\mathcal{U}}(\bullet)$ as in Eq.~\eqref{eq:KandctildeLam}.
    Using a decomposition of $\tilde{\Lambda}_{p,\mathcal{U}}$ in Theorem~\ref{thm:1} based on a partition $\{S_k\}$, we can obtain 
    the mean $\overline{g}_{O,N}$ of $N$ independent samples $g_{O}$ via (i)--(iii).
    For an additive error $\varepsilon~(\ll 1)$ and a failure probability $\delta>0$, 
    this estimator $\overline{g}_{O,N}$ satisfies 
    \begin{equation}
    {\rm Pr}\left(\left|\|c\|^2_1\overline{g}_{O,N}-{\rm tr}[OK\rho K^\dagger]\right|\leq \varepsilon\right)\geq 1-\delta
    \end{equation}
    for the number of samples
    \begin{equation}\label{eq:main_sample_numeest}
    N=\mathcal{O}\left(\|c\|^4_1 \cdot R[\{S_k\};\rho]\times \frac{\|O\|^2\log(1/\delta)}{\varepsilon^2}\right).
    \end{equation}
    Also,
    \begin{equation}
    {\rm Pr}\left(\left|\frac{\overline{g}_{O,N}}{\overline{g}_{\bm{1},N}}-\frac{{\rm tr}[OK\rho K^\dagger]}{{\rm tr}[K\rho K^\dagger]}\right|\leq \varepsilon\right)\geq 1-\delta
    \end{equation}
    holds for the number of samples
    \begin{equation}\label{eq:main_sample_ratioest}
    N=\mathcal{O}\left(\frac{R[\{S_k\};\rho]}{\mathcal{P}^2}\times \frac{\|O\|^2\log(1/\delta)}{\varepsilon^2}\right).
    \end{equation}
\end{thm}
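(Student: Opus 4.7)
The plan is to build on the already-established mean and variance of the random variable $g_O$ in the paper (namely $\mathbb{E}[g_O]=\operatorname{tr}[O\tilde{\Lambda}_{p,\mathcal{U}}(\rho)]$ together with the variance bound $\operatorname{Var}[g_O]\le R[\{S_k\};\rho]\|O\|^2$ from Eq.~\eqref{eq:var_upb_general}), and then reduce both claims to a standard concentration argument on the empirical mean $\overline{g}_{O,N}$. Note that $g_O$ is bounded by $\|O\|$ in absolute value by its construction in step (iii), which in principle allows either a Bernstein-type bound or a median-of-means post-processing; I will present the argument through median-of-means so that the variance bound enters cleanly and the logarithmic dependence on $1/\delta$ is immediate.

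First, for the numerator estimator, I would record that by $K\bullet K^\dagger=\|c\|_1^2\tilde{\Lambda}_{p,\mathcal{U}}(\bullet)$ we have $\operatorname{tr}[OK\rho K^\dagger]=\|c\|_1^2\,\mathbb{E}[g_O]$, so the task becomes estimating $\mathbb{E}[g_O]$ within additive error $\varepsilon/\|c\|_1^2$. Partition the $N$ samples into $k=\Theta(\log(1/\delta))$ blocks of size $n$, take the sample mean of each block, and output the median. Chebyshev's inequality applied to a single block, combined with $\operatorname{Var}[g_O]\le R\|O\|^2$, gives probability of $\varepsilon/\|c\|_1^2$-deviation at most $1/4$ once $n=\Theta(\|c\|_1^4 R\|O\|^2/\varepsilon^2)$. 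A Chernoff bound on the median then yields failure probability $\le\delta$ with $k=\Theta(\log(1/\delta))$ blocks, so the total sample count $N=kn$ matches Eq.~\eqref{eq:main_sample_numeest}.

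Second, for the ratio estimator, the key auxiliary fact is that $\mathbb{E}[g_{\bm{1}}]=\operatorname{tr}[\tilde{\Lambda}_{p,\mathcal{U}}(\rho)]=\mathcal{P}$ and $|\mathbb{E}[g_O]|\le\|O\|\mathcal{P}$. Writing $a=\overline{g}_{O,N}$, $b=\overline{g}_{\bm{1},N}$, $\hat a=\mathbb{E}[g_O]$, $\hat b=\mathcal{P}$, the elementary decomposition
\begin{equation}
\left|\frac{a}{b}-\frac{\hat a}{\hat b}\right|
\le \frac{|a-\hat a|}{|b|}+\frac{|\hat a|}{|b|\,|\hat b|}\,|b-\hat b|
\end{equation}
controls the ratio error provided $|b-\hat b|\le \hat b/2$ (which ensures $|b|\ge\mathcal{P}/2$). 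I would therefore apply the concentration bound above to each of $\overline{g}_{O,N}$ and $\overline{g}_{\bm{1},N}$ separately (with failure budgets $\delta/2$ each, and targeting additive errors $\varepsilon\mathcal{P}/4$ and $\varepsilon\mathcal{P}/(4\|O\|)$ respectively, using $\operatorname{Var}[g_{\bm{1}}]\le R$). Both sample-size requirements collapse to $\mathcal{O}(R\|O\|^2\log(1/\delta)/(\varepsilon^2\mathcal{P}^2))$, recovering Eq.~\eqref{eq:main_sample_ratioest}; the samples may be shared since each circuit run outputs both $g_O$ and $g_{\bm{1}}$ at once.

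The main obstacle I expect is not any single step but the combination: a naive Hoeffding bound loses the $R$ factor (it scales only with $\|O\|^2$, not $R\|O\|^2$), while naive Chebyshev loses the $\log(1/\delta)$ dependence. The median-of-means trick is the cleanest way to retain both, so the delicate point is to justify that the block size can indeed be driven by the variance bound (which relies on Eq.~\eqref{eq:var_upb_general}, and thus implicitly on Theorem~\ref{thm:1}) rather than by the trivial boundedness $|g_O|\le\|O\|$. A secondary subtlety is the conditional error propagation in the ratio step: one must first exclude the bad event $|b-\hat b|>\mathcal{P}/2$ before invoking the linearization, and then union-bound it with the good-event deviation bound; both events share the $\delta/2$ budget, and consistency of the required sample count has to be verified at the end.
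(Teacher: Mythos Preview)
Your argument is correct in substance and closely parallels the paper's, but the concentration tool differs. The paper applies Bernstein's inequality directly to the empirical mean $\overline{g}_{O,N}$ (using both the variance bound $\operatorname{Var}[g_O]\le R\|O\|^2$ and the almost-sure bound $|g_O|\le\|O\|$), which immediately yields the stated sample complexity \emph{for the plain sample-mean estimator} appearing in the theorem. Your median-of-means route gives the same $N$, but the estimator it certifies is the median of block means, not $\overline{g}_{O,N}$ itself; strictly speaking this proves a slightly different statement than the one written. Since you already note that a Bernstein-type bound is available, the fix is cosmetic: replacing your Chebyshev-plus-median step by a single Bernstein application recovers exactly the paper's proof of the first part. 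For the ratio part, your linearization and union-bound strategy is essentially identical to the paper's dedicated lemma (Lemma~\ref{lem:ratio_confidenceint}), which also conditions on $|\overline{g}_{\bm{1},N}-\mathcal{P}|\le\mathcal{P}/2$ and then bounds numerator and denominator deviations separately via Bernstein; your choice of target accuracies $\varepsilon\mathcal{P}/4$ and $\varepsilon\mathcal{P}/(4\|O\|)$ and the resulting collapse to $\mathcal{O}(R\|O\|^2\log(1/\delta)/(\varepsilon^2\mathcal{P}^2))$ match the paper's computation up to constants.
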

\noindent
The number of samples Eqs.~\eqref{eq:main_sample_numeest} and~\eqref{eq:main_sample_ratioest} are proved in Appendix~\ref{apdx:sample_complexity}.
We remark that while several components in Eqs.~\eqref{eq:main_sample_numeest} and~\eqref{eq:main_sample_ratioest} (e.g., $R$ or $\mathcal{P}={\rm tr}[\tilde{\Lambda}_{p,\mathcal{U}}(\rho)]$) may be unknown in advance, we can overcome this by estimating unknown factors together with the target values when the number of samples $N$ is sufficiently large; see Appendix~\ref{apdx:sample_complexity} for the asymptotic analysis of the estimation schemes.

From Theorem~\ref{thm:2_samplecomplexity}, the partition $\{S_k\}$ for the decomposition of $\tilde{\Lambda}_{p,\mathcal{U}}$ varies the required number of samples to obtain the target values Eqs.~\eqref{eq:target_ratio_est} and~\eqref{eq:target_nume_est} via the change of the reduction factor $R$.
In the next section, we show that there is a clear relationship between the value of $R$ and the size of $S_k$, which determines the quantum resources for the procedure (ii); see Eqs.~\eqref{eq:additional_qubits_main} and~\eqref{eq:additional_gates_main}.


\subsubsection{Multi-round case}\label{sec:multi-round_case}
Although the above discussion focuses on the case of a \textit{single} LCU-based CP map $\tilde{\Lambda}_{p,\mathcal{U}}$, this can be generalized to multi-round LCU-based CP maps.
Let $\{\tilde{\Lambda}^{(\mu)}\}_{\mu=1}^r$ be a sequence of LCU-based CP maps in Theorem~\ref{thm:1}, and we fix a partition for each $\tilde{\Lambda}^{(\mu)}$ such that 
\begin{equation}
    \tilde{\Lambda}^{(\mu)}(\bullet)=\left(\sum_{k}q_{k}^{(\mu)}K_{k}^{(\mu)}\right)\bullet \left(\sum_{k}q_{k}^{(\mu)}K_{k}^{(\mu)}\right)^\dagger
\end{equation}
holds.
Then, we can follow the same procedure (i)---(iii) and obtain an estimate $g'_{O}$ whose expectation matches 
\begin{equation}
    \mathbb{E}[g'_O]={\rm tr}\left[O\left(\tilde{\Lambda}^{(r)}\cdots\circ\tilde{\Lambda}^{(2)}\circ\tilde{\Lambda}^{(1)}\right)(\rho)\right].
\end{equation}
Using this, we can estimate various properties of the following quantum state
\begin{equation}
    \frac{\left(\tilde{\Lambda}^{(r)}\cdots\circ\tilde{\Lambda}^{(2)}\circ\tilde{\Lambda}^{(1)}\right)(\rho)}{{\rm tr}\left[\left(\tilde{\Lambda}^{(r)}\cdots\circ\tilde{\Lambda}^{(2)}\circ\tilde{\Lambda}^{(1)}\right)(\rho)\right]}.
\end{equation}
In this application, the sample complexity similar to Theorem~\ref{thm:2_samplecomplexity} can be derived for the generalized $R$:
\begin{equation}\label{eq:multi_round_R}
    \prod_{\mu=1}^{r}\left(\sum_{k}q^{(\mu)}_{k}{\rm tr}\left[K_{k}^{(\mu)\dagger}K_{k}^{(\mu)}\rho_{\mu}\right]\right),
\end{equation}
where the quantum state $\rho_{\mu}$ is given by 
\begin{equation}
    \rho_{\mu+1}=\frac{\sum_{k}q^{(\mu)}_{k}K^{(\mu)}_{k}\rho_{\mu}K_{k}^{(\mu)\dagger}}{{\rm tr}\left[\sum_{k} q^{(\mu)}_{k}K_{k}^{(\mu)}\rho_{\mu}K_{k}^{(\mu)\dagger}\right]}
\end{equation}
and $\rho_1=\rho$.
From Eq.~\eqref{eq:multi_round_R}, we observe that the generalized $R$ is the product of the original $R$ Eq.~\eqref{eq:def_R_factor} for the input state $\rho_{\mu}$.

\subsection{Property of the reduction factor $R$}
\subsubsection{Monotonicity under additional quantum resources}\label{sec:tradeoff}


\begin{figure}[tb]
 \centering
 \includegraphics[scale=0.55]{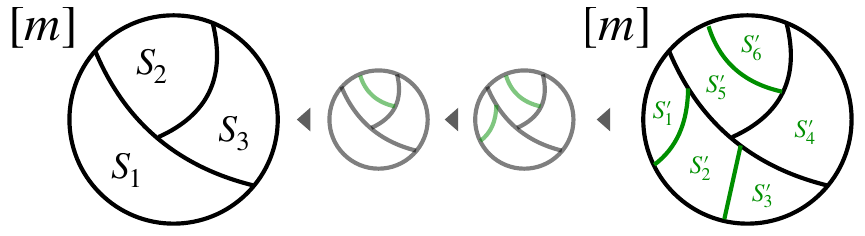}
 \caption{\purple{A series of partitions of $[m]$ satisfying the assumption of Theorem~\ref{thm:2}.}}
 \label{fig:thm3}
\end{figure}

The larger number of elements in $S_k$, the more quantum resources are required for the implementation of $L_k$, which prepares the quantum superposition among $\{p_iU_i\ket{\psi}\}_{i\in S_k}$.
The following theorem indicates that this additional usage of quantum resources enhances the efficiency of the estimation scheme, that is, the variance of the estimator ${g}_{O}$ can be reduced. We provide its proof in Appendix~\ref{apdx:prf_thm2} \purple{and Fig.~\ref{fig:thm3} shows an illustrative example.}
\begin{thm}\label{thm:2}
    Let us consider the decomposition methods for $\tilde{\Lambda}_{p,\mathcal{U}}(\bullet)$ in Theorem~\ref{thm:1}.
    Suppose we have two partitions $\{S_k\}_{k=1}^G$ and $\{S'_{l}\}_{l=1}^{G'}$ of $[m]$ satisfying the condition Eq.~\eqref{eq:condition4partition} in Theorem~\ref{thm:1}.
    If for any $S'_l$ ($l=1,2,...,G'$), there exists a single set $S_k$ that contains $S_l'$, then the corresponding reduction factors {in Eq.~\eqref{eq:def_R_factor}} satisfy the following inequalities for any input quantum state $\rho$:
    \begin{equation}\label{eq:inequalities4R}
        {\mathcal{P}}\leq R\left[\{S_k\}_{k=1}^G;\rho\right] \leq R\left[\{S'_{l}\}_{l=1}^{G'};\rho\right]\leq 1.
    \end{equation}
    The upper and lower bound can be saturated by the two extreme cases: the virtual method Eq.~\eqref{eq:complete_fragment} and coherent method Eq.~\eqref{eq:no_partition}, respectively.
\end{thm}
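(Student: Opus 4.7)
The plan is to derive all three inequalities from a single operator-level Jensen identity. For any nonnegative weights $\{w_k\}$ with $\sum_k w_k = 1$ and any operators $\{A_k\}$ on the target system, direct expansion yields
\begin{equation*}
\sum_k w_k A_k^\dagger A_k - \left(\sum_k w_k A_k\right)^\dagger \left(\sum_k w_k A_k\right) = \sum_k w_k (A_k - \bar A)^\dagger (A_k - \bar A),
\end{equation*}
with $\bar A := \sum_k w_k A_k$; the right-hand side is manifestly positive semidefinite, so the difference on the left is a positive operator.

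With this in hand, the leftmost inequality $\mathcal{P} \leq R[\{S_k\};\rho]$ follows by setting $w_k = q_k$ and $A_k = K_k$, observing $\bar A = \sum_k q_k K_k = K_{\rm LCU}$, and taking the trace against $\rho$; this yields ${\rm tr}[K_{\rm LCU}^\dagger K_{\rm LCU}\rho] \leq \sum_k q_k {\rm tr}[K_k^\dagger K_k \rho]$, and the left side equals $\mathcal{P}$ by Eq.~\eqref{eq:cLCU_successprob}. The rightmost inequality $R \leq 1$ uses that each $K_k$ is a convex combination of unitaries, hence $\|K_k\|_\infty \leq 1$ and $K_k^\dagger K_k \leq \bm{1}$ in operator ordering, so $\sum_k q_k {\rm tr}[K_k^\dagger K_k \rho] \leq \sum_k q_k = 1$.

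The nontrivial middle step uses the refinement assumption block by block. For each $k$, define $I_k := \{l : S'_l \subseteq S_k\}$; the hypothesis then gives $q_k = \sum_{l \in I_k} q'_l$ and $K_k = \sum_{l \in I_k}(q'_l/q_k)\, K'_l$, exhibiting $K_k$ as a convex combination of the finer operators with weights $q'_l/q_k$. Applying the Jensen identity inside block $k$ with these weights and then multiplying by $q_k$ yields the operator inequality $q_k K_k^\dagger K_k \leq \sum_{l \in I_k} q'_l (K'_l)^\dagger K'_l$. Summing over $k$, using that $\{I_k\}_k$ partitions $\{1,\dots,G'\}$, and tracing against $\rho$ produces $R[\{S_k\};\rho] \leq R[\{S'_l\};\rho]$.

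The saturation claims in Eq.~\eqref{eq:inequalities4R} are immediate: the trivial partition $S_1 = [m]$ gives a single operator $K_1 = K_{\rm LCU}$ with $q_1 = 1$, so $R$ reduces to $\mathcal{P}$; the finest partition $S_k = \{k\}$ gives $K_k = U_k$, whose unitarity makes $K_k^\dagger K_k = \bm{1}$ and hence $R = \sum_k p_k = 1$. The whole argument thus rests on one operator identity applied at two scales; I expect no technical obstacle beyond careful index bookkeeping in the refinement step, and in particular the chain of inequalities is tight exactly when all $K_k$ collapse to a common operator, recovering the coherent and stochastic extremes.
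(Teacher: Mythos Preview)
Your proof is correct and rests on the same core mechanism as the paper's: the positivity of the operator ``variance'' $\sum_k w_k A_k^\dagger A_k - \bar A^\dagger \bar A$. The execution differs slightly. You invoke the full Jensen-type identity once at the block level, handling an arbitrary refinement $\{S'_l\}$ of $\{S_k\}$ in a single stroke. The paper instead reduces to the elementary step of splitting one set $S_G$ into two pieces $S_A,S_B$ and computes the difference explicitly as
\[
R\!\left[\{S_k\}_{k=1}^{G-1}\cup\{S_A,S_B\};\rho\right]-R\!\left[\{S_k\}_{k=1}^{G};\rho\right]
=\frac{q_A q_B}{q_A+q_B}\,{\rm tr}\!\left[(K_A-K_B)^\dagger(K_A-K_B)\rho\right]\ge 0,
\]
then reaches a general refinement by iterating binary splits. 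Your route is cleaner for the monotonicity statement itself; the paper's route has the side benefit that the exact two-term difference formula is precisely what is reused to prove Lemma~\ref{lem:R_factpr_ineq_prop_main} (the $2H(q_A,q_B)$ bound). The outer bounds $\mathcal{P}\le R\le 1$ and the saturation claims are handled identically in both arguments.
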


Since the $R$ linearly affects the required number of samples Eqs.~\eqref{eq:main_sample_numeest} and~\eqref{eq:main_sample_ratioest}, the monotone property of Eq.~\eqref{eq:inequalities4R} allows us to systematically improve the sample complexity by leveraging additional quantum resources.
Specifically, in the case of Eq.~\eqref{eq:main_sample_ratioest}, 
our method can connect the two scalings regarding the inverse of success probability $\mathcal{P}={\rm tr}{[\tilde{\Lambda}_{p,\mathcal{U}}(\rho)]}$; $\mathcal{P}^{-2}$ (achieved by the virtual LCU method, $R=1$) and $\mathcal{P}^{-1}$ (achieved by the coherent LCU method, $R=\mathcal{P}$).
This improvement is crucial when the success probability $\mathcal{P}={\rm tr}{[\tilde{\Lambda}_{p,\mathcal{U}}(\rho)]}$ is small.

\subsubsection{More detailed inequalities}

Here, we provide some useful inequalities for the reduction factors $R$ between distinct partitions.
The proof can be found in Appendix~\ref{sec:apdxD}.
\begin{lemma}\label{lem:R_factpr_ineq_prop_main}
    Suppose that the same assumption of Theorem~\ref{thm:1} and $|S_{G}|\geq 2$ hold. Then, the followings hold.
    \begin{itemize}
        \item[\rm A.] {\bf Split subset:}
        If we split $S_{G}$ into two subsets $S_{A},S_{B}$ such that $\{S_A,S_B\}$ is a partition of $S_{G}$, then for any $\rho$
        \begin{align}
            &R\left[\{S_k\}_{k=1}^{G-1}\cup \{S_A,S_B\}; \rho\right] \notag\\
            &~~~~~\leq R\left[\{S_k\}_{k=1}^G; \rho\right]+2H\left(q_A,q_B\right)
        \end{align} 
        holds. $H(a,b):=2ab/(a+b)$ denotes the harmonic mean, and $q_k=\sum_{i\in S_k} p_i$ for $k=A,B$.
        \item[\rm B.] {\bf Fully fragment subset:}
        If we fully fragment $S_{G}$ into $\{\{i\}:i\in S_G\}$, then for any $\rho$
        \begin{align}
            &R\left[\{S_k\}_{k=1}^{G-1}\cup\{\{i\}:i\in S_G\}; \rho\right] \notag\\
            &~~~~~\leq R\left[\{S_k\}_{k=1}^G; \rho\right]+q_G,
        \end{align}
        where $q_G :=\sum_{i\in S_G} p_i$.
    \end{itemize}
\end{lemma}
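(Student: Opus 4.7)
The plan is to work directly from the explicit formula $R[\{S_k\};\rho] = \sum_k q_k\, {\rm tr}[K_k^\dagger K_k \rho]$, since in both claims only the single block $S_{G}$ is refined; all other summands cancel between the two partitions and everything reduces to controlling the change in the $S_G$-contribution. This isolates a clean operator inequality that I can attack directly.

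For part A, I would first use the definitions to write $K_G = (q_A/q_G)K_A + (q_B/q_G)K_B$ (which holds because $q_G = q_A + q_B$ and each $K_\bullet$ is a convex combination of the same $U_i$'s with reweighted coefficients), substitute this into $q_G K_G^\dagger K_G$, and expand. I expect a polarization-type cancellation of the form
\begin{equation*}
    q_A K_A^\dagger K_A + q_B K_B^\dagger K_B - q_G K_G^\dagger K_G = \frac{q_A q_B}{q_G}\,(K_A - K_B)^\dagger (K_A - K_B).
\end{equation*}
Taking the expectation against $\rho$ and bounding $(K_A - K_B)^\dagger (K_A - K_B) \preceq \|K_A - K_B\|^2\,\bm{1}$ will finish the argument, because $K_A$ and $K_B$ are convex combinations of unitaries, so $\|K_A\|, \|K_B\| \leq 1$ and hence $\|K_A - K_B\|^2 \leq 4$. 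This gives a change in $R$ of at most $4 q_A q_B/q_G = 2H(q_A,q_B)$, matching the claim.

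For part B, fully fragmenting $S_G$ into singletons turns $q_G K_G^\dagger K_G$ into $\sum_{i\in S_G} p_i\, U_i^\dagger U_i = q_G\,\bm{1}$ by unitarity of the $U_i$, so the change in $R$ is exactly $q_G\bigl(1 - {\rm tr}[K_G^\dagger K_G \rho]\bigr)$. This is at most $q_G$ since $K_G^\dagger K_G \succeq 0$, and it is non-negative because $\|K_G\| \leq 1$ implies $K_G^\dagger K_G \preceq \bm{1}$, consistent with the monotonicity of Theorem~\ref{thm:2}.

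The main obstacle is really only the polarization identity in part A; once that is verified by direct expansion, both bounds become one-line consequences of $\|U_i\| = 1$ and positive semi-definiteness of $K_k^\dagger K_k$. A small sanity check I would perform at the end is that both estimates are manifestly non-negative, agreeing with the qualitative inequality $R[\text{refined}] \geq R[\text{coarser}]$ from Theorem~\ref{thm:2}, so the lemma is really a quantitative sharpening of monotonicity in the two simplest refinement moves.
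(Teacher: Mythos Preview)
Your proposal is correct and follows essentially the same route as the paper: the paper (Appendix~\ref{sec:apdxD}) derives the identical polarization identity $q_A K_A^\dagger K_A + q_B K_B^\dagger K_B - q_G K_G^\dagger K_G = \tfrac{q_A q_B}{q_G}(K_A-K_B)^\dagger(K_A-K_B)$ (stated there as Eq.~\eqref{eq:RO_factor_difference_exact} in the proof of the monotonicity theorem) and then bounds it via $\|(K_A-K_B)/2\|\le 1$, while part~B is handled by the same one-line estimate $q_G - q_G\,{\rm tr}[K_G^\dagger K_G\rho]\le q_G$. The only cosmetic difference is that the paper proves the slightly more general $R^O$ version with an arbitrary observable and then specializes to $O^2=\bm{1}$.
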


These properties indicate that if the probability distribution $\{p_i\}$ in $\tilde{\Lambda}_{p,\mathcal{U}}$ is sufficiently localized in a certain index subset, the other elements can be fully fragmented while keeping the $R$ small i.e., $R\sim \mathcal{P}$.
Specifically, for a partition $\{S_A,S_B\}$ of $[m]$, we have
\begin{align}\label{eq:R_factor_ineq}
    R[\{S_A\}\cup\{\{i\}:i\in S_B\};\rho]&\leq \mathcal{P}+q_B+2H(q_A,q_B)\notag\\
    &=\mathcal{P}+\mathcal{O}(q_B)
\end{align}
from Lemma~\ref{lem:R_factpr_ineq_prop_main}.
In such cases, our protocol shows the advantages of both the coherent and virtual simulation of LCU-based CP maps.
That is, in comparison with the virtual simulation, our protocol achieves a nearly quadratic improvement in the sample complexity~\eqref{eq:main_sample_ratioest} regarding the projection probability $\mathcal{P}={\rm tr}[\tilde{\Lambda}_{p,\mathcal{U}}(\rho)]$ at the cost of a (slightly) increased circuit complexity, which is also smaller than that of the full coherent implementation.


\section{Application}
\label{sec:application}

Here, we deploy our hybrid strategies in several representative applications.
For each application, the reduction factor $R$ is tuned by an application-specific grouping of unitary operators, which allows us to appropriately trade off the sampling overhead against the quantum circuit depth {and ancilla qubits}.

\begin{figure*}[tb]
 \begin{minipage}[b]{0.495\linewidth}
    \centering
    \includegraphics[scale=0.85]{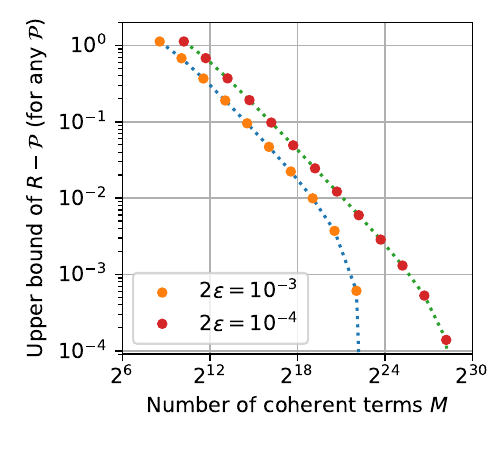}
  \end{minipage}
  \begin{minipage}[b]{0.495\linewidth}
    \centering
    \includegraphics[scale=0.85]{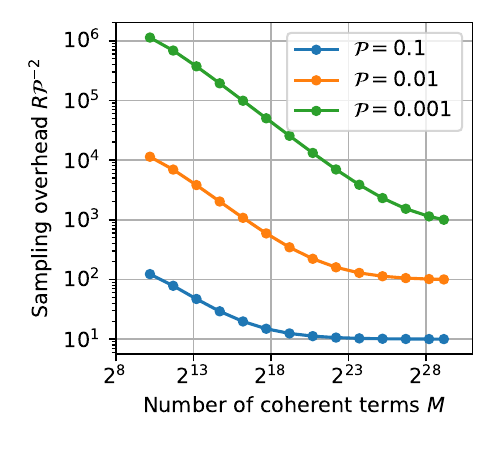}
  \end{minipage}
  \caption{The $M$-dependence of the upper bound of (left) the $R-\mathcal{P}$ for Eq.~\eqref{eq:lchs_target_lcu} and (right) the sampling overhead $R\mathcal{P}^{-2}$ in the LCHS example.
  The corresponding partition is described below Eq.~\eqref{eq:lchs_target_lcu}.
  The parameters are chosen as $\|L\|=2$ and $T=3$.
  We plot the final expression of Eq.~\eqref{eq:lchs_rp_eval} as the dotted lines in the left figure. We set $2\varepsilon=10^{-4}$ in the right figure.}
  \label{fig:lchs_res}
\end{figure*}

\subsection{Linear combination of Hamiltonian simulation}

We here consider a specific example for solving differential equations ${\rm d}u(t)/{\rm d}t = -Au(t)$.
Here, $A$ is a general time-independent matrix. 
A generalization to time-dependent cases or inhomogeneous cases of the following analysis would be possible, but we leave it for future work.
Assuming the Hermitian part $L=(A+A^\dagger)/2$ is positive semi-definite, the time propagator from $u(0) $ to $u(T)$ can be written as LCU~\cite{an2023linear}: 
\begin{equation}\label{eq:lchs_propagator}
    e^{-AT}=\int_{\mathbb{R}} \frac{1}{\pi(1+k^2)} e^{-iT(H+kL)} {\rm d}k,
\end{equation}
where $H=(A-A^\dagger)/2i$.
If the condition $L\geq 0$ is false, then we consider the equation of $v(t)=e^{-ct}u(t)$ with a positive value $c$ such that $L+c\bm{1}\geq 0$.
Then, $v(t)$ is propagated by Eq.~\eqref{eq:lchs_propagator} with $L+c\bm{1}$ instead of $L$, and we can recover $u(T)=e^{cT}v(T)$.

Since the propagator Eq.~\eqref{eq:lchs_propagator} is in the form of Eq.~\eqref{eq:lcu_before_normalize}, we can use our framework described in the previous section.
To be specific, we first truncates the integral over $\mathbb{R}$ to $[-\mathcal{K}_1,\mathcal{K}_1]$ for some $\mathcal{K}_1$ and then divide the truncated integral into two parts:
\begin{equation}\label{eq:lchs_paritioned}
    \int_{-\mathcal{K}_2}^{\mathcal{K}_2} \frac{e^{-iT(H+kL)}}{\pi(1+k^2)}{\rm d}k+\int_{\mathcal{K}_2\leq|k|\leq \mathcal{K}_1} \frac{e^{-iT(H+kL)}}{\pi(1+k^2)}{\rm d}k
\end{equation}
where $\mathcal{K}_2\in [0,\mathcal{K}_1]$ is a parameter for specifying a series of partitions later.
From the analysis of Ref.~\cite{an2023linear}, the choice of $\mathcal{K}_1=\tan(\pi(1-\varepsilon)/2)=\mathcal{O}(1/\varepsilon)$ is sufficient to ensure the truncation error between Eq.~\eqref{eq:lchs_propagator} and~\eqref{eq:lchs_paritioned} below $\varepsilon$. 
Then, we further discretize the first integral as $\sum_{j=0}^{M} s_j e^{-iT(H+k_jL)}$ via the trapezoidal rule with
\begin{equation}
    s_j=\frac{1}{\pi(1+k_j^2)}\frac{(2-\bm{1}_{j=0,M})\mathcal{K}_2}{M},~~k_j=-\mathcal{K}_2+\frac{2j\mathcal{K}_2}{M},
\end{equation}
where $\bm{1}_{j=0,M}$ is 1 for $j=0,M$ and 0 for $j\neq 0,M$.
The number $M$ is taken as $\mathcal{O}(\|L\|T\sqrt{\mathcal{K}_2^3/\varepsilon})$ to suppress the discretization error below $\varepsilon$.
Overall, the target propagator $e^{-AT}$ is well approximated by 
\begin{equation}\label{eq:lchs_target_lcu}
    \sum_{j=0}^M s_j e^{-iT(H+k_jL)}+\int_{\mathcal{K}_2\leq|k|\leq \mathcal{K}_1} \frac{e^{-iT(H+kL)}}{\pi(1+k^2)}dk.
\end{equation}

In the following, we focus on a partition that the terms in the first summation are contained in $S_1$ and all the other terms in the second integral are separably contained in each size-1 subset. 
This partition is controlled by the parameter $\mathcal{K}_2$; in particular, $\mathcal{K}_2=0$ and $\mathcal{K}_2=\mathcal{K}_1$ correspond to the full virtual and coherent LCU, respectively.
Using Eq.~\eqref{eq:R_factor_ineq}, the reduction factor $R$ for this partition can be evaluated as
\begin{align}\label{eq:lchs_rp_eval}
    R-\mathcal{P}
    &\leq \frac{2\alpha(\mathcal{K}_1,\mathcal{K}_2)}{\pi}\frac{5\|s\|_1+(2/\pi)\alpha(\mathcal{K}_1,\mathcal{K}_2)}{(\|s\|_1+(2/\pi)\alpha(\mathcal{K}_1,\mathcal{K}_2))^2}\notag\\
    &\simeq \left(1-\frac{\arctan{\mathcal{K}_2}}{\arctan \mathcal{K}_1}\right)\left(1+4\frac{\arctan{\mathcal{K}_2}}{\arctan \mathcal{K}_1}\right),
\end{align}
where $\alpha(\mathcal{K}_1,\mathcal{K}_2):=\arctan(\mathcal{K}_1)-\arctan(\mathcal{K}_2)$, and we use $\|s\|_1\simeq (2/\pi)\arctan{\mathcal{K}_2}$ for the second line.
Here, we substituted $q_{\rm A}\propto \|s\|_1$, $q_{\rm B}\propto (2/\pi)\alpha(\mathcal{K}_1,\mathcal{K}_2)$, and $q_{\rm A}+q_{\rm B}=1$ into Eq.~\eqref{eq:R_factor_ineq}.

We show the relationship between the number of the coherent terms $M$ and the upper bound of $R-\mathcal{P}$ (left) and the sampling overhead $R\mathcal{P}^{-2}$ (right) in Fig.~\ref{fig:lchs_res} by sweeping the parameter $\mathcal{K}_2$ in $[0,\mathcal{K}_1]$ for the case of $\|L\|=2$ and time $T=3$.
The presented result in the left figure holds for any $\mathcal{P}\gtrsim e^{-2\|L\|T}$, which is determined by specifying an input state and $A$.
Although we here do not specify the input (except for the norm $\|L\|$ of $L$), for instance,
we focus on the case $\mathcal{P}\sim 10^{-2}$.
In this case, $R=\mathcal{P}\sim 10^{-2}$ holds for the full coherent LCU.
From Fig.~\ref{fig:lchs_res}, when $2\varepsilon=10^{-4}$, it is sufficient to choose $M\sim 2^{21}$ in our method to bound ${R}-\mathcal{P}\lesssim 10^{-2}$, i.e., ${R}\lesssim \mathcal{P}+10^{-2}\sim 2\times 10^{-2}$, yielding
a similar number of samples~\eqref{eq:main_sample_ratioest} to the full coherent LCU which requires $\sim 2^{29}$ terms in a single LCU circuit.
As a result, in this case, we obtain $7$ ancilla qubits reduction and approximately $\times 256$ depth reduction while keeping $\mathcal{O}(1/\mathcal{P})$ sampling overhead for the ratio estimation, in comparison with the full coherent LCU method.
We note that the depth reduction may be more modest when leveraging the structure of PREPARE and SELECT for this task.


\subsection{\purple{Quantum} linear system solver}
The \purple{quantum linear system solver} is for solving a linear equation $M \ket{x} = \ket{b}$ with $M$ being a Hermitian matrix and $\ket{x}$ and $\ket{b}$ are vectors~\cite{childs2017quantum}.
\purple{Specifically, we aim at obtaining the expectation value of an observable for the normalized solution vector~\cite{wang2024qubit,Chakraborty2024implementingany}.
Now we assume that the singular values of $M$ are in $[1/\kappa,1]$ for some $\kappa>1$.
To prepare the solution vector, Ref.~\cite{childs2017quantum} uses the following decomposition of $M^{-1}$ via the Fourier transform}
\begin{equation}
M^{-1} =\frac{i}{\sqrt{2\pi}}   \int_{0}^{\infty} {\rm d}y\int_{-\infty}^{\infty} {\rm d}z~ze^{-z^2/2}e^{-iyz M},
\end{equation}
and its discretized version we employed here
\begin{equation}\label{Eq: linearsystemmain}
M^{-1} \sim \frac{i}{\sqrt{2 \pi}} \sum_{j=0}^{J-1} \Delta_y \sum_{k=-K}^K \Delta_z z_k e^{-\frac{z_k^2}{2}} e^{-i M y_jz_k},
\end{equation}
where $z_k=k\Delta_z$, $y_j=j\Delta_y$, $J= \Theta (\frac{\kappa}{\varepsilon} \mathrm{log}(\frac{\kappa}{\varepsilon}) )$, $K= \Theta ({\kappa} \mathrm{log}(\frac{\kappa}{\varepsilon}) )$, $\Delta_y= \Theta (\varepsilon/ \sqrt{\mathrm{log}(\frac{\kappa}{\varepsilon})})$, and $\Delta_z= \Theta (\kappa ^{-1}/\sqrt{\mathrm{log}(\frac{\kappa}{\varepsilon})})$ for the approximation error $\varepsilon$.
For implementing \purple{the LCU Eq.~\eqref{Eq: linearsystemmain} in the form of Eq.~\eqref{eq:lcu_before_normalize}}, we need to evaluate the normalization factor \purple{corresponding to $\|c\|_1$ in Eq.~\eqref{eq:action_of_LCU}}, which can be described as~\cite{childs2017quantum} 
\begin{equation}
\begin{aligned}
\|c\|_1 &= \frac{1}{\sqrt{2\pi}} \sum_{j=0}^{J-1} \Delta_y \sum_{k=-K}^K \Delta_z |z_k|e^{-z_k^2/2} \\
&= \Theta\left(\kappa \sqrt{\mathrm{log}(\kappa/\varepsilon)}\right).
\end{aligned}
\end{equation}

Then, we evaluate the reduction factor for the conventional LCU (Ref.~\cite{childs2017quantum} without amplitude amplification), randomized LCU~\cite{wang2024qubit,Chakraborty2024implementingany}, and {our hybrid LCU}.
For {our hybrid} LCU, we partition the target normalized LCU operator as
\begin{equation}
\begin{aligned}
&\frac{i}{\sqrt{2 \pi}} \sum_{j=0}^{J-1} \frac{\Delta_y}{ \|c\|_1} \sum_{k=-K}^K \Delta_z z_k e^{-\frac{z_k^2}{2}} e^{-i M y_jz_k}= \sum_{j=0}^{J-1} q_j \purple{K_j},
\end{aligned}
\end{equation}
where $q_j \propto \frac{\Delta_y}{\|c\|_1}$ and $K_j \propto \sum_{k=-K}^K \Delta_z z_k e^{-\frac{z_k^2}{2}} e^{-i M y_j z_k}$. 
Then, we perform our {hybrid} LCU by randomly sampling the operator $K_j$ with the probability $q_j$.
The corresponding $R$ becomes
\begin{equation}
\begin{aligned}
R_{\rm int} &= \sum_j q_j \bra{b} \purple{K_j^\dag K_j} \ket{b}= \mathcal{O}(\mathrm{log}^{-1/2}(\kappa/\varepsilon)).
\end{aligned}
\end{equation}
The derivation of $R_{\rm int}$ is provided in Appendix~\ref{apdx:QLSS}.
Note that the $R$ with the conventional LCU implementation is
\begin{equation}
R_{\rm conv}= \mathcal{P}= \mathcal{O}(\mathrm{log}^{-1} (\kappa/\varepsilon)),
\end{equation}
while the randomized LCU implementation leads to $R_{\rm rand}=1$.
Refer to the Appendix~\ref{apdx:QLSS} for detailed derivations.
Thus, it is clear that {our hybrid} LCU interpolates between the randomized and the conventional LCU with respect to the sampling cost, while the ancilla count is given by $\mathcal{O}(1)$, $\mathcal{O}(\log K)$, and $\mathcal{O}(\log(JK))$ for the randomized, hybrid, and conventional LCU, respectively; see the summary in Table~\ref{tab:QLSS_comparison}.

\begin{table}[t]
  \centering
  \renewcommand{\arraystretch}{1.5}
  \begin{tabular}{lccc}
    \hline
    & Random & Hybrid & Coherent \\
    \hline
    Reduction factor $R$ & $1$ & $\mathcal{O}(\log^{-1/2}(\frac{\kappa}{\varepsilon}))$ & $\mathcal{O}(\log^{-1}(\frac{\kappa}{\varepsilon}))$ \\
    Ancilla count    & $\mathcal{O}(1)$   & $\tilde{\mathcal{O}}(\log(\kappa))$  & $\tilde{\mathcal{O}}(\log(\frac{\kappa}\varepsilon))$  \\
    \hline
  \end{tabular}
  \renewcommand{\arraystretch}{1.0}
  \caption{Comparison across randomized LCU, hybrid LCU, and coherent LCU methods for the quantum linear system solver. 
  Here, $\kappa,\varepsilon$ are the condition number and the required accuracy, respectively.
  $\tilde{\mathcal{O}}$ hides $\log\log$ factors.}
  \label{tab:QLSS_comparison}
\end{table}

\subsection{\purple{Ground state property estimation}}

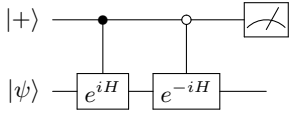
\begin{figure}[tb]
\begin{tabular}{l}
\\
\Qcircuit @C=1em @R=1.5em {
  \lstick{\ket{+}} & \ctrl{1}& \ctrlo{1}&\meter\\
  \lstick{\ket{\psi}}&\gate{e^{iH}} &\gate{e^{-iH}}&\qw\\
  }
\\
\end{tabular}
\caption{\purple{Quantum circuit for the cosine filter $\cos^T(H)$ ($T=1$).
The measurement is performed by the Pauli $X$ basis.
By repeating this circuit $T$ times, we obtain the filtered state $\cos^T(H)\ket{\psi}/\|\cos^T(H)\ket{\psi}\|$ if the measurement outcomes are all $+1$.}}
\label{fig:circuit_gprep}
\end{figure}

Here, we consider the combination of the coherent ground state preparation method~\cite{ge2019faster} using the cosine filter and the universal algorithmic cooling method~\cite{zeng2021universal}. 
Our method corresponds to the decomposition of the target LCU operator as
\begin{equation}\label{eq:cos_hybrid_filter}
K_{\rm LCU}= \sum_k q_k U_k K_{\rm cos},
\end{equation}
where $\sum q_k U_k$ for a unitary operator $U_k$ {and a probability $q_k$} is a LCU operator for virtually projecting states onto the ground state of a Hamiltonian constituted by the universal algorithmic cooling method.
Also, $K_{\rm cos}$ is the LCU operator of a cosine filter we will mention later. Note that the universal algorithmic cooling method allows for the continuous limit $\int {\rm d}q~q(U) U$ {for some continuous probability distribution $q(U)$}. Then, the reduction factor $R$ for Eq.~\eqref{eq:cos_hybrid_filter} reads 
\begin{equation}
\begin{aligned}
R &= \sum_k q_k \tr[(U_k K_{\rm cos})^\dag U_k K_{\rm cos}\rho] \\
&= \tr[K_{\rm cos}^\dag K_{\rm cos} \rho],
\end{aligned}
\end{equation}
which only depends on the LCU operator in the cosine filter method {and an initial state $\rho$}.
As detailed below, while the reduction factor can be enhanced by the cosine filter method, other parameter dependence (e.g., accuracy and spectral gap) in the gate complexity can be further improved by the universal algorithmic cooling method. 
This indicates that the combination of these two methods allows for balancing both the sampling overhead and the advantageous scaling realized by the universal algorithmic cooling method. Note that our proposed method is hardware-efficient as we only use {Hamiltonian evolution unitaries with a single-ancilla control qubit.}

We first briefly review the cosine filter method proposed in Ref.~\cite{ge2019faster}.
Let $\tilde{H}$ be a target Hamiltonian whose spectrum in $[0,1]$ by some normalization.
We assume that $\tilde{H}$ has a unique ground state with the eigenvalue $\lambda_0$, which is separated from the other eigenvalues by at least a known value $\Delta$.
Also, we employ an initial state $\ket{\psi}$ that is ensured to be $|\langle \lambda_0\ket{\psi}|^2\geq {p_0}$ for the ground state $\ket{\lambda_0}$ and a known lower bound $p_0> 0$.
The goal is to prepare $\ket{\lambda_0}$ with accuracy $\varepsilon$, when an estimate $E$ for $\lambda_0$ is given.
As well as Ref.~\cite{ge2019faster}, we assume that $\lambda_0-E\leq \mathcal{O}(\Delta/\log((p_0\varepsilon)^{-1})$.
In the following, we refer to the total evolution time of the Hamiltonian $\tilde{H}$ as the \textit{total time complexity}.

To construct the cosine filter for a positive integer $T$ and a Hermite operator $H$, we can employ the quantum circuit in Fig.~\ref{fig:circuit_gprep}.
That is, by repetitive application of controlled $e^{\pm iH}$ and post-selection of $\ket{+}$ for the single-ancilla qubit $T$ times, we can project the input state to $\mathrm{cos}^T(H)\ket{\psi}/\| \mathrm{cos}^T(H)\ket{\psi}\|$. 
The projection probability is given by $\bra{\psi} \mathrm{cos}^{2T}(H) \ket{\psi}$.
Note that this circuit can be considered as a hardware-efficient \purple{coherent LCU implementation of $K_{\rm LCU}=\cos^T(H)$} only with a repetitive use of an ancilla qubit and controlled time evolutions.
\purple{Now, we take $H:=\tilde{H}-(E-\tau)\bm{1}$, where $\tau$ is a small value specified below.
According to the analysis of Ref.~\cite{ge2019faster},
to certify}
\begin{equation}
\begin{aligned}
\left\| \frac{\mathrm{cos}^T({H})\ket{\psi}}{\|\mathrm{cos}^T({H})\ket{\psi} \|} -\ket{\lambda_0}\right\|=\mathcal{O}(\varepsilon)
\end{aligned}
\end{equation}
and $\|\mathrm{cos}^T({H})\ket{\psi} \|= \Omega(\sqrt{p_0})$, we need 
\begin{equation}
\begin{aligned}
T= \Theta \bigg(\Delta^{-2} {\mathrm{log}^2((p_0 \varepsilon)^{-1})}\bigg)
\end{aligned}
\label{Eq: M}
\end{equation}
and 
\begin{equation}
    \tau = \Theta \left( \frac{\Delta}{\mathrm{log}((p_0 \varepsilon)^{-1})}\right).
\end{equation}
Although this choice is sufficient, Ref.~\cite{ge2019faster} further improves the maximal evolution time per circuit \purple{(without amplitude amplification)} to
$\Theta ( \Delta^{-1} \mathrm{log}^{3/2}((p_0 \varepsilon)^{-1}))$ by using a suitable truncation of the Fourier expansion of $\mathrm{cos}^T({H})$.
This improved complexity is quadratically better than Eq.~\eqref{Eq: M} with respect to $\Delta$, but this improved method requires the costly PREPARE circuit with additional multiple ancilla qubits.

\purple{Our hybrid method approaches the scaling of $\Delta^{-1}$ in the limit $\varepsilon\to 0$, while maintaining the hardware-efficient implementation and the $p_0^{-1}$ scaling of the coherent LCU.}
Now, we proceed to the explanation of our hybrid method {with Eq.~\eqref{eq:cos_hybrid_filter}}. 
We first apply the cosine filter method {with $K_{\rm cos}=\cos^{T'}(H)$} by choosing $T'$ such that
\begin{equation}
\begin{aligned}
\left\| \frac{\mathrm{cos}^{T'}({H})\ket{\psi}}{\|\mathrm{cos}^{T'}({H})\ket{\psi} \|} -\ket{\lambda_0}\right\|=\mathcal{O}(p_0)
\end{aligned}
\end{equation}
and $\|\mathrm{cos}^{T'}({H})\ket{\psi} \|= \Omega(\sqrt{p_0})$ hold.
From Eq.~\eqref{Eq: M}, we can take $T'= \Theta(\Delta^{-2} \mathrm{log}^2(p_0^{-1}) )$.
In addition, we admit a more rough estimate of the ground energy with accuracy $\mathcal{O}(\Delta/\mathrm{log}(p_0^{-1}))$.
Next, we apply the universal algorithmic cooling method~\cite{zeng2021universal} for further improving the ground state preparation accuracy. 
We denote the projected state as $\ket{\psi'}= \mathrm{cos}^{T'}({H})\ket{\psi}/\|\mathrm{cos}^{T'}({H})\ket{\psi} \|$ in the previous step. \purple{Note that $1-|\langle \lambda_0\ket{\psi'}|=\mathcal{O}(p_0^2)$ holds.} 
We then require
\begin{equation}
\begin{aligned}
\bigg\| \frac{e^{-\frac{1}{2} \sigma^2 ({H}')^2}\ket{\psi'}}{\|e^{-\frac{1}{2} \sigma^2 ({H}')^2}\ket{\psi'} \|} -\ket{\lambda_0} \bigg\| =\mathcal{O}(\varepsilon)
\end{aligned}
\end{equation}
and \purple{$\| e^{-\frac{1}{2} \sigma^2 ({H}')^2}\ket{\psi'} \| =\Omega(|\langle \lambda_0\ket{\psi'}|)$},
where ${H}'= \tilde{H}- (E'-\tau')\bm{1}$ with $\tau'$ being a small real constant and $\sigma^2$ is the variance of the Gaussian filter function in the universal algorithmic cooling method. 
From Ref.~\cite{zeng2021universal}, we need
\begin{equation}
\begin{aligned}
\tau'&= \mathcal{O} \bigg( \frac{\Delta}{\sqrt{\mathrm{log}({p_0}/{\varepsilon})}} \bigg), \\
\delta'' &= \mathcal{O}\bigg( \frac{\Delta}{\sqrt{\mathrm{log}({p_0}/{\varepsilon})}} \bigg)
\end{aligned}
\end{equation}
and
\begin{equation}
\sigma^2 = \mathcal{O}\big(\Delta^{-2}\mathrm{log}({p_0/\varepsilon}) \big),
\end{equation}
where $\delta''$ is the required accuracy for the ground energy estimation $E'$. Notice that the required maximum time evolution is $\mathcal{O}(\sigma\sqrt{\log(1/\varepsilon)})$, \purple{where the log factor $\sqrt{\log(1/\varepsilon)}$ comes from the truncation of the Fourier transformation for the Gaussian filter~\cite{zeng2021universal}.
From the discussion of Section~\ref{sec:multi-round_case}, the $R$ becomes $R=\bra{\psi} \mathrm{cos}^{2T'}(H) \ket{\psi}$ due to the coherent projection effect in the first cosine filter.}

\purple{The total time complexity for the expectation value estimation with a unit-norm observable reads
\begin{align}
&\mathcal{O}\bigg(p_0^{-1} \varepsilon^{-2}\left(\Delta^{-2}\mathrm{log}^2(p_0^{-1}) +  \Delta^{-1}\sqrt{\mathrm{log}(1/\varepsilon) \mathrm{log}(p_0/\varepsilon)}\right)\bigg)
\end{align}
due to the $R$ evaluation and Eq.~\eqref{eq:main_sample_ratioest} with fixed $\delta$.
Now, denoting $\varepsilon= p_0^{\alpha}$, we obtain the total time complexity as
\begin{equation}
\mathcal{O}\left(p_0^{-1} \Delta^{-1}\varepsilon^{-2} \mathrm{log}(1/\varepsilon) \left(\frac{\Delta^{-1} \mathrm{log} (1/\varepsilon)}{\alpha^2}+ \sqrt{1-\alpha^{-1}}\right) \right)
\end{equation}
Therefore, for $\varepsilon \ll p_0$, i.e., $\alpha \rightarrow \infty$, 
the total complexity asymptotically approaches 
\begin{equation}\label{eq:asym_complexity}
\mathcal{O}\bigg(p_0^{-1} \Delta^{-1}\varepsilon^{-2}{\mathrm{log}(1/\varepsilon)} \bigg),
\end{equation}
which inherits both advantageous scalings of the physical and virtual ground-state preparation methods.
In addition, the required accuracy for the energy estimation is relaxed to $\mathcal{O}(\Delta/\sqrt{\mathrm{log}(p_0/\varepsilon)})$ from $\mathcal{O}(\Delta/\mathrm{log}((p_0\varepsilon)^{-1}))$.
}
\purple{We note that the scaling Eq.~\eqref{eq:asym_complexity} is essentially achieved by the method in Ref.~\cite{dong2022ground}, while it additionally needs to perform a modulation of the ancilla state during the controlled time evolution.}

\subsection{Quantum error detection}

Here, we discuss the application of our protocol to quantum error detection~\cite{knill2000theory}. Recently, the virtual quantum error detection protocol has been proposed, which virtually projects a quantum state onto the code subspace in a similar vein to the random LCU algorithm~\cite{tsubouchi2023virtual,mcclean2020decoding,cai2021quantum}. Suppose that we aim to apply the code projector $P={|\mathbb{S}|^{-1}} \sum_{S \in \mathbb{S}} S$, where $\mathbb{S}$ is the set of stabilizers and $S \in \mathbb{S}$. By substituting the unitary operators constituting the target operator $K_{\rm LCU}=\sum_i p_i U_i$ with the stabilizer operators for the projector, we can similarly perform the projection onto the code space, allowing for computing the expectation values in a hardware-efficient manner with a constant depth, regardless of the code structures. We can also apply this method to the symmetries in bosonic codes, e.g., rotation- and translation-symmetric bosonic codes~\cite{endo2025quantum,endo2024projective,anai2024unitary}.

Now, we introduce the intermediate implementation of quantum error detection realized by our protocol. Suppose the target code projector can be written as $P_{C}= P_{C}^{(1)} P_{C}^{(2)}$ by using the projectors $P_{C}^{(1)}$ and $P_{C}^{(2)}$. For example, the projector onto the Steane code $P_{\rm ST}$ can be expanded by $P_{\rm ST}=P_{Z} P_{X}$~\cite{steane1996simple}. Here, $P_{Z}$ is expanded by Z Pauli stabilizers, with $P_{X}$ being expanded by $X$ Pauli stabilizers. Then, we expand $P_{C}= {|\mathbb{S}_1 |^{-1}}\sum_{S \in \mathbb{S}_1} S P_{C}^{(2)}$, where $\mathbb{S}_1$ is the set of stabilizers constituting the projector $P_C^{(1)}$. We then perform our protocol by setting $q_S =|\mathbb{S}_1|^{-1}$ and $K_S = S P_C^{(2)}$ for the intermediate decomposition $K_{\rm LCU}=\sum_S q_S K_S$. Note that this implementation corresponds to the case where we perform the conventional quantum error detection for the projector $P_C^{(2)}$ and subsequently perform the virtual quantum error detection for $P_C^{(1)}$. Then, the reduction factor reads
\begin{equation}
\begin{aligned}
R_{\rm QED}&= \frac{1}{|\mathbb{S}_1|}\sum_{S \in \mathbb{S}_1} \mathrm{Tr}[(S P_C^{(2)})^\dag S P_C^{(2)} \rho] \\
&= \mathrm{Tr}[P_C^{(2)} \rho]
\end{aligned}
\end{equation}
for the input state $\rho$.

This result indicates that the reduction factor is only determined by the projector $P_C^{(2)}$. Therefore, when errors are likely to be detected $P_C^{(2)}$, by applying the conventional quantum error detection for $P_C^{(2)}$, we can minimize the sampling overhead. On the other hand, we can ease the hardware overhead by performing virtual quantum error detection for $P_C^{(1)}$. For example, when we use the Steane code and the dominant errors are Pauli Z-type errors, we can perform the conventional error detection for $P_X$ and virtual quantum error detection for $P_Z$, balancing the sampling and hardware overheads. 
Note that such a situation is anticipated when the biased cat qubits described as $\ket{\pm}_C \propto \ket{\alpha} \pm \ket{-\alpha}$ are used as a physical qubit, where $\ket{\alpha}$ is a coherent state with $\alpha >0$. As the amplitude $\alpha$ increases, the phase flip error rate linearly increases with $\alpha^2$ while the effect of bit-flip errors is exponentially suppressed with $\alpha^2$~\cite{leghtas2015confining,touzard2018coherent}.

We numerically simulate the reduction factor $R$ for our protocol and the conventional quantum error detection by using the example of the Steane code~\cite{steane1996simple}. For a randomly generated state in the code space $\ket{\psi}_C$, we apply the noise described by $\mathcal{E}_X^{\otimes 7} \circ \mathcal{E}_Z^{\otimes 7}$ with $\mathcal{E}_X(\rho)= (1-p_X)\rho + p_X X \rho X$ and $\mathcal{E}_Z(\rho)= (1-p_Z)\rho +p_Z Z \rho Z$ for the bit flip error rate $p_X$ and phase flip error rate $p_Z$. For describing the biased error model, we set $p_X = r p_Z$ for $r<1$. We then compute the projection probability $\mathcal{P}=\tr[\mathcal{E}_X^{\otimes 7} \circ \mathcal{E}_Z^{\otimes 7} (\ket{\psi_C}\bra{\psi_C}) P_C]$ when we apply the conventional quantum error detection and our hybrid strategy by setting $P_C^{(2)}=P_X$ and $P_C^{(1)}=P_Z$. Then, we plot the reduction factor $R$ for our hybrid strategy and the projection probabilities depending on the error rate $p_X$ for $r=0.1, 0.2, 0.3$ in Fig. \ref{fig:cqed}. We find that $R-\mathcal{P}$ decreases with smaller $r$ in the simulated parameter regime and is much smaller than the projection probabilities, especially at small error rates, due to bias in the noise model. Note that the reduction factor is equivalent for all the values of $r$ because the reduction factor $R$ is only determined by the phase flip errors and the projector $P^{(2)}_{C}$. 

\begin{figure}[tb]
 \centering
 \includegraphics[width=\columnwidth]{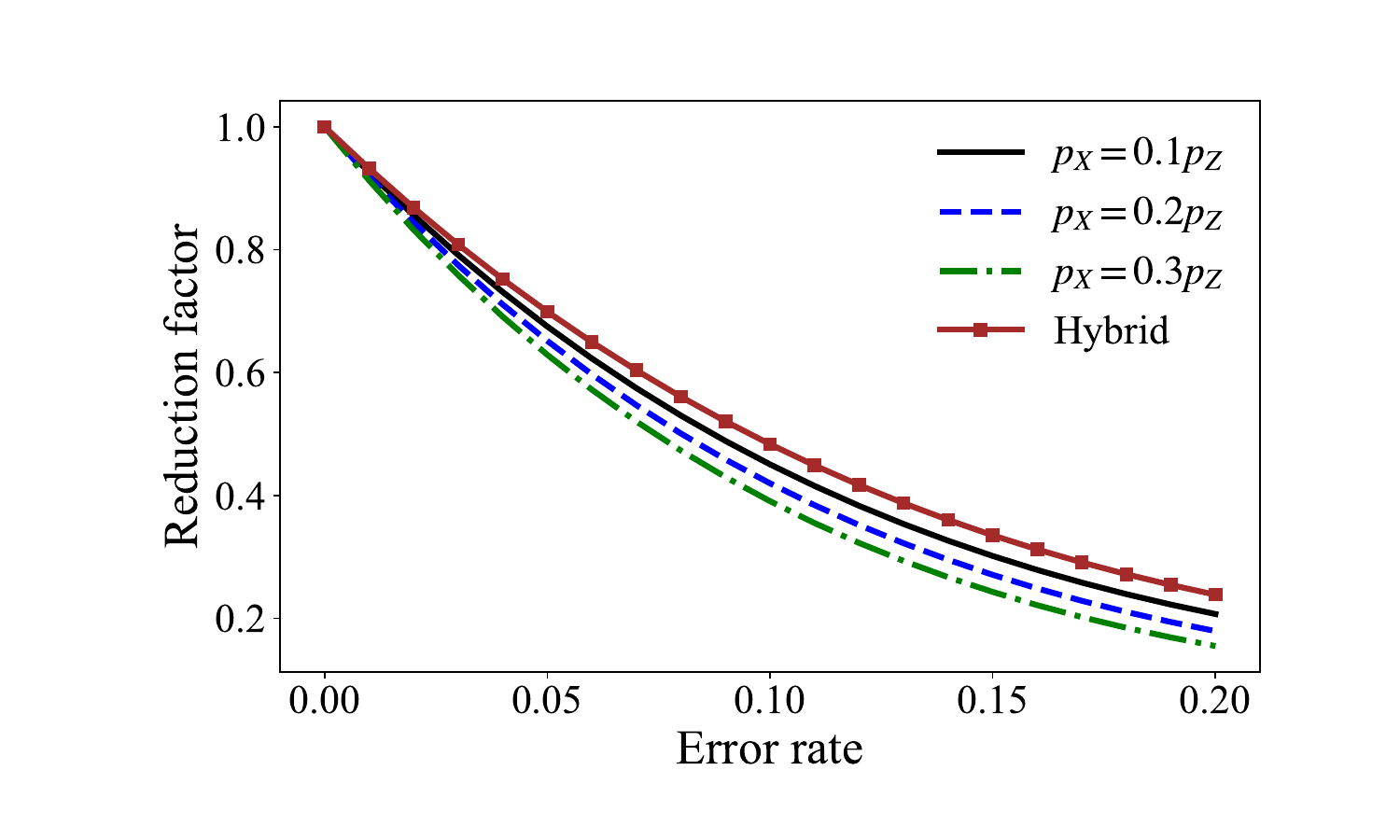}
 \caption{The projection probabilities for $r=0.1$ (black lined), $r=0.2$ (blue dashed), and $r=0.3$ (green dash-dotted), and reduction factor $R$ in our hybrid scheme (brown square) for quantum error detection in accordance with the phase-flip probability $p_Z$ for each physical qubit.}
 \label{fig:cqed}
\end{figure}

\section{Conclusion and discussion}

In this work, we clarify the trade-off relation between the circuit complexity and the number of measurements in quantum algorithms based on CP maps with LCU operations.
{This finding comes from the development and thorough analysis of a new quantum algorithm that provides an efficient building block for these high-level quantum algorithms.}
Specifically, the proposed quantum algorithm smoothly connects the coherent and virtual simulation of a given LCU operation. 
In addition, we prove that it has the monotonic reduction in the number of measurements by increasing the number of unitaries that are coherently implemented in a single circuit.

Toward the early FTQC era, our general algorithm provides a systematic way to manage the trade-off between the circuit complexity and the number of measurements.
We confirm this by providing the theoretical analysis for our algorithm's performance in the application to simulation of non-Hermitian processes, quantum linear system solver, ground state property estimation, and quantum error detection.
From the aspect of sampling overhead, the coherent LCU method corresponds to the extreme case of our algorithm with the minimal overhead.
However, our important finding is that, by making modest use of classical randomization, we can reduce the circuit complexity only with a small sampling overhead comparable to the extreme coherent case.

Here, we discuss some potential future directions. The {query} complexity of quantum algorithms has a clear hierarchy depending on the allowed quantum resources {per circuit}; we can improve the scaling of {query} complexity {from} $1/\mathcal{P}^2$ (virtual LCU) to $1/\mathcal{P}$ (coherent LCU). 
Moreover, if we are allowed to have further quantum resources, we can realize the \purple{query} complexity $1/\sqrt{\mathcal{P}}$ by combining the coherent LCU and the amplitude amplification techniques~\cite{brassard2000quantum}. 
{As such,} while we showed that we can naturally connect the first two methods in our work, the connection to the amplitude amplification is worth exploring.

Another promising direction is an information-theoretic analysis of our protocol.
Although the acquisition of the ancilla-qubit measurement outcome clearly reduces the sample complexity, elucidating the information-theoretic quantities—for instance, entropic~\cite{landi2021irreversible} or correlation-based measures~\cite{peres1996separability}—that capture this reduction may pave the way for a more systematic optimization of both quantum and classical resources in early-fault-tolerant quantum algorithms.

Finally, the quantum singular value transformation (QSVT) or its variants~\cite{Low2019hamiltonian,gilyen2019quantum,dong2022ground} are alternative methods to construct a filter. 
The combination of such a filter design and our framework may provide a more resource-efficient implementation of filters. 
For example, if the constructed filter in QSVT is not perfect, our method may be used to compensate for the algorithmic error derived from incomplete filters by partially using the virtual simulation of the recovery map.
\purple{Indeed, several recent investigations~\cite{zeng2022simple,kato2024exponentially,chakraborty2025quantum,wang2025randomized} highlight the importance of devising such algorithmic error-recovery operations.}

\section*{Acknowledgments}

K.W. was supported by JSPS KAKENHI Grant Number JP24KJ1963. This work was supported by JST [Moonshot R\&D] Grant No.~JPMJMS2061; MEXT Q-LEAP, Grant No.~JPMXS0120319794 and No.~JPMXS0118067285; JST CREST Grant No.~JPMJCR23I4 and No.~JPMJCR25I4.

\bibliography{bib}

\onecolumngrid

\appendix
\newpage
\onecolumngrid

\section{Estimation of the ratio of expectations}\label{apdx:est_ratio_samplenum}

\subsection{Known upper bound of variance}
In this appendix, we clarify several properties of estimation schemes for target values which can be written in the form of the ratio of two expectations.
In particular, for evaluating the total number of samples required to achieve some target precision, we can employ the Bernstein's inequality~\cite{vershynin2018high}: let $X_1,X_2,...,X_N$ be independent and identically distributed (iid) random variables, with mean $\mu_{X}=\mathbb{E}[X_i]$ and upper bound of variance $\tilde{\sigma}^2_{X}\geq {\rm Var}[X_i]$, satisfying $|X_i|\leq c~(i=1,2,...,N)$ for some positive value $c$.
Then, for every $\epsilon> 0$, we have
\begin{equation}\label{eq:Bernstein}
        {\rm Pr}\left(\left|\overline{X}_N-\mu_{X}\right|\geq \epsilon\right)\leq 2\exp\left(-\frac{N\epsilon^2}{2\tilde{\sigma}^2_X  +4c\epsilon/3}\right),
\end{equation}
where we define $\overline{X}_N:=(1/N)\sum_{i=1}^N X_i$.
Note that the Bernstein's inequality takes care of the variance of random variable unlike the Hoeffding's inequality, which is often used in the analysis for quantum algorithms.

Now, we prove the following lemma, which can be applied to the estimation schemes considered in this work.
\begin{lemma}\label{lem:ratio_confidenceint}
    Let $(X,Y)$ be random variables with means $\mu_{X}$, $\mu_{Y}~(\mu_Y\neq 0)$ and upper bounds of variances $\tilde{\sigma}^2_X\geq {\rm Var}[X]$, $\tilde{\sigma}^2_Y\geq {\rm Var}[Y]$, respectively.
    Also, we assume that $|X|$ and $|Y|$ are upper bounded by some positive value $c$, \purple{and $|\mu_X/\mu_Y|\leq c'$.} 
    Then, for any additive error \purple{$\epsilon\in (0,2c']$} and failure probability $\delta>0$, 
    \begin{equation}\label{eq:samples_varknown}
        N=32\ln \left(\frac{4}{\delta}\right)\times 
        \max\left\{\frac{\tilde{\sigma}_X^2/|\mu_Y|^2}{\epsilon^2 }+\frac{c/(6|\mu_Y|)}{\epsilon},
        \frac{\tilde{\sigma}_Y^2/|\mu_Y|^2}{\epsilon^2}\purple{(c')}^2+\frac{c/(6|\mu_Y|)}{\epsilon}\purple{c'}\right\}
    \end{equation}
    independent and identically distributed samples from $(X,Y)$ suffice to obtain an $\epsilon$-close estimate for $\mu_{X}/\mu_{Y}$ with at least $1-\delta$ probability.
\end{lemma}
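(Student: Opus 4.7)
The plan is to write the estimator's error as a two-term bound driven by the sample-mean deviations of $X$ and $Y$, and then apply Bernstein's inequality (Eq.~\eqref{eq:Bernstein}) separately to each sample mean with an appropriate precision, combined by a union bound.

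First I would start from the algebraic identity
\begin{equation*}
    \frac{\overline{X}_N}{\overline{Y}_N}-\frac{\mu_X}{\mu_Y}
    =\frac{\overline{X}_N-\mu_X}{\overline{Y}_N}
    -\frac{\mu_X}{\mu_Y}\cdot\frac{\overline{Y}_N-\mu_Y}{\overline{Y}_N},
\end{equation*}
which yields the triangle bound
\begin{equation*}
    \left|\frac{\overline{X}_N}{\overline{Y}_N}-\frac{\mu_X}{\mu_Y}\right|
    \le \frac{|\overline{X}_N-\mu_X|}{|\overline{Y}_N|}
    +\frac{|\mu_X/\mu_Y|}{|\overline{Y}_N|}\,|\overline{Y}_N-\mu_Y|.
\end{equation*}
The key step is then to force $|\overline{Y}_N|\ge|\mu_Y|/2$ by imposing $|\overline{Y}_N-\mu_Y|\le|\mu_Y|/2$, which yields
\begin{equation*}
    \left|\frac{\overline{X}_N}{\overline{Y}_N}-\frac{\mu_X}{\mu_Y}\right|
    \le \frac{2|\overline{X}_N-\mu_X|}{|\mu_Y|}
    +\frac{2c'}{|\mu_Y|}\,|\overline{Y}_N-\mu_Y|,
\end{equation*}
using the hypothesis $|\mu_X/\mu_Y|\le c'$.

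Next I would split the $\epsilon$ budget evenly: require $|\overline{X}_N-\mu_X|\le \epsilon|\mu_Y|/4$ and $|\overline{Y}_N-\mu_Y|\le \epsilon|\mu_Y|/(4c')$. The constraint $\epsilon\le 2c'$ in the statement guarantees $\epsilon|\mu_Y|/(4c')\le|\mu_Y|/2$, so the second condition automatically enforces the auxiliary bound used to control $|\overline{Y}_N|$. Applying Bernstein's inequality~\eqref{eq:Bernstein} to $X$ with precision $\epsilon|\mu_Y|/4$ and failure probability $\delta/2$ gives
\begin{equation*}
    N\ge \ln(4/\delta)\left(\frac{32\,\tilde{\sigma}_X^2}{\epsilon^2|\mu_Y|^2}+\frac{16c}{3\epsilon|\mu_Y|}\right)
    =32\ln(4/\delta)\left(\frac{\tilde{\sigma}_X^2/|\mu_Y|^2}{\epsilon^2}+\frac{c/(6|\mu_Y|)}{\epsilon}\right),
\end{equation*}
and the analogous application to $Y$ with precision $\epsilon|\mu_Y|/(4c')$ gives the second term in the $\max$ in Eq.~\eqref{eq:samples_varknown}, with the factor $(c')^2$ and $c'$ appearing from the smaller tolerance. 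A union bound over the two events yields total failure probability at most $\delta$, so taking $N$ to be the maximum of the two required sample sizes proves the claim.

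The main obstacle I expect is purely a bookkeeping one: carefully propagating the two tolerances through Bernstein's bound so that the $32$ prefactor and the exact constants $1/6$, $c'$, $(c')^2$ come out as stated, and verifying that the hypothesis $\epsilon\le 2c'$ is precisely what is needed so that the $Y$-concentration event implies $|\overline{Y}_N|\ge|\mu_Y|/2$ and the initial triangle inequality is valid. There is no conceptual difficulty beyond this; the argument is a standard delta-method-style reduction combined with a variance-aware concentration inequality.
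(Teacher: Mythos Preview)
Your proposal is correct and follows essentially the same approach as the paper: both arguments control $|\overline{Y}_N|\ge|\mu_Y|/2$ via the $Y$-concentration event, derive the bound $|\overline{X}_N/\overline{Y}_N-\mu_X/\mu_Y|\le \tfrac{2}{|\mu_Y|}(\epsilon_X+c'\epsilon_Y)$, set $\epsilon_X=\epsilon|\mu_Y|/4$ and $\epsilon_Y=\epsilon|\mu_Y|/(4c')$, and finish with Bernstein plus a union bound. The only cosmetic difference is that the paper obtains the two-term bound from the single-fraction identity $\overline{X}_N/\overline{Y}_N-\mu_X/\mu_Y=(\mu_Y\overline{X}_N-\mu_X\overline{Y}_N)/(\mu_Y\overline{Y}_N)$ rather than your additive split, but the resulting inequality and constants are identical.
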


\begin{proof}
    For $N$ independent and identically distributed samples $\{(X_i,Y_i)\}$, we define
    \begin{equation}
        \overline{X}_N:=\frac{1}{N}\sum_{i=1}^N X_i,~~~\overline{Y}_N:=\frac{1}{N}\sum_{i=1}^N Y_i.
    \end{equation}
    From the assumption, the Bernstein's inequality holds for any error $\epsilon_X>0$:
    \begin{equation}
        {\rm Pr}\left(\left|\overline{X}_N-\mu_{X}\right|\geq \epsilon_X\right)\leq 2\exp\left(-\frac{N\epsilon_X^2}{2\tilde{\sigma}^2_X  +4c\epsilon_X/3}\right).
    \end{equation}
    The same inequality holds for $\overline{Y}_N$ and $\epsilon_Y>0$.
    To estimate the target value $\mu_{X}/\mu_Y$, we here consider an estimator $\overline{X}_N/\overline{Y}_N$.
    If a realized value $(\overline{X}_N,\overline{Y}_N)$ satisfies
    \begin{equation}
        \left|\overline{X}_N-\mu_{X}\right|< \epsilon_X~~~\mbox{and}~~~\left|\overline{Y}_N-\mu_{Y}\right|< \epsilon_Y,~~~\epsilon_Y\in (0,|\mu_Y|/2],
    \end{equation}
    then we have
    \begin{equation}
    \left|\frac{\overline{X}_N}{\overline{Y}_N}-\frac{\mu_X}{\mu_Y}\right|\leq 2\frac{\left|\mu_Y \overline{X}_N-\mu_X\overline{Y}_N\right|}{\left|\mu_Y\right|^2}
    \leq \purple{2\frac{|\mu_Y|\epsilon_X+\epsilon_Y|\mu_X|}{|\mu_Y|^2}\leq \frac{2}{|\mu_Y|}\left(\epsilon_X+c'\epsilon_Y\right)}
    \end{equation}
    where we used the assumption $\epsilon_Y\in (0,|\mu_Y|/2]$.
    Therefore, the Bernstein's inequality and the union bound yield
    \begin{equation}
        {\rm Pr}\left(\left|\frac{\overline{X}_N}{\overline{Y}_N}-\frac{\mu_X}{\mu_Y}\right|\leq \purple{\frac{2}{|\mu_Y|}\left(\epsilon_X+c'\epsilon_Y\right)}\right)\geq 1-2\exp\left(-\frac{N\epsilon_X^2}{2\tilde{\sigma}^2_X  +4c\epsilon_X/3}\right)-2\exp\left(-\frac{N\epsilon_Y^2}{2\tilde{\sigma}^2_Y  +4c\epsilon_Y/3}\right).
    \end{equation}
    This means that, for a given target precision $\epsilon$ and an error probability $\delta$, taking $\epsilon_X,\epsilon_Y$, and $N$ as
    \begin{equation}
        \epsilon_X = \frac{\epsilon |\mu_Y|}{4},~~~\purple{\epsilon_Y = \frac{\epsilon |\mu_Y|}{4}\frac{1}{c'}\in (0,|\mu_Y|/2]},
    \end{equation}
    \begin{align}
        N&= \ln \left(\frac{4}{\delta}\right)\times \max\left\{\frac{2\tilde{\sigma}_X^2+4c\epsilon_X/3}{\epsilon_X^2},\frac{2\tilde{\sigma}_Y^2+4c\epsilon_Y/3}{\epsilon_Y^2}\right\}\notag\\[6pt]
        &= 32\ln \left(\frac{4}{\delta}\right)\times 
        \max\left\{\frac{\tilde{\sigma}_X^2/|\mu_Y|^2}{\epsilon^2 }+\frac{c/(6|\mu_Y|)}{\epsilon},
        \frac{\tilde{\sigma}_Y^2/|\mu_Y|^2}{\epsilon^2}\purple{(c')}^2+\frac{c/(6|\mu_Y|)}{\epsilon}\purple{c'}\right\},
    \end{align}
    we can obtain an $\epsilon$-close estimate of the target value with probability at least $1-\delta$:
    \begin{equation}\label{eq:confidence_int}
        {\rm Pr}\left(\left|\frac{\overline{X}_N}{\overline{Y}_N}-\frac{\mu_X}{\mu_Y}\right|\leq \epsilon\right)\geq 1-\delta.
    \end{equation}
\end{proof}

\subsection{Unknown variance (asymptotic theory)}\label{apdx:est_ratio_samplenum_B}

In Lemma~\ref{lem:ratio_confidenceint}, we may not know the (upper bound of) variance and the expectation $\mu_Y$ in advance, which appear in Eq.~\eqref{eq:samples_varknown} to estimate the number of samples $N$ required for Eq.~\eqref{eq:confidence_int}.
However, in the asymptotic regime on $N$, we can derive a confidence interval similar to Eq.~\eqref{eq:confidence_int} using only known quantities.
To this end, we here review two types of convergence of random variables~\cite{shao2008mathematical}.
\begin{itemize}
    \item \textit{Convergence in probability}: for real-valued random variables $\{X_N\}$ and $X$, we say that $X_N$ converges to $X$ in \textit{probability}, denoted by $X_N\xrightarrow{p} X$, if and only if for any $\epsilon>0$, the sequence $\{X_N\}$ satisfies 
    \begin{equation}
    \mathrm{Pr}\left(\left|X_N-X\right|>\epsilon\right)\to 0~~\mbox{as}~~N\to \infty.
    \end{equation}

    \item \textit{Convergence in distribution}: Let $\{X_N\}$ and $X$ be real-valued random variables, and let $F_N(x)$ and $F(x)$ be the corresponding distribution function (i.e., $F(x):=\mathrm{Pr}(X\leq x)$).
    We say that $X_N$ converges to $X$ in \textit{distribution}, denoted by $X_N\xrightarrow{d} X$, if and only if for any continuity point $x$ of $F(x)$, the sequence $\{F_N(x)\}$ satisfyies
    \begin{equation}
        F_N(x)\to F(x)~~\mbox{as}~~N\to \infty.
    \end{equation}
\end{itemize}
In addition, we provide two important lemmas for the proof of Lemma~\ref{lem:asymnormal_ratio}. The proof of these lemmas can be found in a textbook for statistical estimation theory e.g.,~\cite{shao2008mathematical}.
\begin{lemma}
    [Slutsky's lemma~\cite{shao2008mathematical}]
    Let $\{X_N\}$, $\{Y_N\}$, $X$ be random variables such that $X_N\xrightarrow{d} X$ and $Y_N\xrightarrow{p} c$ for some constant $c\in \mathbb{R}$. 
    Then, $X_NY_N$ converges to $cX$ in distribution, i.e., $X_NY_N\xrightarrow{d} cX$.
\end{lemma}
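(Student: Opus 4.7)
The plan is to decompose $X_N Y_N = c X_N + X_N(Y_N - c)$ and treat the two pieces separately, showing that the first converges in distribution to $cX$ and the second converges in probability to $0$, then invoke a standard rule to combine them.

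First I would handle the trivial case $c=0$ separately to avoid notational clutter: here I need only show $X_N Y_N \xrightarrow{p} 0$, which I would do using the tightness of $\{X_N\}$ (see below) together with $Y_N \xrightarrow{p} 0$. For the general case $c \neq 0$, the first piece $cX_N \xrightarrow{d} cX$ follows directly from the continuous mapping theorem applied to the continuous function $x\mapsto cx$, or equivalently from a direct manipulation of distribution functions.

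The crux of the argument lies in showing that $X_N(Y_N - c) \xrightarrow{p} 0$. The key ingredient here is that convergence in distribution $X_N \xrightarrow{d} X$ implies that $\{X_N\}$ is \emph{tight}, meaning for every $\eta>0$ there is an $M>0$ such that $\mathrm{Pr}(|X_N|>M) < \eta$ uniformly in $N$. Given tightness, for any $\varepsilon, \eta > 0$ I would pick $M$ so that $\sup_N \mathrm{Pr}(|X_N|>M) < \eta/2$, and then bound
\begin{equation}
\mathrm{Pr}\bigl(|X_N(Y_N-c)| > \varepsilon\bigr) \le \mathrm{Pr}(|X_N| > M) + \mathrm{Pr}\bigl(|Y_N-c| > \varepsilon/M\bigr),
\end{equation}
where the second term vanishes as $N\to\infty$ because $Y_N \xrightarrow{p} c$. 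This yields $X_N(Y_N-c)\xrightarrow{p} 0$.

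Finally, I would combine the two pieces using the auxiliary fact that if $A_N \xrightarrow{d} A$ and $B_N \xrightarrow{p} 0$, then $A_N + B_N \xrightarrow{d} A$. This is a standard corollary proved by comparing distribution functions at continuity points of the limit: writing $\mathrm{Pr}(A_N+B_N \le t) \le \mathrm{Pr}(A_N \le t+\delta) + \mathrm{Pr}(|B_N|>\delta)$ and the matching lower bound, then letting $N\to\infty$ and $\delta\to 0$ through continuity points. Applied with $A_N = cX_N$ and $B_N = X_N(Y_N - c)$, this delivers $X_N Y_N \xrightarrow{d} cX$. The main obstacle is the tightness step: one must carefully justify that the $M$ works uniformly in $N$, which ultimately rests on the fact that the distribution function of $X$ can be made close to $0$ and $1$ at its continuity points, and $F_{X_N}$ approximates $F_X$ there.
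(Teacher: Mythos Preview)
Your proposal is correct and follows the standard textbook argument. Note that the paper does not supply its own proof of this lemma: it explicitly states that ``the proof of these lemmas can be found in a textbook for statistical estimation theory e.g.,~\cite{shao2008mathematical}'' and moves on. Your decomposition $X_NY_N = cX_N + X_N(Y_N-c)$, the tightness argument for $X_N(Y_N-c)\xrightarrow{p}0$, and the converging-together step are precisely the route taken in standard references such as the one the paper cites, so there is nothing to contrast.
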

\begin{lemma}
    [Delta method~\cite{shao2008mathematical}]
    Let $\{\bm{X}_N\}$ and $\bm{Y}$ be $d$-dimensional random variables satisfying
    \begin{equation}
        a_N (\bm{X}_N-\bm{c})\xrightarrow{d} \bm{Y},
    \end{equation}
    where $\bm{c}\in \mathbb{R}^d$ and $\{a_N\}$ is a sequence of positive numbers with $a_N\to \infty$ (as $N\to \infty$).
    If $\bm{Y}$ follows a $d$-dimensional centered normal distribution $\mathcal{N}_d(0,\Sigma)$ with a covariance matrix $\Sigma$, then
    \begin{equation}
        a_N\left[g(\bm{X}_N)-g(\bm{c})\right]\xrightarrow{d} \mathcal{N}_1\left(0,\left[\nabla g(\bm{c})\right]^T\Sigma \left[\nabla g(\bm{c})\right]\right)
    \end{equation}
    for any function $g:\mathbb{R}^d\to \mathbb{R}$ differentiable at $\bm{c}$.
    Here, $\nabla g$ is the gradient of $g$.
\end{lemma}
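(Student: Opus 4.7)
The plan is to combine a first-order Taylor expansion of $g$ at $\bm{c}$ with Slutsky's lemma and the continuous mapping theorem. Intuitively, $g(\bm{X}_N)-g(\bm{c})\approx \nabla g(\bm{c})^T(\bm{X}_N-\bm{c})$ when $\bm{X}_N$ is close to $\bm{c}$, so multiplication by $a_N$ should transport the asymptotic law $\bm{Y}$ through the linear functional $\nabla g(\bm{c})^T$. Since $\bm{Y}\sim\mathcal{N}_d(0,\Sigma)$, the image $\nabla g(\bm{c})^T\bm{Y}$ is a centered normal with variance $[\nabla g(\bm{c})]^T\Sigma[\nabla g(\bm{c})]$, matching the target distribution.

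First I would promote the distributional hypothesis to the consistency $\bm{X}_N\xrightarrow{p}\bm{c}$. Because $a_N(\bm{X}_N-\bm{c})\xrightarrow{d}\bm{Y}$, the sequence $\{a_N(\bm{X}_N-\bm{c})\}$ is tight, i.e., $O_p(1)$. Dividing by the deterministic sequence $a_N\to\infty$ yields $\bm{X}_N-\bm{c}\xrightarrow{p}\bm{0}$; this is the standard ``$O_p(1)/a_N=o_p(1)$'' step.

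Next I would invoke differentiability of $g$ at $\bm{c}$ to write
\[
g(\bm{X}_N)-g(\bm{c})=\nabla g(\bm{c})^T(\bm{X}_N-\bm{c})+r(\bm{X}_N),
\]
with a remainder $r$ satisfying $r(\bm{x})/\|\bm{x}-\bm{c}\|\to 0$ as $\bm{x}\to\bm{c}$ (using the convention that this ratio is $0$ at $\bm{x}=\bm{c}$). Multiplying by $a_N$ splits the target into two pieces:
\[
a_N[g(\bm{X}_N)-g(\bm{c})]=\nabla g(\bm{c})^T\cdot a_N(\bm{X}_N-\bm{c})+a_N r(\bm{X}_N).
\]
The first piece converges in distribution to $\nabla g(\bm{c})^T\bm{Y}$ by the continuous mapping theorem applied to the linear map $\bm{y}\mapsto\nabla g(\bm{c})^T\bm{y}$, which is precisely $\mathcal{N}_1(0,[\nabla g(\bm{c})]^T\Sigma[\nabla g(\bm{c})])$.

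Finally I would show the second piece vanishes in probability by writing
\[
a_N r(\bm{X}_N)=\|a_N(\bm{X}_N-\bm{c})\|\cdot\frac{r(\bm{X}_N)}{\|\bm{X}_N-\bm{c}\|}.
\]
The norm factor is $O_p(1)$ by hypothesis, while $r(\bm{X}_N)/\|\bm{X}_N-\bm{c}\|\xrightarrow{p}0$ by the continuous mapping theorem combined with the consistency $\bm{X}_N\xrightarrow{p}\bm{c}$ established above. A product of $O_p(1)$ and $o_p(1)$ is $o_p(1)$, so $a_N r(\bm{X}_N)\xrightarrow{p}0$. Slutsky's lemma applied to the decomposition then delivers the claimed convergence. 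The main obstacle I anticipate is the careful handling of the remainder on the possibly positive-probability event $\{\bm{X}_N=\bm{c}\}$ and the precise little-$o$ formulation for a function differentiable only at the single point $\bm{c}$; these are routine real-analysis points but easy to bungle without explicit bookkeeping.
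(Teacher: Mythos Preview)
Your argument is correct and is precisely the standard textbook proof of the multivariate delta method. The paper does not supply its own proof of this lemma; it merely cites the textbook~\cite{shao2008mathematical}, so there is nothing to compare against beyond noting that your Taylor-expansion-plus-Slutsky route is exactly what one finds in such references.
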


Combining the Delta method with the central limit theorem, we can show that the ratio estimator $\overline{X}_N/\overline{Y}_N$ (with an appropriate rescaling factor) asymptotically follows the standard normal distribution. For simplicity, we assume $X$ and $Y$ are independent.
\begin{lemma}\label{lem:asymnormal_ratio}
    Let $(X,Y)$ be independent random variables with means $\mu_X,\mu_Y(\neq 0)$ and standard deviations $\sigma_X:=\sqrt{{\rm Var}[X]},\sigma_Y:=\sqrt{{\rm Var}[Y]}$, respectively.
    For $N$ independent samples $\{(X_i,Y_i)\}_{i=1}^N$ of the random variable $(X,Y)$, we write sample means and sample variances of $X,Y$ as $\overline{X}_N,\overline{Y}_N, \overline{\sigma}^2_{X,N},\overline{\sigma}^2_{Y,N}$, respectively.
    Then,
    \begin{equation}\label{eq:dconv_ratio}
        \frac{\sqrt{N}}{\hat{\sigma}_{\rm ratio}}\left(\frac{\overline{X}_N}{\overline{Y}_N}-\frac{\mu_X}{\mu_Y}\right)\xrightarrow{d} \mathcal{N}(0,1)~~\mbox{as}~~N\to \infty
    \end{equation}
    holds, where $\mathcal{N}(0,1)$ is the standard normal distribution. $\hat{\sigma}_{\rm ratio}$ is defined as
    \begin{equation}\label{eq:hatvarratio}
        \hat{\sigma}_{\rm ratio}:= \sqrt{\frac{\overline{\sigma}^2_{X,N}}{(\overline{Y}_N)^2} +\frac{(\overline{X}_N \overline{\sigma}_{Y,N})^2}{(\overline{Y}_N)^4}}.
    \end{equation}
    Furthermore, $\hat{\sigma}_{\rm ratio}^2\xrightarrow{p}\sigma_{\rm ratio}^2$ as $N\to \infty$, and $\sigma_{\rm ratio}^2$ is defined as
    \begin{equation}
        \sigma^2_{\rm ratio} := \frac{\sigma_X^2}{\mu_Y^2} +\frac{\mu_X^2}{\mu_Y^4}\sigma_Y^2.
    \end{equation}
\end{lemma}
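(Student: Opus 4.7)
The plan is to combine the classical two-step strategy (multivariate CLT + Delta method) with Slutsky's lemma to normalize by the estimated standard deviation.

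First I would invoke the multivariate central limit theorem for the i.i.d. bivariate samples $\{(X_i,Y_i)\}_{i=1}^N$. Because $X$ and $Y$ are assumed independent, the covariance matrix of $(X,Y)$ is diagonal, so
\begin{equation*}
    \sqrt{N}\left(\begin{pmatrix}\overline{X}_N\\ \overline{Y}_N\end{pmatrix} - \begin{pmatrix}\mu_X\\ \mu_Y\end{pmatrix}\right)\xrightarrow{d}\mathcal{N}_2(0,\Sigma),\quad \Sigma=\mathrm{diag}(\sigma_X^2,\sigma_Y^2).
\end{equation*}
Next I would apply the Delta method cited just above the lemma with $g(x,y)=x/y$, which is differentiable at $(\mu_X,\mu_Y)$ since $\mu_Y\neq 0$. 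The gradient evaluated at $(\mu_X,\mu_Y)$ is $\nabla g=(1/\mu_Y,-\mu_X/\mu_Y^2)^T$, so
\begin{equation*}
    [\nabla g]^T\Sigma[\nabla g]=\frac{\sigma_X^2}{\mu_Y^2}+\frac{\mu_X^2\sigma_Y^2}{\mu_Y^4}=\sigma_{\rm ratio}^2,
\end{equation*}
yielding $\sqrt{N}\,(\overline{X}_N/\overline{Y}_N-\mu_X/\mu_Y)\xrightarrow{d}\mathcal{N}(0,\sigma_{\rm ratio}^2)$.

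Then I would establish $\hat{\sigma}_{\rm ratio}^2\xrightarrow{p}\sigma_{\rm ratio}^2$. The weak law of large numbers gives $\overline{X}_N\xrightarrow{p}\mu_X$, $\overline{Y}_N\xrightarrow{p}\mu_Y$, $\overline{\sigma}_{X,N}^2\xrightarrow{p}\sigma_X^2$, and $\overline{\sigma}_{Y,N}^2\xrightarrow{p}\sigma_Y^2$; applying the continuous mapping theorem to the formula~\eqref{eq:hatvarratio} (which is continuous at the limit point since $\mu_Y\neq 0$) delivers the stated convergence. Finally, Slutsky's lemma (with $X_N$ the Delta-method conclusion and $Y_N=\sigma_{\rm ratio}/\hat{\sigma}_{\rm ratio}\xrightarrow{p}1$) immediately gives~\eqref{eq:dconv_ratio}.

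There is no substantive obstacle — each ingredient (CLT, Delta method, LLN, continuous mapping, Slutsky) is already available or cited. The only minor point worth spelling out is that $\overline{Y}_N$ is almost surely nonzero for all sufficiently large $N$, which follows from $\overline{Y}_N\xrightarrow{p}\mu_Y\neq 0$, so the ratio estimator is well defined on an event of probability tending to one; this is enough to make all distributional statements rigorous.
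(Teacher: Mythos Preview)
Your proposal is correct and follows essentially the same route as the paper's proof: multivariate CLT with diagonal $\Sigma$, Delta method for $g(x,y)=x/y$ to get $\sigma_{\rm ratio}^2$, law of large numbers plus continuous mapping for $\hat{\sigma}_{\rm ratio}^2\xrightarrow{p}\sigma_{\rm ratio}^2$, and Slutsky to finish. Your additional remark about $\overline{Y}_N$ being eventually nonzero is a welcome clarification not made explicit in the paper.
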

\noindent
We here make some remarks on Lemma~\ref{lem:asymnormal_ratio}.
The convergence of Eq.~\eqref{eq:dconv_ratio} means that the estimator $\frac{\overline{X}_N}{\overline{Y}_N}$ asymptotically follows the normal distribution
    \begin{equation}
    \frac{\overline{X}_N}{\overline{Y}_N}\sim \mathcal{N}\left(\frac{\mu_X}{\mu_Y},\frac{\hat{\sigma}^2_{\rm ratio}}{N}\right),~ ~ ~ ~\hat{\sigma}^2_{\rm ratio} = \frac{\overline{\sigma}^2_{X,N}}{(\overline{Y}_N)^2} +\frac{(\overline{X}_N \overline{\sigma}_{Y,N})^2}{(\overline{Y}_N)^4},
    \end{equation}
when $N$ is large.
Also, from the definition of $\xrightarrow{p}$, the convergence $\hat{\sigma}_{\rm ratio}^2\xrightarrow{p}\sigma_{\rm ratio}^2$ justifies the following approximation
\begin{equation}\label{eq:hat_ratio_approx}
    \hat{\sigma}^2_{\rm ratio}\approx \sigma^2_{\rm ratio} = \frac{\sigma_X^2}{\mu_Y^2} +\frac{\mu_X^2}{\mu_Y^4}\sigma_Y^2
\end{equation}
as $N$ increases.
    
From Eq.~\eqref{eq:dconv_ratio}, we can also show 
\begin{equation}
\mathrm{Pr}\left(\left|\frac{\sqrt{N}}{\hat{\sigma}_{\rm ratio}}\left(\frac{\overline{X}_N}{\overline{Y}_N}-\frac{\mu_X}{\mu_Y}\right)\right|\leq \epsilon\right) \to 1-\frac{2}{\sqrt{2\pi}}\int_{\epsilon}^{\infty} e^{-t^2/2} dt
\end{equation}
for any $\epsilon>0$.
Thus, taking $z_{\delta/2}>0$ satisfying 
\begin{equation}
    \frac{1}{\sqrt{2\pi}}\int_{z_{\delta/2}}^{\infty} e^{-t^2/2} dt = \delta/2,~~~z^2_{\delta/2}\leq 2\ln (1/\delta),
\end{equation}
for $\delta\in (0,1)$ and setting $\epsilon$ as $z_{\delta/2}$, we conclude that
\begin{equation}
\mathrm{Pr}\left(\left|\frac{\overline{X}_N}{\overline{Y}_N}-\frac{\mu_X}{\mu_Y}\right|\leq \sqrt{\frac{\hat{\sigma}^2_{\rm ratio}z^2_{\delta/2}}{{N}}}\right) \to 1-\delta.
\end{equation}
We remark that the factor $\hat{\sigma}_{\rm ratio}^2$ can be calculated from the obtained samples $\{X_i,Y_i\}$ as in Eq.~\eqref{eq:hatvarratio}.

\begin{proof}
    [Proof of Lemma~\ref{lem:asymnormal_ratio}]
    Let $\boldsymbol{Z}_i:=(X_i,Y_i)$ $(i=1,2,...,N)$ and let $\overline{\boldsymbol{Z}}_N$ be
    \begin{equation}
    \overline{\boldsymbol{Z}}_N := \frac{1}{N}\sum_{i=1}^N \boldsymbol{Z}_i = (\overline{X}_N,\overline{Y}_N).
    \end{equation}
    From the (multidimensional) central limit theorem, we have
    \begin{equation}
    \sqrt{N}(\overline{\boldsymbol{Z}}_N-\mathbb{E}[\boldsymbol{Z}_1]) \xrightarrow{d} \mathcal{N}_2(0,\Sigma),
    \end{equation}
    where $\mathcal{N}_2(0,\Sigma)$ denotes the two-dimensional normal distribution with zero-mean and covariance matrix $\Sigma$.
    The covariance matrix can be written as 
    \begin{equation}
    \Sigma:=
    \begin{pmatrix}
    \sigma^2_X&0\\
    0&\sigma^2_Y
    \end{pmatrix},
    \end{equation}
    due to the independence of $X$ and $Y$.
    Applying the Delta method for $g(x,y):=x/y$ and $\overline{\bm{Z}}_N$, we obtain
    \begin{equation}
    \sqrt{N}\left(\frac{\overline{X}_N}{\overline{Y}_N}-\frac{\mu_X}{\mu_Y}\right) \xrightarrow{d} \mathcal{N}(0,\sigma^2_{\rm ratio}),
    \end{equation}
    where we used the assumption that $\mu_Y\neq 0$, and $\sigma_{\rm ratio}^2$ is defined as
    \begin{equation}
    \sigma_{\rm ratio}^2 = \frac{\sigma_X^2}{\mu_Y^2} - 2\frac{\mu_X}{\mu_Y^3}{\rm Cov}(X,Y)+\frac{\mu_X^2}{\mu_Y^4}\sigma_Y^2 = \frac{\sigma_X^2}{\mu_Y^2} +\frac{\mu_X^2}{\mu_Y^4}\sigma_Y^2.
    \end{equation}
    The second equality holds due to the independence of $X$ and $Y$.
    To estimate $\sigma^2_{\rm ratio}$, we use
    $\hat{\sigma}_{\rm ratio}$ defined in Eq.~\eqref{eq:hatvarratio}, and we can show that $\hat{\sigma}_{\rm ratio}^2\xrightarrow{p} {\sigma}_{\rm ratio}^2$ and ${\sigma}_{\rm ratio}/\hat{\sigma}_{\rm ratio}\xrightarrow{p} 1$ hold from the law of large numbers and the continuous mapping theorem~\cite{shao2008mathematical}.
    Therefore, using the Slutsky's lemma, we conclude that
    \begin{equation}
    \frac{\sqrt{N}}{\hat{\sigma}_{\rm ratio}}\left(\frac{\overline{X}_N}{\overline{Y}_N}-\frac{\mu_X}{\mu_Y}\right)=\sqrt{N}\frac{{\sigma}_{\rm ratio}}{\hat{\sigma}_{\rm ratio}}\frac{\frac{\overline{X}_N}{\overline{Y}_N}-\frac{\mu_X}{\mu_Y}}{{\sigma}_{\rm ratio}}\xrightarrow{d} \mathcal{N}(0,1),
    \end{equation}
\end{proof}

\section{Proof of Theorem~\ref{thm:1}}\label{apdx:prf_thm1}

\begin{proof}[Proof of Theorem~\ref{thm:1}]
    We first show that for any operator $A$ on S, the resulting operator after $\Gamma$ can be written as 
    \begin{align}\label{apdx:prf_thm1_temp1}
        \Gamma[\Pi_{\rm A}\otimes \ket{+}\bra{+}_{\rm B}\otimes  A] &:=\sum_{k,k'=1}^{G} q_kq_{k'} L^{\rm (c)}_{kk'}\left(\Pi_{\rm A}\otimes \ket{+}\bra{+}_{\rm B}\otimes  A\right) (L^{\rm (c)}_{kk'})^\dagger\notag \\
        &=\Pi_{\rm A}\otimes \left[\frac{I_{\rm B}}{2}\otimes \left(\sum_k q_k K_k AK_k^\dagger\right)+\frac{X_{\rm B}}{2}\otimes \tilde{\Lambda}_{p,\mathcal{U}}(A)\right]+\sum_{\bm{i},\bm{j}\neq (\bm{0},\bm{0})}\ket{\bm{i}}\bra{\bm{j}}_{\rm A}\otimes \tilde{\sigma}_{{\rm BS},\bm{ij}},
    \end{align}
    where $\ket{\bm{i}}$ denotes the computational basis on A and 
    $\Pi_{\rm A}=\ket{\bm{0}}\bra{\bm{0}}$.
    $K_k$ is defined as 
    $K_k:=\sum_{i\in S_k}\frac{p_i}{q_k} U_i$, and $\tilde{\sigma}_{{\rm BS},\bm{ij}}$ is some operator depending on $\bm{i,j}$ and the input operator $A$.
    We calculate the action of ${L}^{(\rm c)}_{kk'}$ to $\ket{+}\bra{+}_{\rm B}\otimes \Pi_{\rm A}\otimes A$ as follows:
    \begin{align}
        &{L}^{(\rm c)}_{kk'}\left(\ket{+}\bra{+}\otimes \Pi_{\rm A}\otimes A\right) \left({L}^{(\rm c)}_{kk'}\right)^\dagger \notag\\
        &= {L}^{\rm (c)}_{kk'}\cdot \frac{1}{2}\left(|0\rangle\langle 0|\otimes \Pi\otimes A +|0\rangle\langle 1|\otimes \Pi\otimes A+|1\rangle\langle 0|\otimes \Pi\otimes A +|1\rangle\langle 1|\otimes \Pi\otimes A\right)\left({L}^{\rm (c)}_{kk'}\right)^\dagger\notag\\
        &= \frac{1}{2}\left(|0\rangle\langle 0|\otimes {L}_{k'}\cdot \Pi\otimes A\cdot {L}_{k'}^\dagger +|0\rangle\langle 1|\otimes {L}_{k'}\cdot \Pi\otimes A\cdot{L}_{k}^\dagger+|1\rangle\langle 0|\otimes {L}_{k}\cdot \Pi\otimes A\cdot{L}_{{k'}}^\dagger+|1\rangle\langle 1|\otimes {L}_{{k}}\cdot \Pi\otimes A\cdot{L}_{{k}}^\dagger \right).
    \end{align}
    From the definition of $L_k$, we have
    \begin{align}
        L_{k'}\cdot \Pi_{\rm A}\otimes A\cdot {L}_{k'}^\dagger
        &=\Pi_{\rm A}\otimes \bm{1} \left\{{L}_{k'}\cdot \Pi_{\rm A}\otimes A\cdot {L}_{k'}^\dagger \right\}\Pi_{\rm A}\otimes \bm{1}
        +\left(\bm{1}-\Pi_{\rm A}\right)\otimes \bm{1} \left\{{L}_{k'}\cdot \Pi_{\rm A}\otimes A\cdot {L}_{k'}^\dagger \right\}\Pi_{\rm A}\otimes \bm{1}\notag\\[6pt]
        &~~~+ \Pi_{\rm A}\otimes \bm{1} \left\{{L}_{k'}\cdot \Pi_{\rm A}\otimes A\cdot {L}_{k'}^\dagger\right\} \left(\bm{1}-\Pi_{\rm A}\right)\otimes \bm{1}+\left(\bm{1}-\Pi_{\rm A}\right)\otimes \bm{1} \left\{{L}_{k'}\cdot \Pi_{\rm A}\otimes A\cdot {L}_{k'}^\dagger\right\} \left(\bm{1}-\Pi_{\rm A}\right)\otimes \bm{1}\notag\\[6pt]
        &=\Pi_{\rm A}\otimes K_{k'} A K_{k'}^\dagger 
        +\sum_{\bm{i,j}\neq (\bm{0},\bm{0})} \ket{\bm{i}}\bra{\bm{i}}\otimes \bm{1}\cdot 
        {L}_{k'}\cdot \Pi_{\rm A}\otimes A\cdot {L}_{k'}^\dagger \cdot \ket{\bm{j}}\bra{\bm{j}}\otimes \bm{1}
    \end{align}
    Thus, the same calculation yields 
    \begin{align}
        &\sum_{k,k'=1}^{G} q_kq_{k'}{L}^{(c)}_{kk'}\left(\ket{+}\bra{+}\otimes \Pi_{\rm A}\otimes A\right) \left({L}^{(\rm c)}_{kk'}\right)^\dagger \notag\\
        &=\sum_{k,k'=1}^{G} q_kq_{k'}\frac{1}{2}\Pi_{\rm A}\otimes \left(|0\rangle\langle 0|\otimes  {K}_{k'} A {K}_{k'}^\dagger+|0\rangle\langle 1|\otimes  {K}_{k'} A {K}_{k}^\dagger+|1\rangle\langle 0|\otimes  {K}_{k} A {K}_{k'}^\dagger+|1\rangle\langle 1|\otimes  {K}_{k} A {K}_{k}^\dagger\right)\notag\\
        &~~~+ \sum_{\bm{i},\bm{j}\neq (\bm{0},\bm{0})}\ket{\bm{i}}\bra{\bm{j}}_{\rm A}\otimes \tilde{\sigma}_{{\rm BS},\bm{ij}}\notag\\
        &=\frac{1}{2}\Pi_{\rm A}\otimes \left(|0\rangle\langle 0|\otimes \left\{\sum_{{k'}=1}^{G} q_{k'} {K}_{{k'}} A {K}_{{k'}}^\dagger\right\}
        +|0\rangle\langle 1|\otimes  \tilde{\Lambda}_{p,\mathcal{U}}(A)
        +|1\rangle\langle 0|\otimes  \tilde{\Lambda}_{p,\mathcal{U}}(A)
        +|1\rangle\langle 1|\otimes \left\{\sum_{k=1}^{G} q_k {K}_{k} A {K}_{k}^\dagger\right\}\right)\notag\\
        &~~~+ \sum_{\bm{i},\bm{j}\neq (\bm{0},\bm{0})}\ket{\bm{i}}\bra{\bm{j}}_{\rm A}\otimes \tilde{\sigma}_{{\rm BS},\bm{ij}}.
    \end{align}
    In third line, we used the fact that $\sum_k q_k=\sum_{k=1}^{G}\sum_{i\in S_k}p_i = 1$ and $\sum_k q_k K_k=\sum_i p_i U_i$.
    Also, in the second line, we defined $\tilde{\sigma}_{{\rm BS},\bm{ij}}$ as
    \begin{align}
        \tilde{\sigma}_{{\rm BS},\bm{ij}}=\sum_{k,k'=1}^{G} \frac{q_kq_{k'}}{2}&[\left(I_{\rm B}\otimes \bra{\bm{i}}_{\rm A}\otimes \bm{1}\right) |0\rangle\langle 0|\otimes {L}_{k'}\cdot \Pi_{\rm A}\otimes A\cdot {L}_{k'}^\dagger
        \left(I_{\rm B}\otimes \ket{\bm{j}}_{\rm A}\otimes \bm{1}\right)\notag\\
        &~~~+\left(I_{\rm B}\otimes \bra{\bm{i}}_{\rm A}\otimes \bm{1}\right) |0\rangle\langle 1|\otimes {L}_{k'}\cdot \Pi_{\rm A}\otimes A\cdot {L}_{k}^\dagger
        \left(I_{\rm B}\otimes \ket{\bm{j}}_{\rm A}\otimes \bm{1}\right)\notag\\
        &~~~+\left(I_{\rm B}\otimes \bra{\bm{i}}_{\rm A}\otimes \bm{1}\right) |1\rangle\langle 0|\otimes {L}_{k}\cdot \Pi_{\rm A}\otimes A\cdot {L}_{k'}^\dagger
        \left(I_{\rm B}\otimes \ket{\bm{j}}_{\rm A}\otimes \bm{1}\right)\notag\\
        &~~~+\left(I_{\rm B}\otimes \bra{\bm{i}}_{\rm A}\otimes \bm{1}\right) |1\rangle\langle 1|\otimes {L}_{k}\cdot \Pi_{\rm A}\otimes A\cdot {L}_{k}^\dagger
        \left(I_{\rm B}\otimes \ket{\bm{j}}_{\rm A}\otimes \bm{1}\right)].
    \end{align}
    This completes the proof of Eq.~\eqref{apdx:prf_thm1_temp1}.

    Multiplying the observable $\Pi_{\rm A}\otimes X_{\rm B}\otimes\bm{1}$ to the operator Eq.~\eqref{apdx:prf_thm1_temp1} followed by the partial trace over the ancilla systems A and B, we obtain
    \begin{align}
        &{\rm tr}_{{\rm AB}}\left[\left(\Pi_{\rm A}\otimes X_{\rm B}\otimes \bm{1}\right) \cdot \Gamma[\Pi_{\rm A}\otimes \ket{+}\bra{+}_{\rm B}\otimes  A] \right]\notag\\
        &={\rm tr}_{{\rm AB}}\left[\left(\Pi_{\rm A}\otimes X_{\rm B}\otimes \bm{1}\right) \cdot\left(\Pi_{\rm A}\otimes \left[\frac{I_{\rm B}}{2}\otimes \left(\sum_k q_k K_k AK_k^\dagger\right)+\frac{X_{\rm B}}{2}\otimes \tilde{\Lambda}_{p,\mathcal{U}}(A)\right]\right)\right]\notag\\
        &=\tilde{\Lambda}_{p,\mathcal{U}}(A).
    \end{align}
    Note that the second equality holds because ${\rm tr}_{\rm AB}[\Pi_{\rm A}\sum_{\bm{i},\bm{j}\neq (\bm{0},\bm{0})}\ket{\bm{i}}\bra{\bm{j}}_{\rm A}\otimes \tilde{\sigma}_{{\rm BS},\bm{ij}}]$ is zero.
    Since we can take the operator $A$ arbitrarily, we establish Theorem~\ref{thm:1}.
\end{proof}

We here mention two special cases of Theorem~\ref{thm:1}.
First, we consider the following partition
$$
S_1:=[m]=\{1,2,...,m\},
$$
and this partition yields
\begin{align}
    \tilde{\Lambda}_{p,\mathcal{U}}(\rho)&={\rm tr}_{{\rm AB}}\left[\left( \Pi_{\rm A}\otimes X_{\rm B}\otimes \bm{1}\right) \cdot 
    q_1q_{1} L^{\rm (c)}_{11}\left(\Pi_{\rm A}\otimes \ket{+}\bra{+}_{\rm B}\otimes \rho\right) (L^{\rm (c)}_{11})^\dagger
    \right]\notag\\
    &={\rm tr}_{{\rm A}}\left[\left( \Pi_{\rm A}\otimes \bm{1}\right) \cdot 
    L_1\left(\Pi_{\rm A}\otimes \rho\right)L_1^\dagger \right],
\end{align}
which corresponds to the original LCU method in Eq.~\eqref{eq:physical_pic_lcumap}.
Here, we note that $q_1 = \sum_{i\in S_1=[m]} p_i=1$.
Next, in the case of 
$$S_k:=\{k\}~~\mbox{for~all}~~k=1,2,...,m,$$ 
we can take $L_k$ as $\bm{1}_{\rm A}\otimes U_k$ (which acts on only S) without loss of generality.
Thus, we have
\begin{align}
    \tilde{\Lambda}_{p,\mathcal{U}}(\rho)& = {\rm tr}_{{\rm AB}}\left[\left( \Pi_{\rm A}\otimes X_{\rm B}\otimes \bm{1}\right) \cdot \sum_{k,k'=1}^{m} p_kp_{k'}L^{(\rm c)}_{kk'}[\Pi_{\rm A}\otimes \ket{+}\bra{+}_{\rm B}\otimes \rho](L^{(\rm c)}_{kk'})^\dagger\right]\notag\\
    &=\sum_{k,k'=1}^{m} p_kp_{k'} {\rm tr}_{{\rm B}}\left[\left( X_{\rm B}\otimes \bm{1}\right) \cdot L^{(\rm c)}_{kk'}[\ket{+}\bra{+}_{\rm B}\otimes \rho](L^{(\rm c)}_{kk'})^\dagger\right].
\end{align}
If we take the expectation value on an observable $O$ on S, then this choice of the partition recovers the full virtual LCU method:
\begin{align}
    {\rm tr}[O\tilde{\Lambda}_{p,\mathcal{U}}(\rho)]&=
    \sum_{k,k'=1}^{m} p_kp_{k'} {\rm tr}\left[\left( X_{\rm B}\otimes O\right) \cdot L^{(\rm c)}_{kk'}[\ket{+}\bra{+}_{\rm B}\otimes \rho](L^{(\rm c)}_{kk'})^\dagger\right]=\sum_{k,k'=1}^{m} p_kp_{k'} {\rm Re}\left({\rm tr}\left[OU_k\rho U_{k'}^\dagger\right]\right).
\end{align}

\section{Proof of Theorem~\ref{thm:2}}\label{apdx:prf_thm2}
In the following, we provide the proof of a more general version of Theorem~\ref{thm:2}.
We recall that the estimation scheme provided in Section~\ref{sec:main_exp_val_est} gives the estimator ${g}_O:=(-1)^b\chi_0(z)o_j$ for ${\rm tr}[O\tilde{\Lambda}(\rho)]$, and its variance can be written as
\begin{equation}
    {\rm Var}[{g}_O]=R^{O}\left[\{S_k\};\rho\right]-{\rm tr}[O\tilde{\Lambda}(\rho)]^2,
\end{equation}
where $R^{O}$ is the second moment of $g_O$ defined as
\begin{equation}
    R^{O}\left[\{S_k\};\rho\right]:=\mathbb{E}[g^2_O]=\sum_{k=1}^{G} q_k{\rm tr}
    \left[ O^2 K_k \rho K_k^\dagger
    \right].
\end{equation}
The relation between $R$ (defined by Eq.~\eqref{eq:def_R_factor}) and $R^{O}$ is given as
\begin{equation}
    R^{O}\left[\{S_k\};\rho\right]\leq R\left[\{S_k\};\rho\right]\|O\|^2.
\end{equation}
In fact, the following theorem for $R^{O}$ holds; this is a generalized version of Theorem~\ref{thm:2} for $R$.

\begin{thm}\label{apdx:general_thm2}
    Let us consider the decomposition methods for $\tilde{\Lambda}(\bullet)=(\sum_{i=1}^m p_i U_i)\bullet (\sum_{i=1}^m p_i U_i)^\dagger$ in Theorem~\ref{thm:1} and any observable $O$.
    Suppose we have two partitions $\{S_k\}_{k=1}^G$ and $\{S'_{l}\}_{l=1}^{G'}$ of $[m]:=\{1,2,...,m\}$ satisfying the condition Eq.~\eqref{eq:condition4partition} in Theorem~\ref{thm:1}.
    If for any $S'_l$ ($l=1,2,...,G'$), there exists a single set $S_k$ that contains $S_l'$, then the corresponding reduction factors $R^{O}$ satisfy the following inequalities for any input quantum state $\rho$:
    \begin{equation}\label{apdx:inequalities4R}
        {\rm tr}\left[O^2\tilde{\Lambda}[\rho]\right] \leq R^{O}\left[\{S_k\}_{k=1}^G;\rho\right] \leq R^{O}\left[\{S'_{l}\}_{l=1}^{G'};\rho\right]\leq {\rm tr}\left[O^2\cdot \sum_{i=1}^m p_i U_i\rho U^\dagger_i\right].
    \end{equation}
    The upper and lower bound can be saturated by the two extreme cases: the complete fragmentation Eq.~\eqref{eq:complete_fragment} and no partition Eq.~\eqref{eq:no_partition}, respectively.
\end{thm}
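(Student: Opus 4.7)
The plan is to reduce the entire chain of inequalities in Eq.~\eqref{apdx:inequalities4R} to a single convexity property of the real-valued functional
\begin{equation*}
f(K) := {\rm tr}[O^2 K \rho K^\dagger] = {\rm tr}[(OK)\rho(OK)^\dagger],
\end{equation*}
viewed as a map on operators $K$ acting on ${\rm S}$. First I would establish that $f$ is convex in $K$. The sesquilinear form $(A,B) \mapsto {\rm tr}[(OA)\rho(OB)^\dagger]$ is positive semidefinite because its diagonal $f(K)\geq 0$ inherits positivity from $\rho\geq 0$, so the Cauchy--Schwarz inequality applies. Expanding $f(\lambda K_1 + (1-\lambda) K_2)$ and using $O=O^\dagger$, the cross term $2\lambda(1-\lambda)\,{\rm Re}\,{\rm tr}[(OK_1)\rho(OK_2)^\dagger]$ is bounded by $2\lambda(1-\lambda)\sqrt{f(K_1)f(K_2)}\leq \lambda(1-\lambda)[f(K_1)+f(K_2)]$ via AM--GM, yielding $f(\lambda K_1 + (1-\lambda) K_2) \leq \lambda f(K_1) + (1-\lambda) f(K_2)$. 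Induction extends this to arbitrary convex combinations.

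Given convexity of $f$, the central monotonicity $R^{O}[\{S_k\};\rho] \leq R^{O}[\{S'_l\};\rho]$ follows almost immediately from the refinement assumption. Each $S_k$ is the disjoint union of a unique collection $\{S'_l : l\in I_k\}$ of blocks from the finer partition, so the normalized LCU operator attached to $S_k$ admits the convex decomposition $K_k = \sum_{l\in I_k} (q'_l/q_k)\, K'_l$ with weights summing to one (using $q_k = \sum_{l\in I_k} q'_l$). Applying convexity of $f$ within each $S_k$ and summing against $q_k$ gives
\begin{equation*}
R^{O}[\{S_k\};\rho] = \sum_{k=1}^{G} q_k f(K_k) \leq \sum_{k=1}^{G}\sum_{l\in I_k} q'_l f(K'_l) = \sum_{l=1}^{G'} q'_l f(K'_l) = R^{O}[\{S'_l\};\rho],
\end{equation*}
which is exactly the middle inequality of Eq.~\eqref{apdx:inequalities4R}.

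The two outer bounds then follow by specializing this same monotonicity to the two extremal partitions. For the lower bound, the coarsest partition $\{[m]\}$ has $q_1=1$ and $K_1 = \sum_i p_i U_i$, so its reduction factor collapses to $f(K_{\rm LCU}) = {\rm tr}[O^2\tilde{\Lambda}(\rho)]$; since every admissible $\{S_k\}$ refines $\{[m]\}$, the already-established monotonicity gives ${\rm tr}[O^2\tilde{\Lambda}(\rho)] \leq R^{O}[\{S_k\};\rho]$. For the upper bound, the finest partition $\{\{i\}\}_{i=1}^{m}$ has $q_{\{i\}}=p_i$ and $K_{\{i\}}=U_i$, with reduction factor $\sum_i p_i\,{\rm tr}[O^2 U_i\rho U_i^\dagger]$; since it refines every admissible $\{S'_l\}$, monotonicity gives the upper bound. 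Saturation at both extremes is then automatic from these identifications.

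I do not expect a serious obstacle. The only subtle point is the convexity step when $\rho$ is merely positive semidefinite rather than strictly positive, but this is handled transparently by treating $(A,B) \mapsto {\rm tr}[(OA)\rho(OB)^\dagger]$ as a PSD Hermitian sesquilinear form, for which Cauchy--Schwarz holds without any invertibility assumption. Once convexity of $f$ is in hand, every inequality in Eq.~\eqref{apdx:inequalities4R} reduces to one Jensen-type application per block, and no further combinatorial analysis of the partitions is required.
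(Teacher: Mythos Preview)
Your argument is correct, and it takes a somewhat different route from the paper's own proof. The paper proceeds by reducing the middle inequality to a single \emph{binary} split: replacing $S_G$ by a partition $\{S_A,S_B\}$ and then computing the difference $R^{O}[\{S_k\}_{k=1}^{G-1}\cup\{S_A,S_B\};\rho]-R^{O}[\{S_k\}_{k=1}^{G};\rho]$ directly, obtaining the exact identity
\[
\frac{q_A q_B}{q_A+q_B}\,{\rm tr}\!\left[(OK_A-OK_B)^\dagger(OK_A-OK_B)\,\rho\right]\geq 0,
\]
after which repeated binary splits yield the general refinement inequality and the two extreme bounds. Your approach instead recognises $f(K)={\rm tr}[(OK)\rho(OK)^\dagger]$ as a squared seminorm (hence convex), and then applies Jensen's inequality once per block of the coarser partition, handling an arbitrary refinement in a single step rather than a chain of binary splits. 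Conceptually the two arguments coincide---the paper's explicit computation is precisely the two-point Jensen gap for $f$---but the paper's formulation yields the \emph{exact} increment formula Eq.~\eqref{eq:RO_factor_difference_exact}, which it later reuses to prove the quantitative bounds of Lemma~\ref{lem:R_factpr_ineq_prop}; your cleaner Cauchy--Schwarz/Jensen route establishes monotonicity more directly but does not by itself produce that increment formula.
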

\noindent
By taking the observable $O$ satisfying $O^2=\bm{1}$, Theorem~\ref{thm:2} can be obtained from Theorem~\ref{apdx:general_thm2} immediately.

\begin{proof}
    [Proof of Theorem~\ref{apdx:general_thm2}]
    To prove the middle inequality of Eq.~\eqref{apdx:inequalities4R}, it suffices to show that $R^{O}$ for $\{S_k\}_{k=1}^G$ is less than or equal to $R^{O}$ for $\{S_k\}_{k=1}^{G-1}\cup \{S_A,S_B\}$, where $\{S_A,S_B\}$ is a partition of $S_G$.
    From the definition of $q_k$ in Eq.~\eqref{eq:weight_qk_def_Kk} and the relation of $S_A$, $S_B$, and $S_G$, we have
    \begin{equation}
        q_{G}=\sum_{i\in S_G} p_i = q_{A}+q_{B},~~q_{A}=\sum_{i\in S_A} p_i,~~q_{B}=\sum_{i\in S_B} p_i.
    \end{equation}
    \purple{Also, we define $K_G,K_A$ and $K_B$ via}
    \begin{equation}
        q_G K_{G}=\sum_{i\in S_G} {p_i} U_i,~~q_A K_{A}=\sum_{i\in S_A}{p_i}U_i,~~q_B K_{B}=\sum_{i\in S_B}{p_i} U_i.
    \end{equation}
    Using these relations, we can show that
    \begin{align*}
        &q_G\left(R^{O}\left[\{S_k\}_{k=1}^{G-1}\cup \{S_A,S_B\};\rho\right] -R^{O}\left[\{S_k\}_{k=1}^{G};\rho\right] \right)\notag\\[6pt]
        &=
        q_Gq_A{\rm tr}
        \left[ K_A^\dagger O^2 K_A \rho \right]
        +q_Gq_B{\rm tr}\left[ K_B^\dagger O^2 K_B \rho \right]
        -q^2_G{\rm tr}
        \left[ K_G^\dagger O^2 K_G \rho \right]\notag\\[6pt]
        &=
        q_Gq_A{\rm tr}
        \left[ (O K_A)^\dagger O K_A \rho \right]
        +q_Gq_B{\rm tr}\left[ (O K_B)^\dagger O K_B \rho \right]
        -q^2_G{\rm tr}
        \left[ (O K_G)^\dagger O K_G \rho \right]\notag\\[6pt]
        &= \frac{q_G}{q_A}{\rm tr}\left[\left(q_A O K_A\right)^\dagger\left(q_A O K_A\right) \rho\right]+\frac{q_G}{q_B} {\rm tr}\left[\left(q_B O K_B\right)^\dagger\left(q_B O K_B\right)\rho\right]\notag\\[6pt]
        &~~~~~-{\rm tr}\left[ \left(q_A O K_A+q_B O K_B\right)^\dagger\left(q_A O K_A+q_B O K_B\right) \rho\right]\notag\\[6pt]
        &= \frac{q_B}{q_A}{\rm tr}\left[\left(q_A O K_A\right)^\dagger\left(q_A O K_A\right) \rho\right]+\frac{q_A}{q_B} {\rm tr}\left[\left(q_B O K_B\right)^\dagger\left(q_B O K_B\right)\rho\right]\notag\\[6pt]
        &~~~~~+{\rm tr}\left[\left(q_A O K_A\right)^\dagger\left(q_A O K_A\right) \rho\right]+{\rm tr}\left[\left(q_B O K_B\right)^\dagger\left(q_B O K_B\right)\rho\right]-{\rm tr}\left[ \left(q_A O K_A+q_B O K_B\right)^\dagger\left(q_A O K_A+q_B O K_B\right) \rho\right]\notag\\[6pt]
        &= {q_Aq_B}{\rm tr}\left[(O K_A)^\dagger(O K_A) \rho+ (O K_B)^\dagger(O K_B)\rho-(O K_A)^\dagger(O K_B)\rho- (O K_B)^\dagger(O K_A) \rho\right]\notag\\[6pt]
        &= q_Aq_B{\rm tr}\left[\left(O K_A-O K_B\right)^\dagger\left(O K_A-O K_B\right) \rho\right].
    \end{align*}
    Therefore, we conclude that
    \begin{equation}\label{eq:RO_factor_difference_exact}
        R^{O}\left[\{S_k\}_{k=1}^{G-1}\cup \{S_A,S_B\};\rho\right] -R^{O}\left[\{S_k\}_{k=1}^{G};\rho\right]=
        \frac{q_A q_B}{q_{A}+q_B}{\rm tr}\left[\left(O K_A-O K_B\right)^\dagger\left(O K_A-O K_B\right) \rho\right]\geq 0.
    \end{equation}
    From the above observation, it is obvious that the following inequalities hold for any partition $\{S_k\}_{k=1}^G$ of $[m]$:
    \begin{equation}
        R^{O}\left[\{[m]\};\rho\right]\leq R^{O}\left[\{S_k\}_{k=1}^G;\rho\right]\leq R^{O}\left[\{\{i\}\}_{i=1}^m;\rho\right].
    \end{equation}
    Thus, the direct calculation of $R^{O}$ for these two partitions completes the proof of the inequalities Eq.~\eqref{apdx:inequalities4R}.
\end{proof}

\section{Further properties of the reduction factor $R$}\label{sec:apdxD}

\begin{lemma}\label{lem:R_factpr_ineq_prop}
    Suppose that the same assumption of Theorem~\ref{thm:1} and $|S_{G}|\geq 2$ hold. Then, the followings hold:
    \begin{itemize}
        \item (Split a subset) If we split $S_{G}$ into two subsets $S_{A},S_{B}$ such that $\{S_A,S_B\}$ is a partition of $S_{G}$, then for any $\rho$
        \begin{equation}
            R^{O}\left[\{S_k\}_{k=1}^G; \rho\right] \leq R^{O}\left[\{S_k\}_{k=1}^{G-1}\cup \{S_A,S_B\}; \rho\right] \leq R^{O}\left[\{S_k\}_{k=1}^G; \rho\right]+2\|O^2\|H\left(q_A,q_B\right)
        \end{equation}
        holds. Here, $H(a,b):=2ab/(a+b)$ denotes the harmonic mean, and $q_k=\sum_{i\in S_k} p_i$ for $k=A,B$.
        \item (Fully fragment a subset) If we fully fragment $S_{G}$ into $\{\{i\}:i\in S_G\}$, then for any $\rho$
        \begin{equation}
            R^{O}\left[\{S_k\}_{k=1}^G; \rho\right] \leq R^{O}\left[\{S_k\}_{k=1}^{G-1}\cup\{\{i\}:i\in S_G\}; \rho\right] \leq R^{O}\left[\{S_k\}_{k=1}^G; \rho\right]+\|O^2\|q_G,
        \end{equation}
        where $q_G :=\sum_{i\in S_G} p_i$.
    \end{itemize}
\end{lemma}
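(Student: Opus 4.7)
The lower bounds in both parts are immediate from Theorem~\ref{apdx:general_thm2}, since the partition on the right-hand side of each inequality is a refinement of the one on the left (every new subset is contained in the parent $S_G$). So the plan focuses on the upper bounds.

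For Part A, my plan is to reuse the exact telescoping identity already obtained mid-proof of Theorem~\ref{apdx:general_thm2}, namely Eq.~\eqref{eq:RO_factor_difference_exact}:
\begin{equation*}
R^{O}\!\left[\{S_k\}_{k=1}^{G-1}\cup\{S_A,S_B\};\rho\right]-R^{O}\!\left[\{S_k\}_{k=1}^{G};\rho\right]=\frac{q_A q_B}{q_A+q_B}\,{\rm tr}\!\left[(OK_A-OK_B)^\dagger(OK_A-OK_B)\rho\right].
\end{equation*}
Since $K_A=\sum_{i\in S_A}(p_i/q_A)U_i$ and $K_B$ are convex combinations of unitaries, $\|K_A\|,\|K_B\|\le 1$, so $\|K_A-K_B\|\le 2$. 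Writing $(OK_A-OK_B)^\dagger(OK_A-OK_B)=(K_A-K_B)^\dagger O^\dagger O(K_A-K_B)$ and bounding it in operator norm gives $(K_A-K_B)^\dagger O^\dagger O(K_A-K_B)\le 4\|O^2\|\,\bm{1}$. Taking the trace against $\rho$ yields a bound of $4\|O^2\|$ on the trace factor, and hence an excess of at most $\frac{4 q_A q_B}{q_A+q_B}\|O^2\|=2\|O^2\|H(q_A,q_B)$, as desired.

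For Part B, rather than iterating Part A (which would accumulate terms that must be telescoped), I would compute the difference directly:
\begin{equation*}
R^{O}\!\left[\{S_k\}_{k=1}^{G-1}\cup\{\{i\}:i\in S_G\};\rho\right]-R^{O}\!\left[\{S_k\}_{k=1}^{G};\rho\right]=\sum_{i\in S_G}p_i\,{\rm tr}[U_i^\dagger O^2 U_i\,\rho]-q_G\,{\rm tr}[K_G^\dagger O^2 K_G\,\rho].
\end{equation*}
The subtracted term is non-negative (since $K_G^\dagger O^2 K_G\ge 0$ and $\rho\ge 0$), so it can simply be dropped to yield an upper bound. For each remaining summand, unitarity of $U_i$ gives $U_i^\dagger O^2 U_i\le\|O^2\|\,\bm{1}$, hence ${\rm tr}[U_i^\dagger O^2 U_i\,\rho]\le\|O^2\|$. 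Summing against the probabilities $p_i$ produces $q_G\|O^2\|$.

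I do not anticipate a real obstacle here; the whole argument is an elementary consequence of unitarity and the already-computed identity \eqref{eq:RO_factor_difference_exact}. The only step requiring care is the operator-norm bound in Part A, where one must resist writing $\|OK_A-OK_B\|^2\le(\|O\|\cdot 2)^2=4\|O\|^2$ and instead keep $O^\dagger O$ grouped to get the factor $\|O^2\|$ cleanly (these coincide when $O$ is Hermitian, but the grouped form is what matches the lemma statement exactly and covers general $O$).
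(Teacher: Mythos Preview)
Your proposal is correct and essentially identical to the paper's proof: both parts drop the nonnegative subtracted term in Part~B and invoke the exact identity Eq.~\eqref{eq:RO_factor_difference_exact} together with $\|K_A-K_B\|\le 2$ in Part~A. The only cosmetic difference is that the paper phrases the Part~A bound as ``the map $\rho\mapsto\bigl(\tfrac{K_A-K_B}{2}\bigr)\rho\bigl(\tfrac{K_A-K_B}{2}\bigr)^\dagger$ is CP and trace-non-increasing'' rather than via an operator-norm inequality, but this is the same estimate.
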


\begin{proof}
    From Eq.~\eqref{eq:RO_factor_difference_exact}, we have
    \begin{align}
        R^{O}\left[\{S_k\}_{k=1}^{G-1}\cup \{S_A,S_B\};\rho\right] -R^{O}\left[\{S_k\}_{k=1}^{G};\rho\right]&=
        2\frac{2q_A q_B}{q_{A}+q_B}{\rm tr}\left[ O^2 \left(\frac{K_A-K_B}{2}\right) \rho \left(\frac{K_A- K_B}{2}\right)^\dagger\right]\notag\\
        &\leq 2\|O^2\|H(q_A,q_B).
    \end{align}
    \purple{Here, the map $\rho\mapsto ((K_A-K_B)/2)\rho ((K_A-K_B)/2)^\dagger$ is completely positive and trace-non-increasing.}
    Also, from the definition, 
    \begin{align}
        R^{O}\left[\{S_k\}_{k=1}^{G-1}\cup\{\{i\}:i\in S_G\}; \rho\right] &= \sum_{k=1}^{G-1} q_k{\rm tr}
    \left[ O^2 K_k \rho K_k^\dagger
    \right] +\sum_{i\in S_G} p_i {\rm tr}
    \left[ O^2 U_i \rho U_i^\dagger
    \right]\notag\\
    &= R^{O}\left[\{S_k\}_{k=1}^{G};\rho\right]+\sum_{i\in S_G} p_i {\rm tr}
    \left[ O^2 U_i \rho U_i^\dagger
    \right]-q_G {\rm tr}
    \left[ O^2 K_G \rho K_G^\dagger
    \right]\notag\\
    &\leq R^{O}\left[\{S_k\}_{k=1}^{G};\rho\right]+\|O^2\|q_G,
    \end{align}
    \purple{where we used $q_G {\rm tr}\left[ O^2 K_G \rho K_G^\dagger\right]\geq 0$.}
\end{proof}
By substituting $O^2=\bm{1}$, we obtain Lemma~\ref{lem:R_factpr_ineq_prop_main}.
From the direct use of the Lemma~\ref{lem:R_factpr_ineq_prop}, we obtain the following useful result.
\begin{cor}
    Suppose that the same assumption of Theorem~\ref{thm:1}. Then, we have
    \begin{equation}
        R^{O}\left[\{S_k\}_{k=1}^{G-1}\cup \{S_A\}\cup \{\{i\}:i \in S_B\}; \rho\right]-R^{O}\left[\{S_k\}_{k=1}^{G-1}\cup \{S_G\}; \rho\right] \leq \|O^2\|\times \min\{2H(q_A,q_B)+q_B,{q_A+q_B}\}.
    \end{equation}
\end{cor}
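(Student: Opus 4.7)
The plan is to bound the difference $R^O[\{S_k\}_{k=1}^{G-1}\cup \{S_A\}\cup \{\{i\}:i\in S_B\};\rho] - R^O[\{S_k\}_{k=1}^{G-1}\cup \{S_G\};\rho]$ by two different refinement sequences, each producing one of the two terms in the minimum, and then combine.

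First, for the bound $2H(q_A,q_B)+q_B$, I would reach the target partition from the original partition in two steps. The first step splits $S_G$ into $\{S_A,S_B\}$, and Part A of Lemma~\ref{lem:R_factpr_ineq_prop} yields an increase of at most $2\|O^2\| H(q_A,q_B)$. The second step takes the intermediate partition $\{S_k\}_{k=1}^{G-1}\cup\{S_A,S_B\}$ and fully fragments its last block $S_B$ into singletons; applying Part B of the same lemma (with the role of $S_G$ there played by $S_B$) bounds this second increase by $\|O^2\| q_B$. Adding the two contributions gives the first term in the minimum.

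Second, for the bound $q_A+q_B$, I would instead fully fragment $S_G$ directly. Part B of Lemma~\ref{lem:R_factpr_ineq_prop} applied to the original partition gives
\begin{equation}
R^{O}\left[\{S_k\}_{k=1}^{G-1}\cup\{\{i\}:i\in S_G\};\rho\right]\leq R^{O}\left[\{S_k\}_{k=1}^{G}; \rho\right]+\|O^2\|(q_A+q_B),
\end{equation}
using $q_G=q_A+q_B$. The target partition $\{S_k\}_{k=1}^{G-1}\cup\{S_A\}\cup\{\{i\}:i\in S_B\}$ is a coarsening of the fully fragmented one (each $\{i\}$ with $i\in S_A$ is contained in the single block $S_A$, and each $\{i\}$ with $i\in S_B$ is itself a block), so Theorem~\ref{apdx:general_thm2} yields
\begin{equation}
R^{O}[\text{target};\rho]\leq R^{O}\left[\{S_k\}_{k=1}^{G-1}\cup\{\{i\}:i\in S_G\};\rho\right],
\end{equation}
and chaining the two inequalities gives the second term in the minimum. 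Taking the smaller of the two upper bounds yields the stated inequality.

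The argument is essentially bookkeeping on top of Lemma~\ref{lem:R_factpr_ineq_prop} and Theorem~\ref{apdx:general_thm2}, so I do not anticipate a conceptual obstacle. The only thing to watch is degenerate sizes: if $|S_B|\leq 1$ the fragmentation step in Path~1 is vacuous, and if $|S_G|=1$ there is nothing to split at all; in both cases the corresponding bound is immediate with the relevant term equal to zero, so the inequality remains valid.
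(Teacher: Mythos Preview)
Your proposal is correct and matches what the paper intends: the paper gives no detailed proof, stating only that the corollary follows ``from the direct use of the Lemma~\ref{lem:R_factpr_ineq_prop}'', and your two-path argument (split then fragment for the first bound; fully fragment then coarsen via the monotonicity of Theorem~\ref{apdx:general_thm2} for the second) is precisely this direct use. The handling of the degenerate cases is also fine.
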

\noindent
We note that $q_A > 2H(q_A,q_B)$ holds when $q_B$ is small.

\section{Sample complexity of the proposed LCU methods}\label{apdx:sample_complexity}

We consider the sample complexity to estimate the target expectation values: 
$$\mbox{(i)}~{\rm tr}[OK\rho K^\dagger]~~~\mbox{and}~~~\mbox{(ii)}~{\rm tr}[OK\rho K^\dagger]/{{\rm tr}[K\rho K^\dagger]}$$
for $K=\sum_{i=1}^m c_i U_i=\|c\|_1 \sum_{i=1}^m p_i U_i$, via the decomposition of Theorem~\ref{thm:1} given by 
\begin{equation}
    \tilde{\Lambda}_{p,\mathcal{U}}(\bullet):=\left(\sum_{i=1}^m p_i U_i\right)\bullet\left(\sum_{i=1}^m p_i U_i\right)^\dagger={\rm tr}_{{\rm AB}}\left[\left( \Pi_{\rm A}\otimes X_{\rm B}\otimes \bm{1}\right) \cdot \Gamma[\Pi_{\rm A}\otimes \ket{+}\bra{+}_{\rm B}\otimes (\bullet)]\right].
\end{equation}
Here, we recall that $\Gamma$ is a mixed unitary channel that is determined by a partition $\{S_k\}_{k=1}^G$ of the index set $[m]:=\{1,2,...,m\}$.
As mentioned in Section~\ref{sec:main_exp_val_est}, running the quantum circuit preparing $\Gamma[\Pi_{\rm A}\otimes \ket{+}\bra{+}_{\rm B}\otimes \rho]$ and measuring the final state by the POVM $\{\Pi_{\rm A}^{(z)}\otimes \ket{b}\bra{b}_{\rm B}\otimes \ket{j}\bra{j}\}$, we obtain a single sample from the random variable $g_{O}:=(-1)^b \chi_0(z)o_j$ satisfying $\mathbb{E}[g_{O}]={\rm tr}[O\tilde{\Lambda}_{p,\mathcal{U}}(\rho)]$.
Thus, we can define an unbiased estimator $G_{O}:=\|c\|^2_1 g_O$ for the target value ${\rm tr}[OK\rho K^\dagger]$, i.e., 
\begin{equation}
    \mathbb{E}[G_{O}]=\|c\|^2_1\mathbb{E}[g_{O}]=\|c\|_1^2{\rm tr}[O\tilde{\Lambda}_{p,\mathcal{U}}(\rho)]={\rm tr}[OK\rho K^\dagger].
\end{equation}
The variance of $G_{O}$ is upper bounded by $R$ as well as in Eq.~\eqref{eq:var_upb_general}:
\begin{equation}
    {\rm Var}[{G}_O]=\|c\|_1^4 {\rm Var}[{g}_O]\leq \|c\|_1^4 \|O\|^2 R\left[\{S_k\};\rho\right].
\end{equation}
Also, we note that $G_{O}$ is upper bounded by
\begin{equation}
    \left|G_{O}\right|=\|c\|_1^2\left|g_O\right|\leq \|c\|_1^2\times \max_j \left|o_j\right| = \|c\|_1^2\|O\|.
\end{equation}

\subsection{Expectation value estimation for unnormalized states}

First, we focus on the estimation of ${\rm tr}[OK\rho K^\dagger]$.
Measuring $N$ copies of the quantum state $\Gamma[\Pi_{\rm A}\otimes \ket{+}\bra{+}_{\rm B}\otimes \rho]$ by the observable $\Pi_{\rm A}\otimes X_{\rm B}\otimes O$ followed by the multiplication of $\|c\|_1^2$, we obtain $N$ independent samples $\{G_O^{(i)}=\|c\|_1^2 g_{O}^{(i)}\}_{i=1}^N$ from the random variable $G_O$.
Thus, using the Bernstein's inequality in Eq.~\eqref{eq:Bernstein}, we obtain
\begin{equation}
    {\rm Pr}\left(\left|\overline{G}_{O,N}-{\rm tr}[OK\rho K^\dagger]\right|\geq \epsilon\right)\leq 2\exp\left(-\frac{N\epsilon^2}{2\|c\|_1^4\|O\|^2 R\left[\{S_k\};\rho\right] +4\|c\|_1^2\|O\|\epsilon/3}\right)
\end{equation}
for any $\epsilon>0$.
Here, $\overline{G}_{O,N}$ denotes the sample mean of $\{G_O^{(i)}\}_{i=1}^N$.
To assure that we obtain an $\epsilon$-close estimate for ${\rm tr}[OK\rho K^\dagger]$ with at least $1-\delta$ probability:
\begin{equation}
    {\rm Pr}\left(\left|\overline{G}_{O,N}-{\rm tr}[OK\rho K^\dagger]\right|\leq \epsilon\right)\geq 1-\delta,
\end{equation} 
it suffices to set
\begin{align}
    N \geq 2\ln(\frac{2}{\delta}) \times \left\{{R\left[\{S_k\};\rho\right]\frac{\|c\|_1^4\|O\|^2}{\epsilon^2} +\frac{2}{3}\frac{\|c\|_1^2\|O\|}{\epsilon}}\right\}.
\end{align}
This completes the proof of the first part of Theorem~\ref{thm:2_samplecomplexity}.

When we have no knowledge of $R$, we will construct an (asymptotic) confidence interval by
simultaneously estimating the variance of $G_O$.
In this case, we can use the asymptotic theory provided in Appendix~\ref{apdx:est_ratio_samplenum_B}.
Specifically, we can show that 
\begin{equation}
    \sqrt{N}\frac{\overline{G}_{O,N}-{\rm tr}[OK\rho K^\dagger]}{\hat{\sigma}_{g_O}}\xrightarrow{d} \mathcal{N}(0,\|c\|_1^4).
\end{equation}
holds as well as the proof of Lemma~\ref{lem:asymnormal_ratio}.
Here, $\hat{\sigma}_{g_O}:=\sqrt{\hat{\sigma}_{g_O}^2}$ and $\hat{\sigma}_{g_O}^2$ is the sample variance defined as
\begin{equation}\label{eq:go_sample_var}
    \hat{\sigma}^2_{g_O} := {\frac{1}{N}\sum_{i=1}^N \left(g_O^{(i)}-\overline{g}_{O,N}\right)^2},
\end{equation}
where $\overline{g}_{O,N}=\overline{G}_{O,N}/\|c\|_1^2$.
Therefore, we obtain 
\begin{align}
    {\rm Pr}\left(\left|\overline{G}_{O,N}-{\rm tr}[OK\rho K^\dagger]\right|\leq \sqrt{\frac{\hat{\sigma}_{g_O}^2 \|c\|_1^4 z^2_{\delta/2}}{N}}\right)\to 1-\delta~~\mbox{as}~~N\to\infty.
\end{align}
This asymptotic $1-\delta$ confidence interval is tighter than the interval obtained from the $R$ because $\hat{\sigma}_{g_O}^2$ is upper bounded by the $R$ (more precisely, $R^O\left[\{S_k\};\rho\right]$) when $N$ is sufficiently large:
\begin{align}
    &\left[\overline{G}_{O,N}-\sqrt{\frac{\hat{\sigma}_{g_O}^2 \|c\|_1^4 z^2_{\delta/2}}{N}},~\overline{G}_{O,N}+\sqrt{\frac{\hat{\sigma}_{g_O}^2 \|c\|_1^4 z^2_{\delta/2}}{N}}\right]\notag\\
    &\subseteq \left[\overline{G}_{O,N}-\sqrt{\frac{\|c\|_1^4 {R}^O\left[\{S_k\};\rho\right]  z^2_{\delta/2}}{N}},~\overline{G}_{O,N}+\sqrt{\frac{ \|c\|_1^4 {R}^O\left[\{S_k\};\rho\right] z^2_{\delta/2}}{N}}\right].
\end{align}
Also, we remark that the $R^O$ can be estimated by $\hat{\sigma}_{g_O}^2+ \overline{g}_{O,N}^2$, which converges to $R^O$ in probability. That is, 
\begin{equation}
    R^O:=\mathbb{E}[{g}^2_O]\approx \hat{\sigma}_{g_O}^2+\overline{g}_{O,N}^2
\end{equation} 
holds when $N$ is large.

\subsection{Expectation value estimation for normalized states}

Next, we move to the estimation of ${\rm tr}[OK\rho K^\dagger]/{{\rm tr}[K\rho K^\dagger]}$.
Since we can generate $N$ independent copies of $g_O$ and $g_{\bm{1}}$, Lemma~\ref{lem:ratio_confidenceint} says that for a small $\epsilon>0$ and failure probability $\delta>0$,
\begin{equation}
    N\geq 32\ln \left(\frac{4}{\delta}\right)\times 
        \left\{\frac{R\left[\{S_k\};\rho\right]}{{\rm tr}[\tilde{\Lambda}_{p,\mathcal{U}}(\rho)]^2}\frac{\|O\|^2}{ \epsilon^2 }+\frac{1}{6{\rm tr}[\tilde{\Lambda}_{p,\mathcal{U}}(\rho)]}\frac{\max\{\|O\|^2,\|O\|\}}{\epsilon}\right\}
\end{equation}
suffices to obtain an estimate $\overline{g}_{O,N}/\overline{g}_{\bm{1},N}$ that is $\epsilon$-close to the target value with at least $1-\delta$ probability, that is,
\begin{equation}
    {\rm Pr}\left(\left|\frac{\overline{g}_{O,N}}{\overline{g}_{\bm{1},N}}-\frac{{\rm tr}[OK\rho K^\dagger]}{{{\rm tr}[K\rho K^\dagger]}}\right|\leq \epsilon\right)={\rm Pr}\left(\left|\frac{\overline{g}_{O,N}}{\overline{g}_{\bm{1},N}}-\frac{{\rm tr}[O\tilde{\Lambda}_{p,\mathcal{U}}(\rho)]}{{{\rm tr}[\tilde{\Lambda}_{p,\mathcal{U}}(\rho)]}}\right|\leq \epsilon\right)\geq 1-\delta.
\end{equation}
This completes the proof of Theorem~\ref{thm:2_samplecomplexity}.
Also, in the asymptotic case as $N\to\infty$, Lemma~\ref{lem:asymnormal_ratio} yields
\begin{equation}
    \frac{\sqrt{N}}{\hat{\sigma}_{\rm ratio}}\left(\frac{\overline{g}_{O,N}}{\overline{g}_{\bm{1},N}}-\frac{{\rm tr}[OK\rho K^\dagger]}{{{\rm tr}[K\rho K^\dagger]}}\right)\xrightarrow{d} \mathcal{N}(0,1),
\end{equation}
for $\hat{\sigma}_{\rm ratio}$ defined as
\begin{equation}
    \hat{\sigma}_{\rm ratio}:= \sqrt{\frac{\hat{\sigma}^2_{g_O}}{(\overline{g}_{\bm{1},N})^2} + \frac{(\overline{g}_{O,N} \hat{\sigma}_{g_{\bm{1}}})^2}{(\overline{g}_{\bm{1},N})^4}}.
\end{equation}
Here, $\hat{\sigma}^2_{g_{O}}$ and $\hat{\sigma}^2_{g_{\bm{1}}}$ are the sample variance of $\{g^{(i)}_{O}\}$ and $\{g^{(i)}_{\bm{1}}\}$ defined as in Eq.~\eqref{eq:go_sample_var}.
Consequently, we have the following asymptotic $1-\delta$ confidence interval as $N\to \infty$:
\begin{equation}
    {\rm Pr}\left(\left|\frac{\overline{g}_{O,N}}{\overline{g}_{\bm{1},N}}-\frac{{\rm tr}[OK\rho K^\dagger]}{{{\rm tr}[K\rho K^\dagger]}}\right|\leq \sqrt{\frac{\hat{\sigma}_{\rm ratio}^2 z^2_{\delta/2}}{N}}\right)\to 1-\delta,
\end{equation}
which can be calculated from the obtained samples.
Note that from Eq.~\eqref{eq:hat_ratio_approx}, the following approximation for $\hat{\sigma}_{\rm ratio}$ also holds when $N$ is large:
\begin{equation}
    \hat{\sigma}_{\rm ratio}^2\approx \frac{{\rm Var}[g_O]}{{\rm tr}[\tilde{\Lambda}_{p,\mathcal{U}}(\rho)]^2} +\frac{{\rm tr}[O\tilde{\Lambda}_{p,\mathcal{U}}(\rho)]^2}{{\rm tr}[\tilde{\Lambda}_{p,\mathcal{U}}(\rho)]^4}{\rm Var}[g_{\bm{1}}]\leq 
    \frac{2{R}\left[\{S_k\};\rho\right] \|O\|^2}{{\rm tr}[\tilde{\Lambda}_{p,\mathcal{U}}(\rho)]^2}.
\end{equation}

\section{Detailed derivations for the results for the linear system solver}\label{apdx:QLSS}
We first discuss the projection probability $\mathcal{P}$ for the conventional LCU algorithm for the normalized LCU operator
\begin{equation}
A_{\rm norm}=\frac{1}{\|c\|_1} \frac{i}{\sqrt{2\pi}} \sum_{j=0}^{J-1} \Delta_y \sum_{k=-K}^{K}\Delta_z z_k e^{-z_k^2/2} e^{-i M y_j z_k}.
\label{Eq: Anorm}
\end{equation}
For the conventional LCU method with the input state $\ket{b}$, the projection probability reads
\begin{equation}
\begin{aligned}
\mathcal{P} &= \bra{b} A_{\rm norm}^\dag A_{\rm norm} \ket{b} = \|c\|_1^{-2} (\bra{b}(M^{-1})^2 \ket{b} + \mathcal{O}(\varepsilon)) = \mathcal{O}(\mathrm{log}^{-1} (\kappa/\varepsilon)),
\end{aligned}
\end{equation}
where we used $\bra{b}(M^{-1})^2 \ket{b} = \mathcal{O}(\kappa^{2})$ and ignored the term with the scaling $\mathcal{O}(\varepsilon \kappa^{-2} \mathrm{log}^{-1}(\kappa/\varepsilon))$. Next, we proceed to our intermediate LCU implementation for solving the linear system of equations.
We define $q_j = \beta\Delta_y/(\|c\|_1\sqrt{2\pi})$ and $K_j= i\beta^{-1}\sum_{k=-K}^{K}\Delta_z z_k e^{-z_k^2/2} e^{-i M y_j z_k}$ for the intermediate LCU decomposition $A_{\rm norm}= \sum_j q_j K_j$, where $\beta$ is the normalization factor of $K_j$ given by
\begin{equation}
    \beta=\sum_{k=-K}^K \Delta_z |z_k|e^{-z_k^2/2}\approx 2.
\end{equation}
{\color{black}
We first show that
\begin{equation}\label{apdx_der_K_j}
\begin{aligned}
K_j&= i\beta^{-1}\sum_{k=-K}^{K}\Delta_z z_k e^{-z_k^2/2} e^{-i M y_j z_k}\approx \sqrt{2\pi}\beta^{-1}M y_j e^{-(My_j)^2/2}.
\end{aligned}
\end{equation}
For the eigenvalue $\lambda$ of $M$, we consider the following function $F_K(\lambda)$
\begin{equation}
    F_K(\lambda)=\sum_{k=-K}^K \Delta_z z_ke^{-z_k^2/2}e^{-i\lambda y_jz_k}.
\end{equation}
The difference between $F_{\infty}$ and $F_{K}$ is uniformly upper bounded as
\begin{align}
    |F_{\infty}(\lambda)-F_{K}(\lambda)|\leq \sum_{|k|>K} \Delta_z |z_k|e^{-z_k^2/2}=2\sum_{k=K+1}\Delta_z(\Delta_z k)e^{-(\Delta_z k)^2/2}\leq 2\int_{K\Delta_z}^{\infty} te^{-t^2/2}dt=2e^{-(K\Delta_z)^2/2}.
\end{align}
Due to $K\Delta_z=\Theta(\sqrt{\log(\kappa/\epsilon)})$, we can ignore the truncation error.
Then, the $F_{\infty}$ can be written as
\begin{align}
    F_{\infty}(\lambda)&=\sum_{k=-\infty}^\infty \Delta_z z_ke^{-z_k^2/2}e^{-i\lambda y_jz_k}\notag\\
    &=\sum_{k=-\infty}^{\infty}\int_{-\infty}^{\infty} \Delta_z (\Delta_z x')e^{-(\Delta_z x')^2/2}e^{-i\lambda y_j\Delta_z x'} e^{-2\pi ikx'}dx'\notag\\
    &= \sum_{k=-\infty}^{\infty}\int_{-\infty}^{\infty} ue^{-u^2/2}e^{-i\lambda y_ju} e^{-2\pi i(k/\Delta_z)u}du\notag\\
    &=\sum_{k=-\infty}^{\infty} \hat{a}(\lambda y_j+2\pi k/\Delta_z),
\end{align}
where the second equality follows from the Poisson sum formula, and in the final equality, we defined
\begin{equation}
    \hat{a}(w):=\int dx~ xe^{-x^2/2}e^{-iwx}=-i\sqrt{2\pi}w e^{-w^2/2}.
\end{equation}
We focus on the argument of $\hat{a}(w)$, that is, $w=\lambda y_j+2\pi k/\Delta_z$.
From the assumption, we have
\begin{equation}
    \lambda y_j+2\pi k/\Delta_z = \Theta\left(k\kappa\sqrt{\log(\kappa/\epsilon)}\right),
\end{equation}
and thus, $\hat{a}(\lambda y_j+2\pi k/\Delta_z)$ scales as (ignoring constants)
\begin{equation}
    k\kappa\sqrt{\log(\kappa/\epsilon)}e^{-k^2\kappa^2{\log(\kappa/\epsilon)}}= k\kappa\sqrt{\log(\kappa/\epsilon)}\left(\frac{\varepsilon}{\kappa}\right)^{k^2\kappa^2}.
\end{equation}
Thus, $F_{\infty}(\lambda)$ is well approximated as
\begin{equation}
    F_{\infty}(\lambda)\approx \hat{a}(\lambda y_j)=-i\sqrt{2\pi} \lambda y_j e^{-(\lambda y_j)^2/2},
\end{equation}
and we obtain Eq.~\eqref{apdx_der_K_j} due to the approximation $F_{K}\approx F_{\infty}$ for any $y_j$.}
\purple{Using this result}, the reduction factor reads
\begin{equation}\label{apdx_der_R_int}
\begin{aligned}
R_{\rm int}&= \sum_j q_j \bra{b}K_j^\dagger K_j \ket{b}\approx \sum_j q_j (\sqrt{2\pi}\beta^{-1})^2\bra{b}
(M y_j)^2 e^{-(My_j)^2}\ket{b}\\
&= \frac{\sqrt{2\pi}}{\beta}\frac{1}{\|c\|_1}\left[ \int_0^{\infty}dy~\bra{b} (My)^2 e^{-(My)^2} \ket{b}+\tilde{\mathcal{O}}({\varepsilon})\right]\\
&\approx \frac{\pi\sqrt{2}}{4\beta \|c\|_1} \bra{b} |M|^{-1} \ket{b},
\end{aligned}
\end{equation}
where we used $\int_0^{\infty} dt~(xt)^2e^{-(xt)^2}= \sqrt{\pi}/(4 |x|)$. Here, $|A| = \sqrt{A^\dag A}$ for an operator $A$. 
{\color{black}We can derive the second equality of Eq.~\eqref{apdx_der_R_int} as follows.
Let $G(y)=y^2M^2e^{-y^2M^2}$.
\begin{align}
    &\left\|\sum_{j=0}^{J-1}\Delta_y G(y_j)-\int_{0}^{y_J}dy~ G(y)\right\|=\left\|\sum_{j=0}^{J-1}\Delta_y G(y_j)-\sum_{j=0}^{J-1}\int_{y_j}^{y_j+\Delta_y}dy~G(y)\right\|\notag\\
    &\leq\sum_{j=0}^{J-1} \left\|\int_{0}^{\Delta_y}du~(\Delta_y-u) G'(y_j+u)\right\|\leq \Delta_y \int_{0}^{y_J} du~\|G'(u)\| = \tilde{\mathcal{O}}(\varepsilon).
\end{align}
The truncation error of the integral is given by $\tilde{\mathcal{O}}(\varepsilon)$.}
Because of $\bra{b} |M|^{-1} \ket{b}=\mathcal{O}(\kappa)$, we finally obtain
\begin{equation}
R_{\rm int}= \mathcal{O}(\mathrm{log}^{-1/2} (\kappa/\varepsilon)).
\end{equation}

{\color{black}
Now, we mention the gate complexity. 
Due to Lemma~8 in Ref.~\cite{childs2017quantum}, for implementing the SELECT operation $U_{\rm sel}=\sum_{k=0}^K \ket{k}\bra{k} \otimes V^k$ for a unitary operator $V$, the gate complexity of $U_{\rm sel}$ is $\mathcal{O}(KC)$ with $C$ being the gate complexity of $V$. 
For our case, for simulating Eq.~\eqref{Eq: Anorm} with our intermediate LCU method, we uniformly generate $y_j$ and use the select unitary with $V=e^{-itM}$ for the time $|t|=y_j\Delta_z$ to apply $K_j$. 
Since $\langle y_j \rangle:=\sum_j q_j y_j = \Theta(\kappa \sqrt{\mathrm{log}(\kappa/\varepsilon)})$, the maximal time of the Hamiltonian simulation for our protocol reads $\langle y_j \rangle \Delta_z K =\Theta(\kappa \mathrm{log}(\kappa/\varepsilon))$ in average.
Note that the resulting gate complexity is almost the same as the conventional and random LCU implementations.
}

\end{document}